\documentclass[11pt]{article}
\usepackage[utf8]{inputenc}
\usepackage[margin=1in]{geometry}
\usepackage{lmodern}

\usepackage{microtype}
\usepackage{graphicx}
\usepackage{subfigure}
\usepackage{booktabs}
\usepackage[round]{natbib}

\usepackage{amsmath,amsthm,amsfonts,amssymb,mathdots,array,mathrsfs,bm,bbm,stmaryrd,xcolor}

\usepackage{breakcites}

\usepackage[T1]{fontenc}
\usepackage{enumerate}
\usepackage{inputenc}

\usepackage{graphicx}
\usepackage{wrapfig}

\usepackage{algpseudocode}
\usepackage{algorithm}

\usepackage{color}

\usepackage[utf8]{inputenc} 
\usepackage[T1]{fontenc}    

\usepackage[colorlinks,linkcolor=red,filecolor=blue,citecolor=blue,urlcolor=blue]{hyperref}

\usepackage{url}            
\usepackage{amsfonts}       
\usepackage{nicefrac}       
\usepackage{microtype}      
\usepackage{xcolor}
\usepackage{enumitem}
\usepackage{bbm}

\usepackage{tikz}
\usetikzlibrary{mindmap}

\usepackage{tikzpagenodes}

\usepackage[toc,page]{appendix}

\usepackage{math_notation}

\newtheorem*{theorem*}{Theorem}

\newtheorem{theorem}{Theorem}
\newtheorem{lemma}{Lemma}
\newtheorem{corollary}{Corollary}

\newtheorem{remark}{Remark}

\newenvironment{proofoutline}
 {\proof[Proof outline]}
 {\endproof}

\def \pset {\calP}

\def \hardthres {\mathsf{thres}}
\def \thres {\mathsf{thres}}

\def \dh {\mathsf{d_H}}
\def \dH {\mathsf{d_H}}

\def \bbetatrue {\bbeta^{\natural}}
\def \betatrue {\beta^{\natural}}

\def \belambda {\lambda_{\bbeta}}
\def \xilambda {\lambda_{\bXi}}

\def \bPitrue {\bPi^{\natural}}

\def \entH {\mathsf{H}}
\def \enth {\mathsf{h}}
\def \entI {\textup{I}}
\def \dh {\mathsf{d_H}}

\definecolor{celadon}{rgb}{0.67, 0.88, 0.69}
\definecolor{chromeyellow}{rgb}{1.0, 0.65, 0.0}
\definecolor{columbiablue}{rgb}{0.61, 0.87, 1.0}


%

\allowdisplaybreaks

\begin{document}

\title{\bf Sparse Recovery with Shuffled Labels:\\ Statistical Limits and Practical Estimators\vspace{0.3in}}

\author{
  \textbf{Hang Zhang, \ \  Ping Li} \vspace{0.1in}\\
  Cognitive Computing Lab\\
  Baidu Research\\
  10900 NE 8th ST. Bellevue, WA 98004, USA\\
  \texttt{ \{zhanghanghitomi,\ pinli98\}@gmail.com}
}
\date{\vspace{0.1in}}
\maketitle

\begin{abstract}\vspace{0.2in}

\noindent\footnote{Preliminary work appeared in Proceedings of the IEEE International Symposium on Information Theory (ISIT'21).}This paper considers the
sparse recovery with shuffled labels, i.e.,
$\by = \bPitrue \bX \bbetatrue + \bw$,
where $\by \in \RR^n$, $\bPi\in \RR^{n\times n}$, $\bX\in \RR^{n\times p}$,
$\bbetatrue\in \RR^p$, $\bw \in \RR^n$ denote
the sensing result, the unknown permutation matrix, the
design matrix, the sparse signal, and
the additive noise, respectively.
Our goal is to reconstruct both the permutation matrix
$\bPitrue$
and the sparse signal $\bbetatrue$.
We investigate this problem from both
the statistical and computational aspects.
From the statistical aspect,
we first establish the minimax lower bounds on the sample number $n$
and the \emph{signal-to-noise ratio} ($\snr$) for
the correct recovery of
permutation matrix $\bPitrue$ and the support set
$\supp(\bbetatrue)$,
to be more specific,
$n \gtrsim k\log p$ and $\log\snr \gtrsim \log n + \frac{k\log p}{n}$.
Then, we confirm the tightness of these minimax lower bounds by presenting
an exhaustive-search based estimator whose performance
matches the lower bounds thereof up to some multiplicative constants.
From the computational aspect,
we impose a parsimonious assumption on the number of permuted rows
and propose a computationally-efficient estimator accordingly. Moreover, we show that
our proposed estimator can obtain the ground-truth $(\bPitrue, \supp(\bbetatrue))$ under mild conditions.
Furthermore, we provide numerical experiments
to corroborate
our claims.

\vspace{0.2in}
\noindent In this study, we focus on the ``single measurement'' problem, i.e., $\by \in \RR^n$ and
$\bbetatrue\in \RR^p$, and require $\snr$ being at least of the
order $\Omega(n^c \cdot (k\cdot \log p/n)^{c})$. A recent work~\citep{zhang2023one} studies the permuted sparse recovery problem with multiple measurements,  i.e., $\by \in \RR^{n\times m}$ and
$\bbetatrue\in \RR^{p\times m}$, with $m>1$. They exploit the strategy of
``borrowing strength`` across different sets of measurements
and reduce the $\snr$ requirement for the permutation recovery. They propose a new estimator and develop a series of new techniques (including a novel modification of the ``leave-one-out'' method), which however do not apply to our problem  (with $m=1$) in this paper.

\end{abstract}

\newpage

\section{Introduction}
This paper considers the linear sensing relation with
shuffled labels, which can be formulated as
\[
\by = \bPitrue\bX\bbetatrue + \bw.
\]
Here $\by\in \RR^n$ denotes the sensing result,
$\bPitrue \in \RR^{n\times n}$ is the unknown permutation matrix,
$\bX \in \RR^{n\times p}$ is the design (sensing) matrix, $\bbetatrue \in \RR^p$
represents the signals of interests,
and $\bw \in \RR^n$ denotes the additive noise.
In real life, we have witnessed a broad spectrum of its applications,
which spans from communication to data privacy
to
computer vision to curve registration to natural language processing~\citep{pananjady2016linear, slawski2017linear, unnikrishnan2015unlabeled, zhang2019permutation}.
Two prominent examples include \emph{linkage record}, which
is to merge two datasets
containing different pieces of information
about the same objects
into one comprehensive
dataset; and \emph{data de-anonymization}, which infers the hidden labels and
can be viewed as the inverse problem
of data anonymization used for privacy protection.
Apart from the above applications,
other applications include
correspondence estimation between
pose and estimation in
graphics to time-domain sampling in
the presence of clock jitter to multi-target tracking in radar.
For a detailed discussion, we
refer the interested readers to~\citet{pananjady2016linear, slawski2017linear, unnikrishnan2015unlabeled, zhang2019permutation, slawski2020two}.

In the majority of  existing works~\citep{pananjady2016linear, slawski2017linear, unnikrishnan2015unlabeled, zhang2019permutation, slawski2020two,   hsu2017linear, zhang2022benefits, zhang2020optimal},
their focus is usually  on the regime of sufficient samples, in other words, the sample number $n$ is larger than the signal length $p$ (i.e., $n\geq p$). For a general case where the signal $\betatrue \in \RR^p$
an arbitrary vector residing within the linear space $\RR^p$, the requirement $n\geq $ seems to be inevitable, even with the perfect correspondence information, namely, the permutation matrix $\bPitrue$. Meanwhile, the sample number can be reduced given some prior
knowledge of the signal $\bbetatrue$, e.g., we know that $\bbetatrue$ lies within a small subspace, or equivalently,
$\bbetatrue$ is with a low inherent dimension.
One typical example is the literature on the ``compressed sensing'' (or \emph{sparse recovery})~\citep{donoho2006compressed, candes2006stable, candes2006robust}.
Assuming the signal
$\bbetatrue$ is $k$-sparse (with $k$-nonzero entries), it is proved that the required sample number $n$ can be reduced to
$\Omega(k\log p)$, which is far less than $p$ provided $k \ll p$.
For other low-dimensional
structures,
similar results have been obtained under the names
\emph{M-estimator with regularizer}s~\citep{negahban2012unified},
\emph{atomic norm}s~\citep{chandrasekaran2012convex}, \emph{random convex optimization}s~\citep{amelunxen2014living}, etc.

Inspired by these works, we
investigate the shuffled linear sensing problem
with insufficient samples, namely, $n\ll p$, by placing
a parsimonious assumption on the signal $\bbetatrue$.
Assuming $\bbetatrue$ to be $k$-sparse, we
show that the correspondence information can be restored
with $n = \Omega(k\log p)$.
Notice that this order
is the same as the classical works on compressed sensing/sparse recovery~\citep{donoho2006compressed, candes2006stable, candes2006robust} and is far less than the previously required sample number such that $n=\Omega(p)$~\citep{pananjady2016linear, slawski2017linear, hsu2017linear, unnikrishnan2015unlabeled}.

\vspace{0.1in}

\noindent\textbf{Related work.}
The research on unlabeled linear regression has a long history
and can be at least traced back to 70s under the name
``broken sample problems''
\citep{degroot1976matching, goel1975re, bai2005broken, degroot1980estimation}. In recent years, we have witnessed
a renaissance of the study in this area.
\citep{unnikrishnan2015unlabeled} investigate
 the permutation recovery under the noiseless setting,
 i.e., $\bw = \bZero$;
 and establish the necessary condition
 $n\geq 2p$ for the general signal recovery.
 In~\citet{pananjady2016linear},
the noisy observation is considered and a thorough analysis of the \emph{maximum likelihood} (ML) estimator is presented.
From the statistical perspective,
it is shown that the ML estimator can reach the
statistical optimality with respect to the \emph{signal-to-noise ratio} ($\snr \defequal \|\bbetatrue\|^2_2/\sigma^2$)
requirement for correct permutation recovery,
to be more specific, $\snr \asymp \Omega(n^c)$.
From the computational perspective,
\citet{pananjady2016linear} show the ML estimator is NP-hard except for the special case
$p =1$.
This computational issue is later tackled by~\citet{hsu2017linear},
where an approximation algorithm for the permutation recovery
is proposed with polynomial complexity.
In~\citet{slawski2017linear}, the authors take an alternative path
and impose parsimonious constraints on the number of permuted rows.
By viewing $(\bI-\bPitrue)\bX\bbetatrue$ as sparse outliers,~\citep{slawski2017linear} reconstruct the correspondence information from the viewpoint of de-noising. In this work, we adopt a similar
viewpoint in designing the estimator for practical usage. However, some
modifications are required to handle the insufficient sample problem (i.e., $n \ll p$). A detailed explanation of our proposed estimator
can be found in Section~\ref{sec:prac_estim}.

In~\citet{emiya2014compressed}, they
consider a similar setting as ours, namely, a sparse signal $\bbetatrue$.
A branch-and-bound scheme is proposed to reconstruct the permutation matrix $\bPitrue$. Potential drawbacks of this work include their high computational cost and the missing performance guarantee.
\citep{zhang2021sparse} is the
conference version of this work
and proposes different practical estimators.
While the estimators in~\citet{zhang2021sparse} are rooted in the literature
about the sign consistency in Lasso, Dantzig estimator, etc~\citep{zhao2006model, wainwright2009sharp, meinshausen2009lasso, donoho2005stable, rosenbaum2010sparse,  zhang2017quadratic, lounici2008sup, zhang2018sparse},
the estimator in this work is more related
to the study of robustness~\citep{nguyen2013robust, dalalyan2019outlier}.
Despite the above differences and their
distinct looks, we should mention that
they actually share the same spirit, i.e., the viewpoint of de-noising:
\citet{zhang2021sparse} performs de-noising
in an implicit way while this work takes an explicit approach. Together with the change brings a noticeable performance improvement, which
is discussed in Remark~\ref{remark:performance_improve}.

Apart from the above-mentioned articles,
there are other works that are worth mentioning,
e.g.,~\citep{tsakiris2019homomorphic, haghighatshoar2018signal, emiya2014compressed, zhang2019permutation,  slawski2020two, fang2022regression, slawski2022permuted}. Since their connection to
our work are rather loose, we only mention their name without giving
detailed discussion.

\vspace{0.1in}

\par
\noindent \textbf{Contributions.}\
Our contributions are summarized as follows:
\begin{itemize}
\item
We establish the statistical lower bounds
for the correct recovery of $(\bPitrue, \supp(\bbetatrue))$.
Different from the previous works, our work
focuses on the situation with insufficient samples,
 i.e., $n \ll p$. Exploiting the signal sparsity,
 we manage to reduce the sample number $n$ from
 $\Omega(p)$ to $\Omega(k\log p)$, where
 $k$ denotes the sparsity number of the signal $\bbetatrue$.
As compensation, our required $\snr$ inflates
from $\log\snr \gsim \log n$ to
$\log \snr \gsim \log n + \frac{k\log p}{n}$, which
turns out to be marginal since $n\gsim k\log p$.
Moreover, we show an
exhaustive-search-based estimator can
match the above lower bounds up to some multiplicative constants and thus conclude the tightness of the minimax lower bounds thereof.
\item
We propose a computational-friendly estimator for the recovery
of $(\bPitrue, \supp(\bbetatrue))$.
By imposing a parsimonious assumption
on the number of permuted rows, we view $(\bI - \bPitrue)\bX\bbetatrue$
as a sparse outlier and obtain a rough estimation $\wt{\bbeta}$ of the
signal $\bbetatrue$. Then, we reconstruct the missing correspondence based on the estimated value $\wt{\bbeta}$.
We prove that the ground-truth
permutation matrix $\bPitrue$ can be obtained under mild conditions.
More importantly, we show these conditions almost match the
minimax lower bounds thereof.
Once the permutation matrix $\bPitrue$ is given, we restore our
problem to the classical setting of sparse recovery/compressed sensing and detect the support set of $\bbetatrue$ accordingly.
\end{itemize}

\newpage

\noindent \textbf{Notations.}\
We denote $c, c_0, c^{'}$ as some positive real constants.
For two arbitrary real numbers, we
denote $a\vcup b$ as
the maximum of $a$ and $b$
while $a\vcap b$ as the minimum.
We denote $a\lsim b$ if there exists
a positive constant $c_0 > 0$ such that
$a\leq c_0 b$. Similarly, we define
$a\gsim b$ provided the inequality $a\geq c_0 b$ holds for
some positive constants $c_0$.
We write $a\asymp b$ when
$a\lsim b$ and $a\gsim b$ hold simultaneously.

We denote the set of possible values for the
permutation matrix $\bPitrue$ as
$\pset_n$.
For an arbitrary permutation matrix $\bPi$, we associate it
with the operator $\pi(\cdot)$ which transforms index $i$
to $\pi(i)$.  We  define the Hamming distance
$\dH(\cdot, \cdot)$ between two permutation matrices
$\bPi_1$ and $\bPi_2$ as
$\dH\bracket{\bPi_1, \bPi_2} \defequal \sum_{i=1}^n \Ind\bracket{\pi_1(i) \neq \pi_2(i)}$.
The support
set $\supp(\bbetatrue)$ is defined
as the set of indices of non-zero entries (i.e.,
$\supp(\bbetatrue)\defequal \{i: \beta^{\natural}_i\neq 0\}$).
The
\emph{signal-to-noise-ratio} ($\snr$) is defined as
${\|\bbetatrue\|_2^2}/{\sigma^2}$.


\section{Problem Statement}\label{sec:sys_mdl}
We start by giving a formal restatement of our problem.
Consider the sensing relation
\begin{align}
\label{eq:sys_mdl}
  \by = \bPitrue\bX\bbetatrue + \bw,
\end{align}
where $\by \in \RR^n$ is the sensing result,
$\bPitrue\in \RR^{n\times n}$ denotes the
permutation matrix, i.e., $\sum_{i} \bPitrue_{ij} =
\sum_{j}\bPitrue_{ij} = 1$, $\bPitrue_{ij}\in \set{0, 1}$,
$\bX \in \RR^{n\times p}$ is the sensing matrix, with
its entries to be i.i.d. standard normal random variable, namely,
$\bX_{ij}\in \normdist(0, 1)$,
$\bbetatrue\in \RR^{p}$ represents the $k$-sparse signals, i.e.,
$\|\bbetatrue\|_0 \leq k$,
and $\bw \in \RR^n$ denotes the Gaussian noise following
$\normdist(\bZero, \sigma^2 \bI)$.

Compared with the previous work as in~\citet{zhang2019permutation, slawski2020two, pananjady2016linear} that requires
$n\geq 2p$, our work
focuses on the regime where $n \leq p$. By exploiting the
sparsity of the signals $\bbetatrue$, we will show
that $n = \Omega\bracket{k\log p} \ll p$ samples will be sufficient
for the recovery of the permutation matrix.
A graphical illustration of this paper's organization is
presented in Figure~\ref{fig:structure_diagram}.\\

\begin{figure*}[!h]
\centering
\begin{tikzpicture}

\node[text width=3cm] at (-6.2, 0.22) {\bfseries \large {\bfseries \textcolor{red}{\textbf{Inachievability}}} \\ \textcolor{red}{\textbf{results}}};

\node[text width=3cm] at (-6.2,-1.25) {\bfseries \large {\bfseries \textcolor{red}{\textbf{Achievability}}} \\ \textcolor{red}{\textbf{results}}};

\node (upnode1) [draw=black, dashed, thick, rounded corners, minimum width= 15cm, minimum height = 1.4cm, align=center] at (-0.35, 0.05){ };


\node (upnode2) [draw=black, dashed, thick, rounded corners, minimum width= 15cm, minimum height = 2.5cm, align=center] at (-0.35, -1.9){ };


\node (mnode1) [draw=black, rounded corners, fill=chromeyellow, minimum width=0.5cm, align=center] at (-3, 0) {\large Exact Recovery \\ Theorem~\ref{thm:statis_lb}};

\node (mnode2) [draw=black, rounded corners, fill=chromeyellow, minimum width=0.5cm, align=center] at (3, 0) {\large Approximate Recovery \\ Theorem~\ref{thm:statis_approx_lb}};

\node (mnode3) [draw=black, rounded corners, fill=chromeyellow, minimum width=0.5cm, align=center] at (-3, -1) {\large ML Estimator};

\node (mnode4) [draw=black, rounded corners, fill=chromeyellow, minimum width=0.5cm, align=center] at (3, -1) {\large Practical Estimator};

\node (basenode1) [draw=black, rounded corners, fill=columbiablue, minimum width=0.5cm, align=center] at (-6, -2.5) {Noiseless Case \\ Theorem~\ref{thm:noiseless}};

\node (basenode2) [draw=black, rounded corners, fill=columbiablue, minimum width=0.5cm, align=center] at (-3, -2.5) {Noisy Case \\ Theorem~\ref{thm:noisy_ml}};


\node (basenode4) [draw=black, rounded corners, fill=columbiablue, minimum width=0.5cm, align=center] at (1.25, -2.5) {\small Permutation Recovery \\ {Theorem~\ref{thm:robust_lasso_permutation}\footnotesize}};

\node (basenode5) [draw=black, rounded corners, fill=columbiablue, minimum width=0.5cm, align=center] at (5.25, -2.5) {\small Support Set Detection \\ {Corollary~\ref{corol:robust_lasso_support_set} \footnotesize}};


\draw [->, line width=0.5mm] (mnode3) -- (basenode1);
\draw [->, line width=0.5mm] (mnode3) -- (basenode2);

\draw [->, line width=0.5mm] (mnode4) -- (basenode4);
\draw [->, line width=0.5mm] (mnode4) -- (basenode5);
\end{tikzpicture}
\caption{A diagram illustration for the roadmap of the main results to be presented in this paper.\\ \textbf{Upper panel:} inachievability results; \textbf{Lower panel:} achievability results.}
\label{fig:structure_diagram}
\end{figure*}
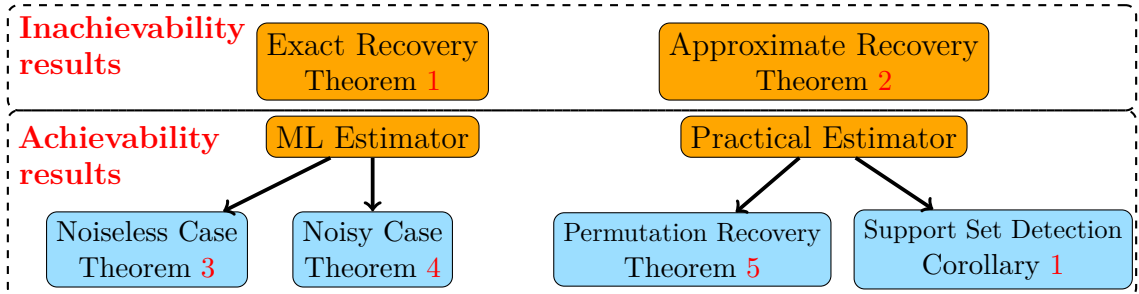

\section{Statistical Lower Bounds} \label{sec:statis_lb}

This paper focuses on recovering both the permutation matrix
$\bPitrue$ and the support set $\supp(\bbetatrue)$,
which are affected by both the sample number $n$ and
$\snr$.
In this section, we will separately discuss
their roles and establish the corresponding statistical lower bounds.

\subsection{Lower bound on sample number $n$}
\label{subsec:n_lb}

To free us from the impact of $\snr$,
we consider the oracle scenario with $\bPitrue$ being known
and limit ourselves to the small noise case, i.e., $\bw \approxeq \bZero$.
Then, our problem reduces to the
classical setting of CS~\citep{donoho2006compressed, candes2006stable, candes2006robust},
where $n \gsim k\log p$ is required for the Gaussian $\bX$ (cf. P.~$507$, Example $15.18$ in~\citet{wainwright_2019}).
This order applies to our case as well since it is hopeless
to recover $\supp(\bbetatrue)$ provided we fail even in the
oracle scenario with known $\bPitrue$.

 \subsection{Lower bound on $\snr$}

This subsection studies the mini-max lower bound
for the $\snr$. The main result is the following.
\begin{theorem}
\label{thm:statis_lb}
We have
\begin{align}
\label{eq:snr_statis_minimax}
\inf_{\wh{\bPi}, \wh{\bbeta}}
\sup_{\bPitrue, \bbetatrue}
\Expc_{\bX,\bw}\Ind\Bracket{(\bPitrue, \supp(\bbetatrue)) \neq (\wh{\bPi}, \supp(\wh{\bbeta}))}\geq \frac{1}{2},
\end{align}
if $n\log\bracket{1+ \snr } + 2 \leq \log\bracket{\abs{\pset_n} {p\choose k}}$,
where $\Expc_{\bX,\bw}(\cdot)$ is taken w.r.t $\bX$ and
$\bw$, and the infimum is over all possible
estimators $\wh{\bPi}$ and $\wh{\bbeta}$.
\end{theorem}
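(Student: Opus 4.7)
The plan is to prove the lower bound via Fano's inequality on a carefully chosen finite hypothesis sub-family. Restrict the supremum in~\eqref{eq:snr_statis_minimax} to pairs of the form $\theta=(\bPi,S)$ with $\bPi\in\pset_n$ and $S\subseteq[p]$ of cardinality $k$, and associate to each $\theta$ a canonical signal $\bbeta_\theta$ with $\supp(\bbeta_\theta)=S$ and $\|\bbeta_\theta\|_2^2=\sigma^2\snr$ (for instance $\bbeta_\theta=\sigma\sqrt{\snr/k}$ times the indicator vector of $S$). The minimax risk is then lower bounded by the Bayes risk under the uniform prior on $\theta$, so it suffices to show the latter exceeds $1/2$.

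Next I would apply Fano's inequality conditional on $\bX$ and average over $\bX$ to obtain
\[
P_e \;\geq\; 1 \;-\; \frac{I(\theta;\by\mid\bX)+\log 2}{\log\bracket{\abs{\pset_n}{p\choose k}}}.
\]
The task thus reduces to upper bounding $I(\theta;\by\mid\bX)=H(\by\mid\bX)-H(\by\mid\bX,\theta)$. The subtracted entropy equals $\tfrac{n}{2}\log(2\pi e\sigma^2)$ exactly since $\by\mid(\bX,\theta)\sim\normdist(\bPi\bX\bbeta_\theta,\sigma^2\bI)$. For $H(\by\mid\bX)$, entropy subadditivity combined with the fact that conditioning reduces entropy gives $H(\by\mid\bX)\leq\sum_{i=1}^n H(\by_i)$; each marginal $\by_i=\bX_{\pi(i),:}\bbeta_\theta+w_i$ has mean $0$ and variance $\sigma^2(1+\snr)$, because the rows of $\bX$ are i.i.d.\ $\normdist(\bZero,\bI)$ and hence $\bX_{\pi(i),:}\bbeta_\theta\sim\normdist(0,\|\bbeta_\theta\|_2^2)$ regardless of $(\bPi,S)$. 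Gaussian max-entropy then yields $H(\by_i)\leq\tfrac{1}{2}\log(2\pi e\sigma^2(1+\snr))$, so that $I(\theta;\by\mid\bX)\leq\tfrac{n}{2}\log(1+\snr)$.

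Combining, the Bayes error is at least $1/2$ whenever $n\log(1+\snr)+2\log 2\leq\log(\abs{\pset_n}{p\choose k})$, which is implied by the theorem's hypothesis since $2\log 2<2$. The main obstacle I anticipate is the mutual-information estimate---one must confirm that the per-coordinate variance of $\by_i$ really is $\sigma^2(1+\snr)$ uniformly in $(\bPi,S)$, which reduces to the observation that every row of $\bX$ has the same isotropic Gaussian law, so the distribution of $\bX_{\pi(i),:}\bbeta_\theta$ does not depend on which row $\pi(i)$ selects. The remaining ingredients (conditional Fano, entropy subadditivity, and Gaussian max-entropy) are standard, and no tighter per-pair KL bound is needed because the entropy-based bound already matches the stated condition.
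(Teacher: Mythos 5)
Your proposal is correct and follows essentially the same route as the paper: a uniform prior over a finite family in which the support determines the signal, conditional Fano's inequality, and a Gaussian max-entropy bound giving $\entI(\theta;\by\mid\bX)\leq\frac{n}{2}\log(1+\snr)$. The only (immaterial) difference is that you bound the mutual information coordinate-wise via subadditivity of entropy, whereas the paper's Lemma~\ref{lemma:statis_infor_lb} bounds it through $\logdet$ of the joint covariance together with concavity of $\logdet$; both yield the identical estimate and the same final condition.
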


\par \noindent
To better understand Theorem~\ref{thm:statis_lb}, we
spell out the constants and only focus on the orders. Without
any prior information about $\bPitrue$, we can assume
it to distribute uniformly among the set $\pset_n$.
Then, we have $\abs{\pset_n} = \log n!$ and can rewrite
the $\snr$ requirement in \eqref{eq:snr_statis_minimax} as
\begin{align}
\log\bracket{1 + \snr} \lsim \log n + \frac{k\log\bracket{{p}/{k}}}{n}.
\label{eq:statis_special_case}
\end{align}
Compared with~\citet{pananjady2016linear} which requires
$\log(\snr) \asymp \log n$ and $n = \Omega(p)$ for
correct permutation recovery, our bound
only has a slight increase of
$\snr$ requirement in \eqref{eq:statis_special_case}
since $n\gsim k\log p$.
Such an increase of required signal length
is outweighed by the significant
reduction of sample number, which
is from $\Omega(p)$ to $\Omega(k\log p)$.
In addition, we believe that
this theorem can be safely relaxed to
the scenario where $\bX_{ij}$ is an i.i.d. sub-gaussian random variable
with zero mean and unit variance.

The rigorous proof of Theorem~\ref{thm:statis_lb} is  in
Section~\ref{subsec:code_theory_explan}.
Here we only present an intuitive explanation,
which comes from \emph{coding theory}
~\citep{cover2012elements}. The basic idea is to recast the
problem of recovering
$(\bPitrue, \supp(\bbetatrue))$
as a decoding problem~\citep{pananjady2016linear, zhang2019permutation}.
First, we encode $(\bPitrue, \supp(\bbetatrue))$
into the codeword $\bPitrue \bX\bbetatrue$.
Then, we pass it through the additive Gaussian channel~\citep{cover2012elements} and
observe $\by = \bPitrue\bX\bbetatrue + \bw$.
Our goal is to decode $(\bPitrue, \supp(\bbetatrue))$
from the received signal $\by$. An illustration is available in
Figure~\ref{fig:code_trans}.

\begin{figure}[h]
\centering
\includegraphics[width = 4in]{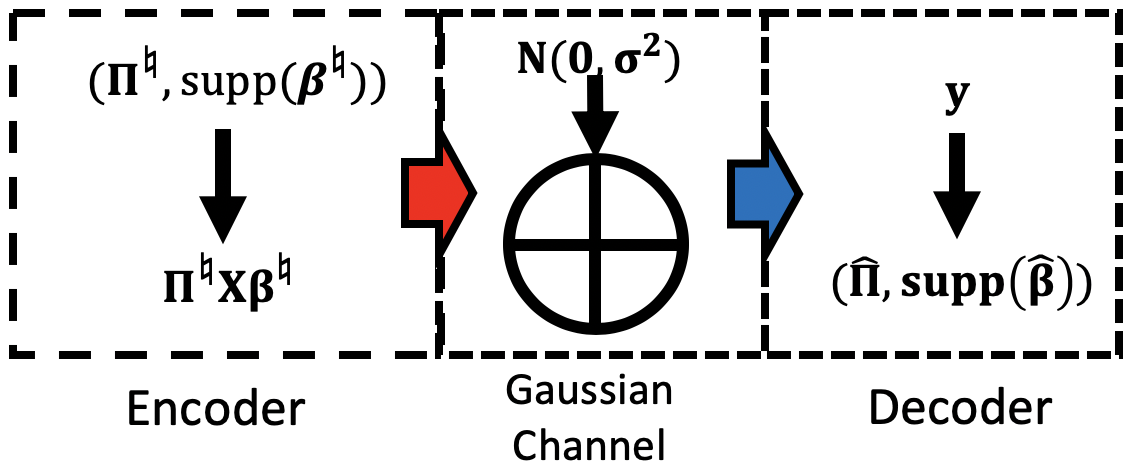}

\vspace{-0.1in}

\caption{Interpretation of Theorem~\ref{thm:statis_lb} from the viewpoint of coding theory.}
\label{fig:code_trans}
\end{figure}

\newpage

Different from~\citet{pananjady2016linear},
we cannot assume $\bbetatrue$ to be given a prior
as this will lead to
absence of sparsity number in the $\snr$ requirement.
Instead, we
assume $\bbetatrue$ to be a binary vector, namely, $\bbetatrue\in \set{0, 1}^n$.
A bonus of this assumption is that
$\supp(\bbetatrue)$ contains the same amount of
information as $\bbeta$, in other words,
there is no extra information (e.g., the specific
values of the non-zero entries) required for
encoding $(\bPitrue,\supp(\bbetatrue))$ into
$\bPitrue \bX\bbetatrue$.
On one hand, the code rate $\mathsf{Rate}$ is computed as
\[
\mathsf{Rate} \defequal \frac{\log\bracket{{p \choose k} n!}}{n} \gsim \
\bracket{1 + \nfrac{1}{2n}}\log n+ \
\frac{k\log(p/k)}{n}.
\]
Meanwhile, the channel capacity is approximately
\[
\mathsf{Capacity}\defequal\dfrac{1}{2}\log\bracket{1 + \frac{\|\bPitrue \bX \bbetatrue\|_{2}^2}{n\sigma^2}}
\approx \dfrac{1}{2}\log\bracket{1 + \frac{\|\bbetatrue\|_{2}^2}{\sigma^2}}.
\]

According to~\citet{cover2012elements}, we require
the code rate is no greater than the channel capacity, i.e.,
$\mathsf{Rate} < \mathsf{Capacity}$, to ensure correct recovery, which naturally yields the
$\snr$ requirement in \eqref{eq:snr_statis_minimax}.

In addition, we notice that the exact recovery
 may be unnecessary in certain
applications.
Following a similar approach, we obtain an
analogous lower bound for the approximate recovery, namely,
$\dH(\wh{\bPi}, \bPitrue) +\dH(\supp(\bbetatrue), \supp(\wh{\bbeta})) \geq \mathsf{D}$, where $\mathsf{D} \geq 0$ is some positive integer. A formal statement is given as follows.

\begin{theorem}
\label{thm:statis_approx_lb}
Provided that $n\log(1 + \snr) +\log 4 \leq \log \zeta$,
we conclude
\[
\inf_{\wh{\bPi}, \wh{\bbeta}}
\sup_{\bPitrue, \bbetatrue}\Expc_{\bX, \bw}\Ind\Bracket{\dH(\wh{\bPi}, \bPitrue) +\dH(\supp(\bbetatrue), \supp(\wh{\bbeta})) \geq \mathsf{D}}\geq \frac{1}{2},
\]
where $\zeta$ is defined as
\begin{align}
\zeta & \defequal
\frac{p!}{(k!)^2[(p-k)!]^2} \cdot  \Bracket{\sum_{i=1}^{\mathsf{D}} \sum_{j=1}^{(\mathsf{D}-i)\vcap k}\
\frac{1}{(n-i)!(k-j)!(p-k-j)!(j!)^2}}^{-1}.
\label{eq:approx_zeta_def}
\end{align}
\end{theorem}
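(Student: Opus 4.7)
The plan is to mirror the coding-theoretic proof of Theorem~\ref{thm:statis_lb}, but to replace the exact-recovery hypothesis space with a \emph{packing} in the combined Hamming metric $d\bracket{(\bPi,S),(\bPi',S')} \defequal \dh(\bPi,\bPi') + \dh(S,S')$ on the configuration space $\pset_n \times \binom{[p]}{k}$. If the packing has sufficient minimum distance, then any estimator that resolves the packing hypothesis automatically achieves combined Hamming error below $\mathsf{D}$; contrapositively, a lower bound on the packing-test error probability is a lower bound on the indicator probability of interest. I would place a uniform prior on the packing and, as in the proof of Theorem~\ref{thm:statis_lb}, take $\bbetatrue$ to have unit-magnitude binary entries on the support $S$, so that the per-hypothesis signal power is $\|\bbetatrue\|_2^2 = k$.

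To produce a packing $\mathcal{M}$ of the required size, I would run the greedy Gilbert--Varshamov argument: each newly chosen configuration forbids a ball around it, so the packing cardinality is at least $n!\binom{p}{k}/V$, where $V$ is the ambient ball size. Writing $V$ as a double sum indexed by the permutation distance $i$ and the support-swap count $j$ with $i+j\leq \mathsf{D}$, upper-bounding the number of permutations at distance $i$ by $n!/(n-i)!$, and using the exact count $\binom{k}{j}\binom{p-k}{j} = \frac{k!(p-k)!}{(j!)^2(k-j)!(p-k-j)!}$ for the support sets, algebraic simplification converts $n!\binom{p}{k}/V$ into exactly the quantity $\zeta$ displayed in \eqref{eq:approx_zeta_def}.

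The remainder of the argument is classical. Invoking Fano's inequality,
\[
P_{\mathrm{err}} \;\geq\; 1 - \frac{I(H;\by,\bX) + \log 2}{\log |\mathcal{M}|},
\]
and upper-bounding the conditional mutual information $I(H;\by \mid \bX)$ by the $n$-use Gaussian channel capacity $\frac{n}{2}\log(1+\snr)$ (using that $\bX$ is independent of the hypothesis $H$ and that $\Expc_{\bX}\|\bPi\bX\bbeta\|_2^2 = n\|\bbeta\|_2^2$), the requirement $P_{\mathrm{err}}\geq 1/2$ together with $|\mathcal{M}|\geq \zeta$ rearranges to $n\log(1+\snr) + \log 4 \leq \log\zeta$, which is precisely the hypothesis.

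The main obstacle, I expect, is the combinatorial bookkeeping in the packing step: the Gilbert--Varshamov lower bound must be forced into the rather intricate expression~\eqref{eq:approx_zeta_def}, matching the starting indices ($i,j\geq 1$) and the correct radius convention (whether the ball sum should terminate at $\mathsf{D}-1$ or $\mathsf{D}$, and correspondingly whether the packing distance should be $\mathsf{D}$ or $2\mathsf{D}$ for the nearest-neighbor decoder reduction to go through cleanly). The Fano and Gaussian-capacity steps are essentially identical to their counterparts in the proof of Theorem~\ref{thm:statis_lb}, modulo the factor-of-two that promotes the $\log 2$ Fano remainder into $\log 4$ in the final condition.
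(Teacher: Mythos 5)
Your Fano-plus-capacity bookkeeping is fine (the $\log 4$ indeed arises from doubling $\log 2+\tfrac{n}{2}\log(1+\snr)$, and the mutual-information bound $\entI(\cdot;\by\mid\bX)\leq\tfrac{n}{2}\log(1+\snr)$ is exactly the paper's Lemma~\ref{lemma:statis_infor_lb}). The genuine gap is in the packing step, and it is precisely the issue you flagged but did not resolve: for the nearest-packing-point reduction to work, an estimate $\wh{\theta}$ within combined Hamming distance $\mathsf{D}-1$ of the truth must have the truth as its unique nearest packing point, which by the triangle inequality forces the packing to be $(2\mathsf{D}-1)$-separated. The greedy Gilbert--Varshamov bound then gives $|\mathcal{M}|\geq n!\binom{p}{k}/V(2\mathsf{D}-2)$, i.e., the ball volume in the denominator has radius roughly $2\mathsf{D}$, not $\mathsf{D}$. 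Consequently your algebra cannot ``convert $n!\binom{p}{k}/V$ into exactly the quantity $\zeta$'' of \eqref{eq:approx_zeta_def}: that $\zeta$ equals $n!\binom{p}{k}$ divided by the radius-$\mathsf{D}$ ball volume $\sum_{i+j\leq\mathsf{D}}\frac{n!}{(n-i)!}\binom{k}{j}\binom{p-k}{j}$, and the packing route only yields the theorem with $\zeta$ replaced by the strictly smaller quantity built from the radius-$2\mathsf{D}$ ball. You would prove a correct but weaker statement than the one asserted.

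The paper avoids this doubling by not constructing a packing at all. It runs a generalized Fano argument directly on the full uniform prior over all $n!\binom{p}{k}$ configurations: introducing the binary error indicator $\calE=\Ind\{\dH(\wh{\bPi},\bPitrue)+\dH(\supp(\bbetatrue),\supp(\wh{\bbeta}))\geq\mathsf{D}\}$ and expanding $\entH(\calE,\bPitrue,\supp(\bbetatrue)\mid\wh{\bPi},\supp(\wh{\bbeta}),\by,\bX)$ by the chain rule in two ways, it bounds $\entH(\bPitrue,\supp(\bbetatrue)\mid\calE=0,\wh{\bPi},\supp(\wh{\bbeta}))\leq\log|\BB(\mathsf{D})|$, since conditional on success the truth lies in the radius-$\mathsf{D}$ ball around the \emph{known} estimate. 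The ball of radius $\mathsf{D}$ therefore enters once, with no triangle-inequality loss, and rearranging gives $\Prob(\calE=0)\leq(\log 2+\entI(\bPitrue,\supp(\bbetatrue);\by\mid\bX))/\log\zeta$ with exactly the stated $\zeta$. If you want to salvage your write-up, replace the Gilbert--Varshamov packing with this conditional-entropy version of Fano; the rest of your argument (binary $\bbetatrue$, the capacity bound, the $\log 4$ rearrangement) carries over unchanged.
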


\begin{remark}
Due to the complicated form of $\zeta$ in \eqref{eq:approx_zeta_def},
we only calculate one special case, i.e., $\mathsf{D}= 0$, to illustrate its behavior.
Notice that $\mathsf{D}=0$ corresponds to the exact recovery,
which restore the setting to Theorem~\ref{thm:statis_lb}.
Parameter $\zeta$ under this case is written as
$n!{p\choose k}$, which exhibits
the same order as in Theorem~\ref{thm:statis_lb}.
\end{remark}

In this section, we have established the lower bounds, which remain valid
regardless of the estimators being used.
In the next section,
we will confirm their tightness.

\section{The Maximum Likelihood Estimator}\label{sec:ml_estim}
We will show that the lower bounds in Section~\ref{sec:statis_lb} can be matched
with the differences only up to some
multiplicative constants, to be more specific,
$(i)$ sample number $n$ can be picked as
$n \asymp k\log p$; and $(ii)$ $\snr$ can be set as
$\log\snr\asymp \log n+ \frac{k\log (ep/k)}{n}$.

\subsection{A warm-up example: the noiseless case}
First, we study the role of sample number $n$.
To free it from the impact of $\snr$, we consider the
noiseless case, where $\snr$ is infinite.
Assuming the sparsity number $k$ is given in advance,
we will recover $(\bPitrue, \bbetatrue)$
via the
\emph{maximal likelihood} (ML) estimator reading as
\begin{align}
\label{eq:ml_est}
(\wh{\bPi}_{\mathsf{ML}}, \wh{\bbeta}_{\mathsf{ML}}) = \
\argmin_{\substack{\bPi \in \pset_n \\ \norm{\bbeta}{0}\leq k}}
\norm{\by - \bPi \bX \bbeta}{2}.
\end{align}
Then we  have the following theorem:

\begin{theorem}
\label{thm:noiseless}
Consider the noiseless case where $\bw = \bZero$.
Given that
$n\gsim k\log p$, we have \\
$\Prob((\wh{\bPi}_{\mathsf{ML}}, \wh{\bbeta}_{\mathsf{ML}})\neq (\bPitrue,~\bbetatrue)) \lsim n^{-2}$.
\end{theorem}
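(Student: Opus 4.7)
The plan is to show that in the noiseless regime $\bw = \bZero$, the ML objective, which equals zero at the ground truth, is strictly positive at every competing pair with probability at least $1 - O(n^{-2})$. After substituting $\bP \defequal \bPi^{\top}\bPitrue$, the task reduces to proving that the equation $\bX \bbeta = \bP \bX \bbetatrue$ with $\|\bbeta\|_0 \le k$ admits only the trivial solution $(\bP, \bbeta) = (\bI, \bbetatrue)$. The proof then follows by a union bound over candidate pairs $(\bP, S)$ with $\bP \in \pset_n$ and $S = \supp(\bbeta)$ of size at most $k$.

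When $\bP = \bI$, the equation collapses to $\bX(\bbeta - \bbetatrue) = \bZero$ with $\bbeta - \bbetatrue$ supported on $S \cup \supp(\bbetatrue)$, a set of size at most $2k$. Under $n \gsim k\log p$, in particular $n \ge 2k$, every submatrix $\bX_{\cdot, T}$ with $|T| \le 2k$ has full column rank almost surely for the Gaussian design, forcing $\bbeta = \bbetatrue$.

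When $\bP \neq \bI$, let $m \defequal \dH(\bPi, \bPitrue)$ be the number of rows moved by $\bP$; fix such a $\bP$ and a candidate support $S$. The existence of a valid $\bbeta_S$ is equivalent to $\bP \bX_{\cdot, \supp(\bbetatrue)} \bbetatrue_{\supp(\bbetatrue)} \in \mathrm{col}(\bX_{\cdot, S})$. I would split into two sub-cases. In the first, $\supp(\bbetatrue) \not\subseteq S$, and any index $j \in \supp(\bbetatrue) \setminus S$ supplies a column $\bX_{\cdot, j}$ independent of $\bX_{\cdot, S}$; isolating the term $\bbetatrue_j \bP \bX_{\cdot, j}$ reduces the event to ``a non-degenerate Gaussian vector in $\RR^n$ lies in an independent $|S|$-dimensional affine subspace'', which has probability zero since $|S| \le k < n$. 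In the second, $S = \supp(\bbetatrue)$, I would rewrite the equation as $\bX_{\cdot, \supp(\bbetatrue)}(\bbeta - \bbetatrue)_{\supp(\bbetatrue)} = (\bP - \bI)\bX_{\cdot, \supp(\bbetatrue)} \bbetatrue_{\supp(\bbetatrue)}$, whose right-hand side vanishes on every row fixed by $\bP$. Restricting to the $n - m$ fixed rows forces $(\bbeta - \bbetatrue)_{\supp(\bbetatrue)} = \bZero$ whenever $n - m \ge k$, and substituting back turns the $m$ surviving equations on moved rows into non-trivial Gaussian identities that hold with probability zero.

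To upgrade these measure-theoretic ``probability zero'' statements to the declared $O(n^{-2})$ rate, I would wrap the argument in a quantitative ``good event'' that every $\bX_{\cdot, T}$ with $|T| \le 2k$ is well-conditioned (bounded smallest singular value), which holds with probability at least $1 - O(n^{-2})$ for Gaussian designs under $n \gsim k\log p$. The main obstacle I anticipate is the sub-case $S = \supp(\bbetatrue)$ with $\bP$ moving nearly all rows (\textit{i.e.}\ $m > n - k$): the fixed-row restriction then no longer contains enough equations to pin down $(\bbeta - \bbetatrue)_{\supp(\bbetatrue)}$, so one must argue directly on the augmented matrix $[\bX_{\cdot, \supp(\bbetatrue)}, \bP \bX_{\cdot, \supp(\bbetatrue)}]$ that its $n$ row equations remain over-determined in the $k$ unknowns, a more delicate rank calculation that must exploit the block structure induced by $\bP$.
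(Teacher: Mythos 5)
There is a genuine gap, and you have in fact located it yourself: the case $\bPi \neq \bPitrue$ with $\supp(\bbetatrue) \subseteq S$ (so that the candidate column block and the true one are statistically dependent) is exactly where the paper spends essentially all of its technical effort, and your proposal leaves it unresolved. Your dichotomy ``$\supp(\bbetatrue)\not\subseteq S$'' versus ``$S=\supp(\bbetatrue)$'' is not even exhaustive (it omits $\supp(\bbetatrue)\subsetneq S$), and within the $S=\supp(\bbetatrue)$ branch your fixed-row argument only works when the permutation moves at most $n-k$ rows; since $k \lsim n/\log p \ll n$, permutations moving more than $n-k$ rows are the typical case, not a corner case. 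For such $\bPi$ the event is $\bPi^{\rmt}\bX_T\bbetatrue_T \in \mathrm{col}(\bX_T)$, where both sides are functions of the same Gaussian block $\bX_T$, so the clean ``independent Gaussian vector in a lower-dimensional subspace'' reasoning is unavailable. The paper resolves precisely this dependence in Lemma~\ref{lemma:depend_proj_small_ball}: it lower-bounds $\|\Proj_{\bPi\bX_S}^{\perp}\bX_T\bbetatrue_T\|_2$ by first projecting off $\bX_{S\cup\calT_1}$, reducing to $\|\Proj_{\bX_T}^{\perp}\bPi\bX_T\bbetatrue_T\|_2$, and then controlling that via the identity $\|\Proj_{\bX_1}^{\perp}\bPi\bX_1\|_2^2 \geq \frac{1}{2}\big(\|\bX_1-\bPi\bX_1\|_2^2 \wedge \|\bX_1+\bPi\bX_1\|_2^2\big)$ together with the $\chi^2$ small-ball bound of Lemma~\ref{lemma:permute_cancel_tail_bound}. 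Nothing in your sketch substitutes for this step.

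A secondary remark: your plan to finish with a qualitative ``probability zero'' argument is not wrong in spirit for the noiseless case --- the union over $(\bPi,S)$ is finite, so it would suffice to show each individual degeneracy event is null, e.g.\ by exhibiting one $\bX_T$ for which the $n\times(k+1)$ matrix $[\bX_T,\ \bPi^{\rmt}\bX_T\bbetatrue_T]$ has rank $k+1$ and invoking that a nonzero polynomial vanishes on a null set. But you have not supplied that non-vanishing argument for the dependent case, and it is the crux. The paper instead proves quantitative small-ball estimates (via Lemma~\ref{lemma:random_proj} and $\chi^2$ tails) and pays for the union over ${p\choose k}$ supports and $\sum_h\binom{n}{h}h!$ permutations explicitly; this is what makes the same machinery reusable in the noisy case (Theorem~\ref{thm:noisy_ml}), which a measure-zero argument cannot be. Your quantitative ``good event'' (uniform conditioning of all $\bX_{\cdot,T}$ with $|T|\le 2k$) handles only the $\bPi=\bPitrue$ branch and the independent branch; it does not touch the dependent one.
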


\begin{proofoutline}
We divide the analysis of the reconstruction
error $(\wh{\bPi}_{\mathsf{ML}}, \wh{\bbeta}_{\mathsf{ML}})\neq (\bPitrue, \supp(\bbetatrue))$
into three categories:
$(i)$ $\big \{\bPitrue =\wh{\bPi}_{\ml} , \
\bbetatrue \neq \wh{\bbeta}_{\ml} \big \}$;
$(ii)$ $\big \{\bPitrue \neq\wh{\bPi}_{\ml}, \
\bbetatrue =\wh{\bbeta}_{\ml}\big \}$; and
$(iii)$ $\big\{\bPitrue \neq\wh{\bPi}_{\ml}, \
\bbetatrue \neq\wh{\bbeta}_{\ml} \big \}$.
Iterating over all possible pairs
$(\bPi,~\supp(\bbeta))$,
we will show the above $3$ types of errors rarely happen.
The detailed proof is deferred to Subsection~\ref{thm_proof:noiseless}.
\end{proofoutline}
We notice that Theorem~\ref{thm:noiseless} directly recovers
a sparse signal $\bbetatrue$, which contains more information
than its support set $\supp(\bbetatrue)$.
Moreover, we observe that the sample number requirement
in Theorem~\ref{thm:noiseless} is well aligned with the
lower bound presented in Subsection~\ref{subsec:n_lb} and
can hence confirm its tightness.

\subsection{The noisy case}

This subsection investigates the noisy case.
To correctly recover the support set $\supp(\bbetatrue)$,
we need an additional assumption on non-zero entries' smallest magnitudes.
Otherwise, even in the classical setting without any permutation,
small sensing noise can lead
to incorrect support set detection.
The formal statement reads as follows.

\begin{theorem}
\label{thm:noisy_ml}
Provided that
\begin{enumerate}
    \item[(i)] $n \gsim k\log p$,
\item [(ii)] $\log\snr \gsim \log n  + \frac{k}{n}\log(\frac{ep}{k})$
\item [(iii)] $\min_{i\in \supp(\bbetatrue)}\nfrac{|\bbetatrue_i|^2}{\sigma^2}\gsim 1$,
\end{enumerate}
we have
$\Prob((\wh{\bPi}_{\mathsf{ML}}, \supp(\wh{\bbeta}_{\mathsf{ML}})) \neq (\bPitrue, \supp(\bbetatrue)) \lsim n^{-2} + e^{-c k\log p}$.
\end{theorem}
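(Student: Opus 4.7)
\begin{proofoutline}
The plan is to follow the three-category error decomposition used in the proof of Theorem~\ref{thm:noiseless}:
$(i)$ $\bPitrue = \wh{\bPi}_{\mathsf{ML}}$ but $\supp(\bbetatrue)\neq \supp(\wh{\bbeta}_{\mathsf{ML}})$;
$(ii)$ $\bPitrue \neq \wh{\bPi}_{\mathsf{ML}}$ but $\supp(\bbetatrue) = \supp(\wh{\bbeta}_{\mathsf{ML}})$; and
$(iii)$ both differ from the truth.
For any candidate $(\bPi, S)$ with $|S|\leq k$, the inner minimization in~\eqref{eq:ml_est} is an ordinary least squares, so the profile ML cost equals $\|(\bI - P_{\bPi, S})\by\|_2^2$, with $P_{\bPi, S}$ the orthogonal projector onto $\mathrm{colsp}(\bPi\bX_S)$. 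Setting $S^{\natural}\defequal \supp(\bbetatrue)$ and $\bz^{\natural}\defequal \bPitrue\bX_{S^{\natural}}\bbetatrue_{S^{\natural}}$, and using $(\bI - P_{\bPitrue, S^{\natural}})\bz^{\natural}=\bZero$, the event that $(\bPi, S)\neq(\bPitrue, S^{\natural})$ beats the truth reduces to
\[
\|(\bI - P_{\bPi, S})\bz^{\natural}\|_2^2 + 2\langle (\bI - P_{\bPi, S})\bz^{\natural},\, \bw\rangle \leq \bw^{\top}(P_{\bPi, S} - P_{\bPitrue, S^{\natural}})\bw,
\]
so the ``signal leakage'' $\|(\bI - P_{\bPi, S})\bz^{\natural}\|_2^2$ must dominate a Gaussian cross-term together with a rank-$(\leq 2k)$ chi-square difference.

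The three technical ingredients are: $(a)$ a high-probability lower bound on the signal leakage in terms of $h\defequal \dH(\bPi, \bPitrue)$ and $s\defequal |S\setminus S^{\natural}|$, obtained for Gaussian $\bX$ by decomposing $\bz^{\natural}$ according to whether $\pi^{\natural}(i) = \pi(i)$ and invoking a $(2k)$-sparse lower-isometry/RIP guarantee on the columns of the concatenated matrix $[\bPi\bX_S,\bPitrue\bX_{S^{\natural}}]$; $(b)$ a Hanson--Wright-type tail bound on $\bw^{\top}(P_{\bPi, S} - P_{\bPitrue, S^{\natural}})\bw$, which has rank at most $2k$ and operator norm at most $1$; and $(c)$ a Gaussian tail bound on the cross-term conditional on $\bX$, whose conditional variance equals $\sigma^2\|(\bI - P_{\bPi, S})\bz^{\natural}\|_2^2$ and is therefore controlled by the same quantity that appears on the left-hand side.

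For each pair of deviations $(h, s)$, I will union-bound the above event over the at most $\binom{n}{h}\,h!\cdot \binom{k}{s}\binom{p-k}{s}$ candidates. Hypothesis (ii), $\log\snr\gsim \log n + (k/n)\log(ep/k)$, ensures that every permutation swap is penalized by an SNR exponent exceeding the $\log n$ combinatorial price per swap, so the resulting sum over $h\geq 1$ is at most $n^{-2}$; hypothesis (iii), $\min_{i\in S^{\natural}}|\bbetatrue_i|^2/\sigma^2\gsim 1$, plays the role of the standard minimum-signal condition in support recovery and, together with the $\log(ep/k)$-per-swap combinatorial price, produces the $e^{-ck\log p}$ contribution from category $(i)$. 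Hypothesis (i), $n\gsim k\log p$, enters as the sample-complexity requirement for the $(2k)$-sparse RIP invoked in ingredient $(a)$.

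The main obstacle is category $(iii)$: the projector $P_{\bPi, S}$ and the signal $\bz^{\natural}$ share the matrix $\bX$ (through rows indexed by positions where $\pi^{\natural} = \pi$ and through columns indexed by $S\cap S^{\natural}$), so the leakage cannot be analyzed by conditioning on $\mathrm{colsp}(\bPi\bX_S)$ alone, as one would in an independent-codebook argument. I would handle this by splitting rows according to whether $\pi^{\natural}(i) = \pi(i)$ and columns according to $S\triangle S^{\natural}$ versus $S\cap S^{\natural}$, then using the $(2k)$-sparse RIP to show that the component of $\bz^{\natural}$ coming from the disagreeing rows admits, with high probability, a substantial orthogonal complement to $\mathrm{colsp}(\bPi\bX_S)$ of size proportional to $h\|\bbetatrue\|_2^2 + s\cdot \min_{i\in S^{\natural}}|\bbetatrue_i|^2$. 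This decoupling step is what ultimately delivers the sharp scaling matching the lower bound in Theorem~\ref{thm:statis_lb}.
\end{proofoutline}
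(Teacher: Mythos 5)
Your overall skeleton coincides with the paper's: profile out $\bbeta$ so that each candidate $(\bPi,S)$ is scored by a projection residual, split the basic inequality into signal leakage, a Gaussian cross-term, and a difference of noise quadratic forms, and union-bound over candidates stratified by $h=\dH(\bPi,\bPitrue)$ and the support discrepancy. (The paper organizes this as Lemma~\ref{lemma:ml_correct_permute} for all $\bPi\neq\bPitrue$ jointly with any $S$, plus Lemma~\ref{lemma:ml_sign_consistency} for $\bPi=\bPitrue$, $S\neq T$, rather than your three categories, but that is cosmetic.) Your ingredients (b) and (c) match the paper's events $\calE_4$, $\calE_7$, $\calE_8$ in substance.

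The gap is in ingredient (a), which you correctly identify as the main obstacle but do not resolve. Two problems. First, the tool you invoke --- a restricted lower isometry for the concatenated matrix $[\bPi\bX_S,\,\bPitrue\bX_{S^{\natural}}]$ --- is not available: the two blocks share entries of $\bX$ (permuted within columns for indices in $S\cap S^{\natural}$), so they do not form an i.i.d.\ Gaussian $n\times 2k$ matrix, and in the relevant direction $(\bbetatrue,-\bbetatrue)$ the restricted singular value degenerates to order $\sqrt{h}$ rather than $\sqrt{n}$, so no uniform RIP-type constant exists. Second, and more fundamentally, what the union bound requires is not that the leakage is \emph{typically} of order $h\|\bbetatrue\|_2^2+s\min_i|\bbetatrue_i|^2$, but a \emph{small-ball} estimate: to beat the roughly $n^h$ permutations at distance $h$, the probability that the leakage falls below the chosen threshold must itself decay like $n^{-\Omega(h)}$, which forces the threshold to be taken far below the typical value (the paper uses $t_h = n\log(\snr)/\snr$ times $\|\bbetatrue_T\|_2^2$, and this is precisely where assumption (ii) enters). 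The paper achieves this in Lemma~\ref{lemma:depend_proj_small_ball} by enlarging the projector to $\Proj_{\bX_{S\cup\calT_1}}^{\perp}$ to decouple the support mismatch, conditioning on $\bX_T$ and applying the random-projection bound (Lemma~\ref{lemma:random_proj}) to absorb the dependence on $S$, and reducing the core quantity to $\norm{(\bI-\bPi)\bx}{2}^2$, whose left tail with exponent proportional to $h$ is controlled by the cycle-decomposition chi-square argument of Lemma~\ref{lemma:permute_cancel_tail_bound}. Without this device (or an equivalent one) your sum over $h$ does not close, so the proposal as written has a genuine hole at its central step.
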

Compared with the result pertaining to the noiseless case (i.e.,
Theorem~\ref{thm:noiseless}), Theorem~\ref{thm:noisy_ml} requires
non-zero entries' magnitudes to be at least some positive
constants. Apart from this constraint, our bound matches the
minimax lower bound in Theorem~\ref{thm:statis_lb} up to some
multiplicative constants, specifically,
$n\gsim k\log p$ and $\log \snr \gsim \log n + \nfrac{k\log p}{n}$.

\begin{remark}
With a slight modification of the ML estimator,
we can significantly relax
the third assumption in the above theorem, i.e., $\min_{i\in \supp(\bbetatrue)}\nfrac{|\bbetatrue_i|^2}{\sigma^2}\gsim 1$.
Notice that the correct permutation recovery only requires the first two assumptions in Theorem~\ref{thm:noisy_ml}.
Hence, we can first reconstruct the permutation matrix
$\bPitrue$ with the ML estimator. Afterwards, we restore the shuffled sparse recovery problem to its classical setting and invoke the
previous works to detect the support set $\supp(\bbetatrue)$. With
the above modifications, we can improve the assumption $\min_{i\in \supp(\bbetatrue)}\nfrac{|\bbetatrue_i|^2}{\sigma^2}\gsim 1$
to  $\min_{i\in \supp(\bbetatrue)}\nfrac{|\bbetatrue_i|^2}{\sigma^2}\gsim \nfrac{\log p}{n}$.
\end{remark}

We would like to emphasize that the ML estimator
only serves to confirm the tightness of Theorem~\ref{thm:statis_lb}.
It is unpractical due to its high computational cost:
 it needs to iterate
$(i)$ all possible $k$-sparse subsets, which consists of
${p\choose k}$ cases; and $(ii)$
all possible permutation matrices $\bPi$, which consists $n!$ cases.
The next section will present a computational-friendly estimator.

\section{Practical Estimator}\label{sec:prac_estim}

This section proposes a practical estimator to combat the high computational
cost associated with the ML estimator,
which consists of
two stages: permutation recovery and
support set detection.
A formal statement is in
Algorithm~\ref{alg:prac_estim}.

\subsection{Permutation Recovery}
We note that the major difficulty in the permutation recovery
stems from the missing value of signal $\bbetatrue$.
One natural solution is to restore the permutation
with an approximate value of $\bbetatrue$.
To begin with, we impose a
parsimonious assumption on the number of permuted
rows, or equivalently, only a small proportion of rows are permuted. Then, we
adopt a denoising viewpoint and treat $\bracket{\bI-\bPitrue}\bX\bbetatrue$ as a sparse outlier
to be removed. Inspired by~\citet{nguyen2013robust} and~\citet{slawski2017linear}, we can estimate
the signal $\bbetatrue$ as
\[
(\wt{\bXi}, \wt{\bbeta}) = \argmin_{\bXi, \bbeta}
~& \frac{1}{2n}\norm{\by - \bX\bbeta - \sqrt{n}\cdot \bXi}{2}^2 + \lambda_{\bXi} \norm{\bXi}{1}
+ \lambda_{\bbeta} \norm{\bbeta}{1}.
\]
Afterwards, we reconstruct the
permutation matrix $\bPitrue$ via the following
\emph{linear assignment problem} (LAP), which
reads as
\[
\wh{\bPi} = \argmax_{\bPi}
\langle \by,~\bPi \bX\wt{\bbeta}\rangle,
\]
where $\wh{\bbeta}$ is the solution of
\eqref{eq:robust_lasso_optim_def}.
Then, we conclude
\begin{theorem}
\label{thm:robust_lasso_permutation}
We set $\belambda$ and $\xilambda$ in
\eqref{eq:robust_lasso_optim_def} as
$c_0 \sigma \sqrt{\nfrac{\log p}{n}}$
and
$c_1 \sigma \sqrt{\nfrac{\log n}{n}}$,
respectively. Assuming that
$(i)$ $n\gsim k\log p$, $(ii)$ $h \lsim \frac{n}{\log n}$, and $(iii)$
\[
\snr \gsim \frac{n^{2(1+\varepsilon)} (n-1)^2}{4\pi}
\bigg[&\sqrt{\log np} \bracket{k\sqrt{\frac{\log p}{n}}
\vcup h \sqrt{\frac{\log n}{n}}
} + 2\log(n^{1+\varepsilon}(n-1)) \bigg]^2,
\]
we conclude that \eqref{eq:robust_lasso_pi_optim} can yield the ground truth with probability
exceeding $1-2n^{-\varepsilon}$, i.e.,
 $\Prob(\wh{\bPi} = \bPitrue)\geq 1 - 2n^{-\varepsilon}$.
\end{theorem}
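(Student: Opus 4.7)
\begin{proofoutline}
The plan is to analyze the two stages of the practical estimator separately. In Stage~1 I would recast the first-stage program as a sparse-plus-outlier regression and invoke existing robust-Lasso theory to show that $\wt{\bbeta}$ approximates $\bbetatrue$ in $\ell_{2}$; in Stage~2 I would use the rearrangement inequality to characterize the LAP output and argue that, once $\wt{\bbeta}$ is close enough to $\bbetatrue$, the true permutation $\bPitrue$ must be the unique maximizer.

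For Stage~1, rewrite $\by = \bX\bbetatrue + \sqrt{n}\,\bXi^{\natural} + \bw$ with $\bXi^{\natural}\defequal (\bPitrue-\bI)\bX\bbetatrue/\sqrt{n}$, which is $h$-sparse by the parsimonious assumption on permuted rows. Under Gaussian $\bX$ and the hypotheses $n\gsim k\log p$ and $h\lsim n/\log n$, the concatenated matrix $[\bX,\sqrt{n}\bI]\in\RR^{n\times(p+n)}$ satisfies a restricted-eigenvalue condition on vectors that are simultaneously $O(k)$-sparse in the first block and $O(h)$-sparse in the second. Combined with the choices $\belambda\asymp\sigma\sqrt{\log p/n}$ and $\xilambda\asymp\sigma\sqrt{\log n/n}$, the standard robust-Lasso analysis in the spirit of~\citet{nguyen2013robust, dalalyan2019outlier} delivers
\[
\norm{\wt{\bbeta}-\bbetatrue}{2}^{2}\;\lsim\;\sigma^{2}\bracket{\frac{k\log p}{n}+\frac{h\log n}{n}}
\]
on an event of probability at least $1-cn^{-\varepsilon}$, and also ensures that $\wt{\bbeta}$ is $O(k)$-sparse.

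For Stage~2, let $\bu\defequal \bPitrue\bX\bbetatrue$ and $\bm{\Delta}\defequal \wt{\bbeta}-\bbetatrue$, so that $\by = \bu+\bw$ and $\bPitrue\bX\wt{\bbeta} = \bu+\bPitrue\bX\bm{\Delta}$. Because the LAP objective $\langle\by,\bPi\bX\wt{\bbeta}\rangle$ is a bilinear form in two permuted vectors, the rearrangement inequality implies that $\wh{\bPi}=\bPitrue$ whenever $\by$ and $\bPitrue\bX\wt{\bbeta}$ share the same sort order; a sufficient condition is
\[
\min_{i\neq j}\abs{u_{i}-u_{j}}\;>\;\max_{i\neq j}\abs{w_{i}-w_{j}}+\max_{i\neq j}\abs{(\bPitrue\bX\bm{\Delta})_{i}-(\bPitrue\bX\bm{\Delta})_{j}}.
\]
The entries of $\bu$ are i.i.d.~$\normdist(0,\norm{\bbetatrue}{2}^{2})$, so each difference $u_{i}-u_{j}$ is $\normdist(0,2\norm{\bbetatrue}{2}^{2})$; Gaussian anti-concentration combined with a union bound over the $\binom{n}{2}$ pairs yields $\min_{i\neq j}\abs{u_{i}-u_{j}}\gsim \norm{\bbetatrue}{2}/\bracket{n^{1+\varepsilon}(n-1)}$ with probability $\geq 1-n^{-\varepsilon}$, and squaring this lower bound is precisely where the prefactor $n^{2(1+\varepsilon)}(n-1)^{2}/(4\pi)$ originates. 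A standard maximum-of-Gaussians bound controls the noise term $\max_{i\neq j}\abs{w_{i}-w_{j}}$ and produces the additive $\log(n^{1+\varepsilon}(n-1))$ summand inside the bracket.

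The main obstacle is bounding $\max_{i\neq j}\abs{(\bX\bm{\Delta})_{i}-(\bX\bm{\Delta})_{j}}\leq 2\norm{\bX\bm{\Delta}}{\infty}$, because $\bm{\Delta}$ depends on $\bX$ through the first-stage optimization. I would resolve this by exploiting that $\wt{\bbeta}$, and hence $\bm{\Delta}$, is $O(k)$-sparse from Stage~1, and then invoking a uniform deviation bound of the form $\sup\{\norm{\bX v}{\infty}:\norm{v}{0}\leq Ck,\,\norm{v}{2}\leq 1\}\lsim \sqrt{\log(np)}$ obtained by covering the $k$-sparse unit sphere (metric entropy $\lsim k\log(p/k)\lsim \log(np)$ under $n\gsim k\log p$) and applying a Gaussian-tail union bound. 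Multiplying by the Stage~1 $\ell_{2}$ estimate gives $\norm{\bX\bm{\Delta}}{\infty}\lsim \sigma\sqrt{\log(np)}\bracket{k\sqrt{\log p/n}\vcup h\sqrt{\log n/n}}$, which matches the first summand of the bracket in the stated SNR bound. Plugging these three estimates into the sort-order condition, squaring to recover the $\snr$ form, and taking a union bound over the high-probability events from the three steps produces the claimed success probability $1-2n^{-\varepsilon}$.
\end{proofoutline}
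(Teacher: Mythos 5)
Your overall architecture coincides with the paper's: a robust-Lasso analysis of \eqref{eq:robust_lasso_optim_def} to control the error $\bu\defequal\wt{\bbeta}-\bbetatrue$, followed by a sort-order analysis of the LAP \eqref{eq:robust_lasso_pi_optim}. Your Stage~2 is essentially a re-derivation of the black-box result the paper invokes (Theorem~3(a) of \citet{slawski2017linear}, stated as Lemma~\ref{lemma:permute_recover_slawski}), and your accounting of where the prefactor $n^{2(1+\varepsilon)}(n-1)^2/(4\pi)$ and the additive $2\log(n^{1+\varepsilon}(n-1))$ term come from is correct.

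The genuine gap is in how you control $\|\bX\bu\|_{\infty}$. First, the assertion that $\wt{\bbeta}$ is $O(k)$-sparse does not follow from an RE-based error bound: a Lasso-type solution can have up to $n$ nonzero coordinates, and support control would require a primal--dual witness or irrepresentability argument you have not supplied. Second, even granting sparsity, the uniform bound $\sup\{\|\bX v\|_{\infty}:\|v\|_0\leq Ck,\ \|v\|_2\leq 1\}\lsim\sqrt{\log (np)}$ is false: the metric entropy of the $Ck$-sparse unit sphere is of order $k\log(p/k)$, which is \emph{not} $\lsim\log(np)$ (the hypothesis $n\gsim k\log p$ bounds $k\log p$ by $n$, not by $\log(np)$), and already for a single Gaussian row the supremum equals the square root of the sum of the $Ck$ largest squared entries of that row, which is of order $\sqrt{k\log(p/k)}$. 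Relatedly, multiplying $\sqrt{\log(np)}$ by your $\ell_2$ rate $\sigma\sqrt{k\log p/n}$ would produce $\sqrt{k}\sqrt{\log p/n}$ inside the bracket, not the $k\sqrt{\log p/n}$ you claim to match, so the arithmetic does not close as written. The paper sidesteps all of this with an elementary H\"older step, $\|\bX\bu\|_{\infty}\leq\max_{ij}|\bX_{ij}|\cdot\|\bu\|_1\lsim\sqrt{\log np}\,\|\bu\|_1$, combined with the cone inequality $\|\bu\|_1\lsim\sqrt{k}\|\bu\|_2+(\xilambda/\belambda)\sqrt{h}\|\bv\|_2$ extracted from the optimality conditions of \eqref{eq:robust_lasso_optim_def}; the extra $\sqrt{k}$ incurred by passing to the $\ell_1$ norm is precisely where the factors $k$ and $h$ (rather than $\sqrt{k}$ and $\sqrt{h}$) in the theorem's $\snr$ requirement originate. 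Replacing your covering argument with this H\"older-plus-cone step, which needs no sparsity of $\wt{\bbeta}$, repairs the proof.
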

First, we discuss the $\snr$ requirement.
From the above theorem, we conclude
that
the correct permutation matrix
can be obtained provided that $\log \snr \gsim \log n$, which matches the
mini-max lower bound in
Theorem~\ref{thm:statis_lb} up to some
multiplicative constant.
Then, we consider the maximum allowed number of permuted rows,
i.e., $h \lsim \frac{n}{\log n}$.
Compared with the optimal order
$O(n)$, we experience a loss of
logarithmic term. This is consistent with our parsimonious assumption on the number of permuted
rows, i.e., $h \ll n$.
Moreover, we discuss the minimum
required sample number $n$, which
is of the order $\Omega(k\log p)$.
Notice that this is the same as
the mini-max bound discussed in
Subsection~\ref{subsec:n_lb}.

\subsection{Support Set Detection}
Once we have the correct permutation matrix $\bPitrue$, we can restore
\eqref{eq:sys_mdl} to the classical model and detect the support set
$\supp(\bbetatrue)$ with the Lasso estimator, which is written as
\[
\wh{\bbeta}_{\lasso}=  \
\argmin_{\bbeta}~\dfrac{1}{2n}\big \|\wh{\bPi}^{\rmt}\by -\bX\bbeta\big \|_{2}^2 + \
\lambda_{\lasso(n)}\norm{\bbeta}{1},
\]
where $\wh{\bPi}$ denotes the solution of
\eqref{eq:robust_lasso_pi_optim}.
Then, we detect the support set $\supp(\bbetatrue)$ by
selecting the entries with the first $k$-largest magnitude.
With the standard results concerning the sign consistency of
Lasso estimator, e.g.,~\citet{lounici2008sup}, we can show the
support set can be detected with high probability under the
settings of Theorem~\ref{thm:robust_lasso_permutation}.

\begin{corollary}
\label{corol:robust_lasso_support_set}	
Under the settings of Theorem~\ref{thm:robust_lasso_permutation},
we pick $\lambda_{\lasso(n)}$ in
\eqref{eq:robust_lasso_beta_support} as
$c \sigma \sqrt{{\log p}/{n}}$.
Provided that $\min_{\bbetatrue_i\neq 0} (|\bbetatrue_i|^2/{\sigma^2}) \gsim \frac{\log p}{n}$,
we have
$\sign(\thres(\wh{\bbeta}; k)) = \sign(\bbetatrue)$ hold with probability $1-o(1)$, where
$\thres(\cdot; k)$ selects the entries with the first $k$-largest magnitude
and is defined in \eqref{eq:hard_thres_def}.
\end{corollary}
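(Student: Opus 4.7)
The plan is to reduce the problem to standard sparse recovery by conditioning on the success event of the permutation recovery stage, and then to invoke the classical $\ell_\infty$-consistency of the Lasso to conclude exact support and sign recovery after hard-thresholding.

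First, I would invoke Theorem~\ref{thm:robust_lasso_permutation} to obtain the high-probability event $\mathcal{E} = \{\wh{\bPi} = \bPitrue\}$, which has probability at least $1 - 2n^{-\varepsilon}$ under the stated regime. On $\mathcal{E}$, the working observation becomes $\wh{\bPi}^{\rmt}\by = \bX\bbetatrue + \wh{\bPi}^{\rmt}\bw$, and since $\bw \sim \normdist(\bZero, \sigma^2 \bI)$ is exchangeable, $\wh{\bPi}^{\rmt}\bw$ has the same $\normdist(\bZero, \sigma^2 \bI)$ distribution. So the Lasso program defining $\wh{\bbeta}_{\lasso}$ reduces, on $\mathcal{E}$, to an ordinary sparse linear regression with i.i.d.\ Gaussian design $\bX$ and i.i.d.\ Gaussian noise of variance $\sigma^2$.

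Next, I would apply a standard $\ell_\infty$ oracle inequality for the Lasso (for example the sup-norm bound of \citet{lounici2008sup}), which under $n\gsim k\log p$ and with $\lambda_{\lasso(n)} = c\sigma\sqrt{\log p/n}$ yields
\[
\big\|\wh{\bbeta}_{\lasso} - \bbetatrue\big\|_{\infty} \lsim \sigma \sqrt{\frac{\log p}{n}}
\]
with probability $1 - o(1)$. The restricted-eigenvalue / mutual-incoherence conditions needed for this bound are standard consequences of $\bX$ being an i.i.d.\ Gaussian matrix with $n \gsim k\log p$, so they hold on a further high-probability event; I would combine this with $\mathcal{E}$ via a union bound.

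Finally, I would translate the $\ell_\infty$ bound into exact support and sign recovery via the minimum signal condition. On the intersection of the two high-probability events, every index $i \in \supp(\bbetatrue)$ satisfies $|\wh{\bbeta}_{\lasso,i}| \geq |\bbetatrue_i| - \|\wh{\bbeta}_{\lasso} - \bbetatrue\|_\infty \gsim \sigma\sqrt{\log p/n}$, while every $i \notin \supp(\bbetatrue)$ satisfies $|\wh{\bbeta}_{\lasso,i}| \lsim \sigma\sqrt{\log p/n}$ with a strictly smaller constant. Choosing $c$ appropriately in the hypothesis $\min_{i\in\supp(\bbetatrue)} |\bbetatrue_i|^2/\sigma^2 \gsim \log p/n$ separates the two scales, so selecting the top-$k$ entries by magnitude recovers $\supp(\bbetatrue)$ exactly; the same $\ell_\infty$ bound then forces $\sign(\wh{\bbeta}_{\lasso,i}) = \sign(\bbetatrue_i)$ on this support, giving the claim. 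The main (minor) obstacle is the bookkeeping of the probability budget: the event $\mathcal{E}$ is not independent of the design and noise, so I would phrase everything as union bounds over the events $\{\wh{\bPi} = \bPitrue\}$, the RE/incoherence event for $\bX$, and the Gaussian concentration event for $\bX^{\rmt}\bw$, each of which is shown separately to hold with probability $1-o(1)$.
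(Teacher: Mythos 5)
Your proposal is correct and follows essentially the same route as the paper: the paper gives no separate proof of this corollary beyond noting that, once $\wh{\bPi}=\bPitrue$ holds (Theorem~\ref{thm:robust_lasso_permutation}), the problem reduces to ordinary sparse regression and the sup-norm/sign-consistency bound of \citet{lounici2008sup} with $\lambda_{\lasso(n)}\asymp\sigma\sqrt{\log p/n}$ and the stated beta-min condition yields the claim. Your additional care about handling the dependence between the event $\{\wh{\bPi}=\bPitrue\}$ and $(\bX,\bw)$ via a union bound over unconditional high-probability events is the right way to make this rigorous.
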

This corollary suggests that the support set $\supp(\bbetatrue)$ can be detected with high probability. Compared with Theorem~\ref{thm:robust_lasso_permutation}, Corollary~\ref{corol:robust_lasso_support_set} has one additional
assumption on the smallest magnitude of the non-zero entries in $\bbetatrue$, e.g., $\min_{\bbetatrue_i\neq 0} (|\bbetatrue_i|^2/{\sigma^2}) \gsim \nfrac{\log p}{n}$. Notice that this assumption is quite standard~\citep{lounici2008sup, zhao2006model, wainwright2009sharp} in studying the property of sign consistency.
\begin{remark}
\label{remark:performance_improve}
In~\citet{zhang2021sparse}, we need
\[
\min_{\bbetatrue_i \neq 0}
|\bbetatrue_i| \gsim (1+ k\sqrt{\nfrac{\log p}{n}})\sqrt{\nfrac{\log p}{n}}
 \cdot \bracket{\|\bbetatrue\|_{2}\
\sqrt{h \log n} \vcup \sigma}.
\]
Meanwhile, our estimator improves this requirement to
\[
\min_{\bbetatrue_i \neq 0}
|\bbetatrue_i| \gsim \sigma \sqrt{\frac{ \log p}{n}}.
\]
Compared with~\citet{zhang2021sparse}, our
estimator has a significant improvement. First, our assumption on $\min_{\bbetatrue_i \neq 0}
|\bbetatrue_i|$ is free from the
total energy $\|\bbetatrue\|_2$.
Even after we factor out the impact of
$\|\bbetatrue\|_2$,~\citep{zhang2021sparse}
still requires $\min_{\bbetatrue_i \neq 0}
|\bbetatrue_i| \gsim \frac{\sigma (k \log p)}{n}$ while our estimator
reduces the requirement to
 $\min_{\bbetatrue_i \neq 0}
|\bbetatrue_i| \gsim \sigma \sqrt{\nfrac{ \log p}{n}}$.

\end{remark}

\begin{algorithm}[h]
\caption{Permuted-Lasso Estimator.}
\label{alg:prac_estim}
\begin{algorithmic}[1]
\Statex \textbullet~
\textbf{Input:} observation $\by$, sensing matrix $\bX$, and sparsity number $k$.

\Statex \textbullet~
\textbf{Stage I: Permutation Recovery.}
We pick $\belambda$ and $\xilambda$ as $c_0\sigma \sqrt{\nfrac{\log p}{n}}$
and $c_1 \sigma \sqrt{\nfrac{\log n}{n}}$. We restore
the correspondence information as
\begin{align}
(\wt{\bXi}, \wt{\bbeta}) =~& \argmin_{\bXi, \bbeta}
\frac{1}{2n}\norm{\by - \bX\bbeta - \sqrt{n}\cdot \bXi}{2}^2 + \lambda_{\bXi} \norm{\bXi}{1}
+ \lambda_{\bbeta} \norm{\bbeta}{1}; \label{eq:robust_lasso_optim_def}\\
\wh{\bPi} =~& \argmax_{\bPi}
\langle \by,\bPi \bX\wt{\bbeta}\rangle.
\label{eq:robust_lasso_pi_optim}
\end{align}

\Statex \textbullet~
\textbf{Stage II: Support Set Detection.}
\noindent
With the permutation matrix $\wh{\bPi}$ in
\eqref{eq:robust_lasso_pi_optim},
we pick $\lambda_{\lasso(n)}$ in
\eqref{eq:robust_lasso_beta_support} as
$c_3 \sigma \sqrt{{\log p}/{n}}$ and
detect the support set by first $(i)$ computing
$\wh{\bbeta}_{\lasso}$ as
\begin{align}
\label{eq:robust_lasso_beta_support}
\wh{\bbeta}_{\lasso} =  \
\argmin_{\bbeta}~\dfrac{1}{2n}\big \|\wh{\bPi}^{\rmt}\by -\bX\bbeta\big \|_{2}^2 + \
\lambda_{\lasso(n)}\norm{\bbeta}{1},
\end{align}
and then $(ii)$ performing hard-thresholding to $\wh{\bbeta}_{\lasso}$,
which is
\begin{align}
\label{eq:hard_thres_def}
(\hardthres(\wh{\bbeta}_{\lasso}; k))_i  \defequal \left\{
\begin{aligned}
& (\wh{\bbeta}_{\lasso})_i,~\textup{if}~|(\wh{\bbeta}_{\lasso})_i|\textup{ is among the } k \textup{-largest absolute entries}; \\
&0,~\textup{otherwise}.	
\end{aligned}\right.
\end{align}

\Statex \textbullet~
\textbf{Output:} we return
$(\wh{\bPi}, \hardthres(\wh{\bbeta}_{\lasso}; k))$.
\end{algorithmic}
\end{algorithm}

In the end, we will briefly discuss the potential
methods of recovering
$(\bPitrue, \supp(\bbetatrue))$. Notice that Algorithm~\ref{alg:prac_estim}
only consists of one step of permutation recovery and support
set detection. One natural way for the performance improvement is to iteratively perform the permutation recovery and the support set detection. In addition, we find that $\wt{\bXi}$ in
\eqref{eq:robust_lasso_optim_def} is largely ignored.
Since it contains information about $(\bI - \bPitrue)\bX\bbetatrue$, in other words, it has information
about the permutation matrix, we can use it to refine
the reconstructed permutation.

\newpage

\section{Simulations}\label{sec:simul}

This section presents the numerical results,
where the permutation matrix $\bPitrue$ and the support
set $\supp(\bbetatrue)$ are reconstructed via
Algorithm~\ref{alg:prac_estim}.
The regularizer
coefficients, i.e., $\belambda, \xilambda$, and
$\lambda_{\lasso(n)}$, are all picked as $2.0$.
First, we consider the Gaussian setting, where each
entry $\bX_{ij}$ are i.i.d. standard normal random variables,
i.e., $\normdist(0, 1)$. Moreover, we extend it to the
setting of sub-gaussian distributions, where
$\bX_{ij}$ are i.i.d. sub-gaussian random variables, to be more specific, $\bX_{ij}$ are uniformly distributed within the region $[-1, 1]$, namely, $\bX_{ij}\iid \Unif[-1, 1]$.
\par
We evaluate the performance
in terms of the ratio $\nfrac{\log \snr}{\log n}$, which is widely used in the study of permuted linear regression.
We only plot the correct rate for the permutation recovery,
since the support set detection in \eqref{eq:robust_lasso_beta_support} and \eqref{eq:hard_thres_def} seldom makes any mistake, even with a wrong permutation matrix $\wh{\bPi}$ returned in
\eqref{eq:robust_lasso_pi_optim}.

\subsection{Impact of sparsity number}

This subsection studies the impact of sparsity number $k$. We
fix the signal length $p$ and the permuted row number
$h$ as $500$ and $20$, respectively.
We let the sample number $n\in \set{180, 200, 220}$ and
vary the sparsity number
$k$ within the set $\set{5, 10, 20}$. The numerical
results are put in Figure~\ref{fig:k_impact}.

\vspace{0.1in}\noindent
\textbf{Discussion.}
First, we discuss the Gaussian setting.
From the curves in Figure~\ref{fig:k_impact},
we confirm the correctness of Theorem~\ref{thm:robust_lasso_permutation}, which
suggests that the correct permutation matrix can be obtained once $\log \snr \gsim \log n$.
In addition, we notice that the correct permutation reconstruction
requires a larger $\snr$ with an increasing
sparsity number $k$. For example, we can obtain the ground-truth permutation
matrix with $\log \snr = 5.5\log n$
when $(n, p, h, k)=(180, 500, 20, 5)$. When the sparsity number $k$ increases to $20$, the requirement
for the correct permutation recovery increases to
$\log \snr > 6 \log n$. Similar phenomena can be observed
for other settings as well. Second, we discuss the
uniform distribution setting. Numerical results show
a similar behavior as that of the Gaussian setting
and suggest that our estimator in
Algorithm~\ref{alg:prac_estim} can work beyond the
setting in Theorem~\ref{thm:robust_lasso_permutation}.

\begin{figure}[t!]

\centering

\mbox{\hspace{-0.15in}
\includegraphics[width = 2.75in]{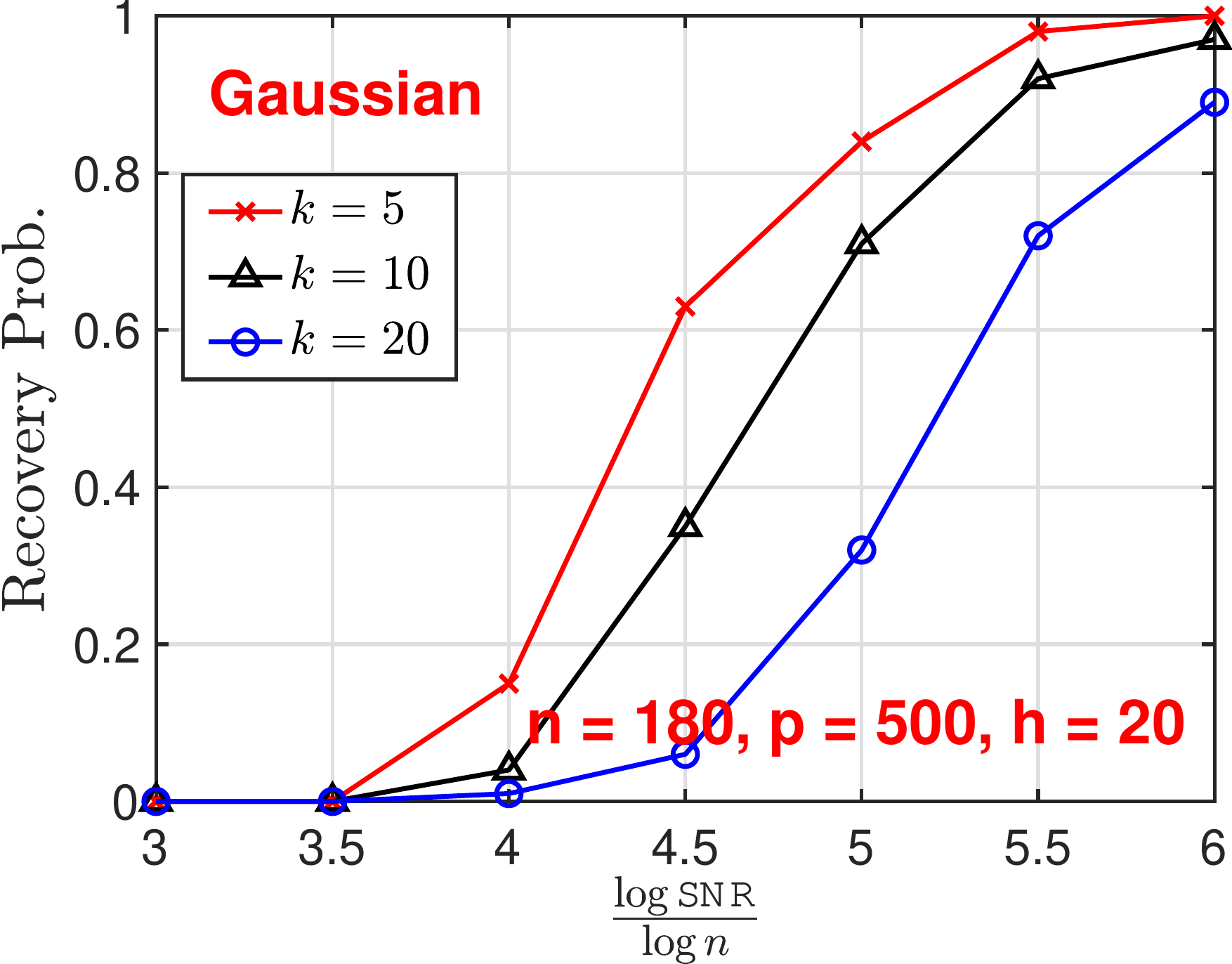}
\includegraphics[width = 2.75in]{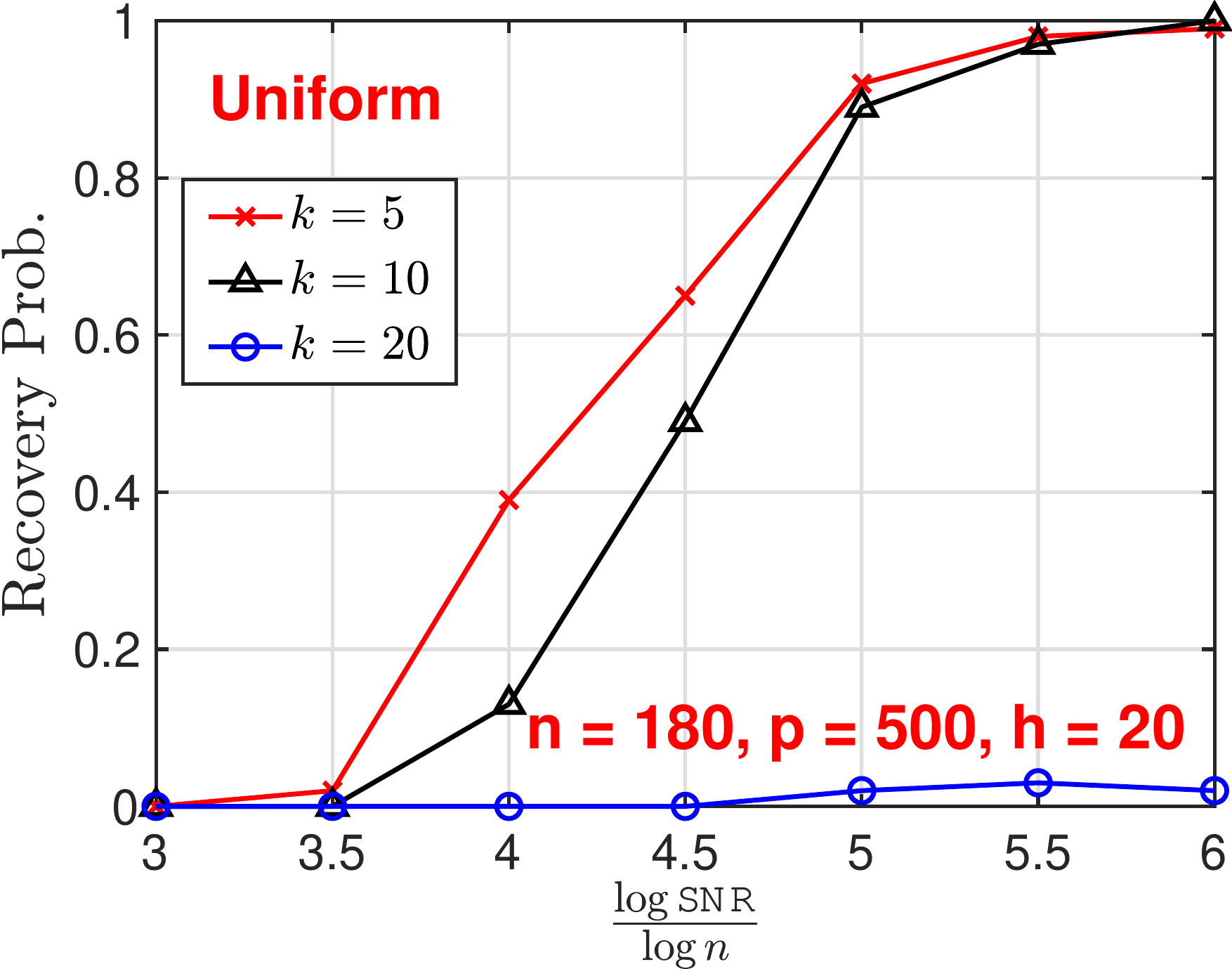}
}

\mbox{\hspace{-0.15in}

\includegraphics[width = 2.75in]{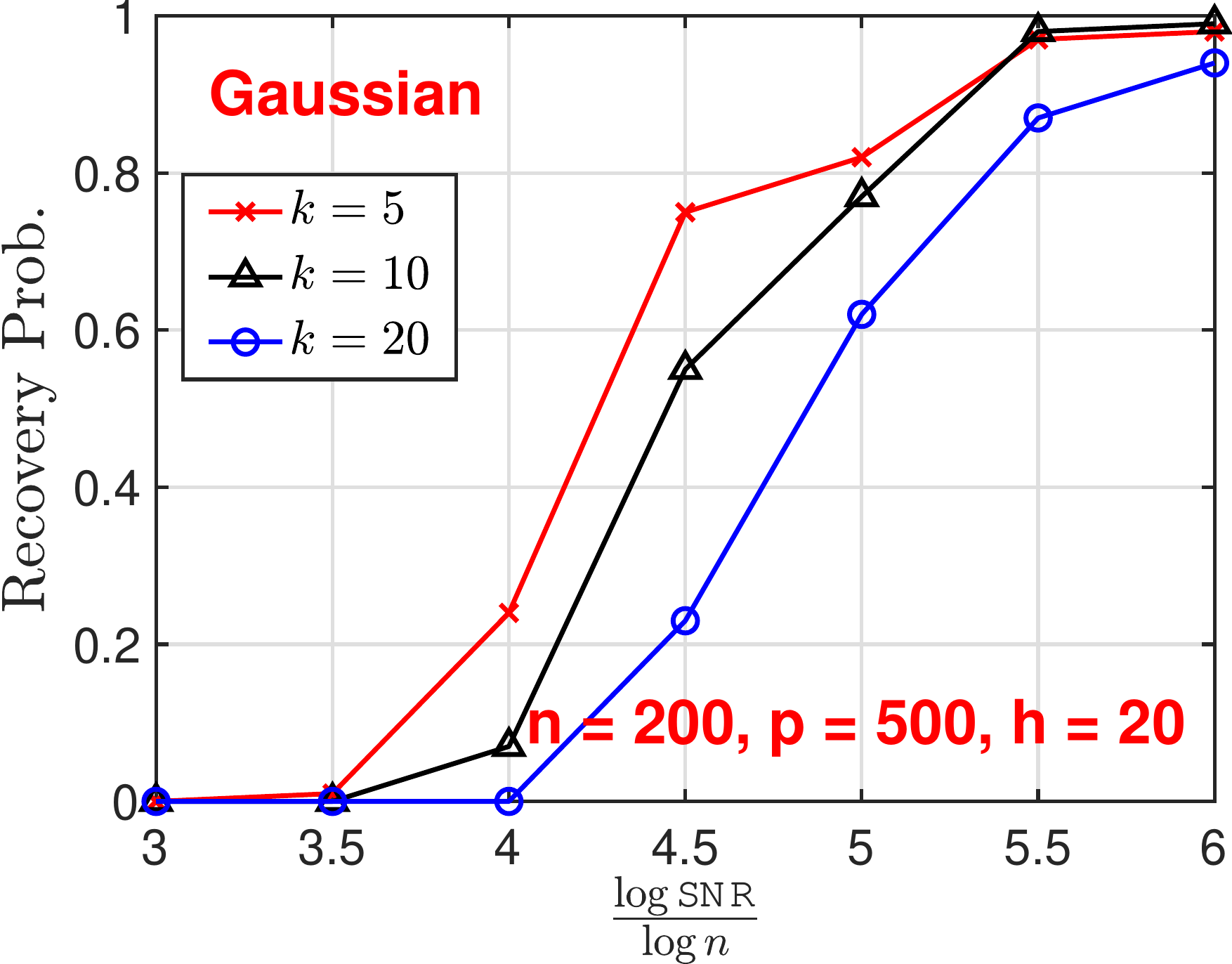}
\includegraphics[width = 2.75in]{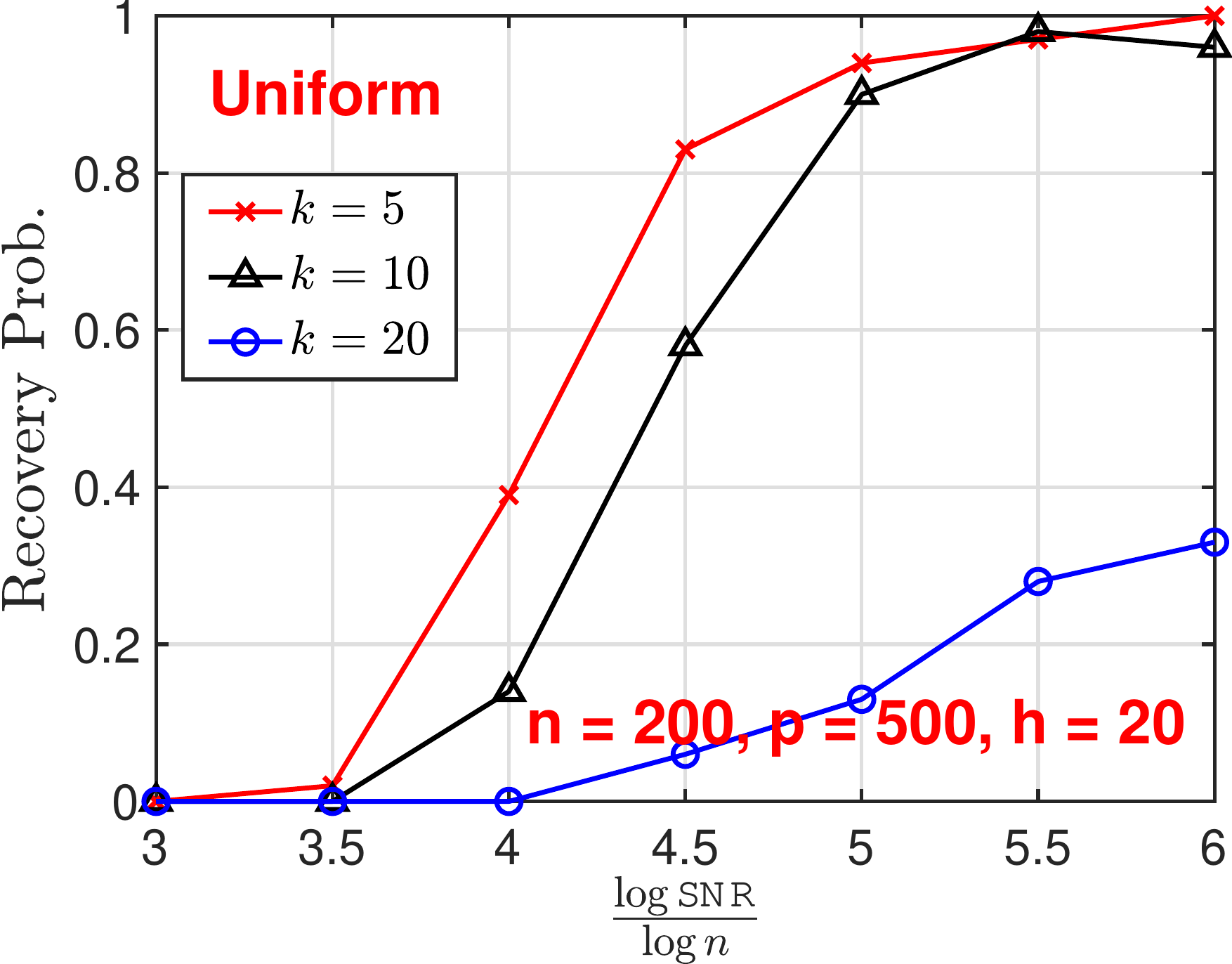}
}

\mbox{\hspace{-0.15in}
\includegraphics[width = 2.75in]{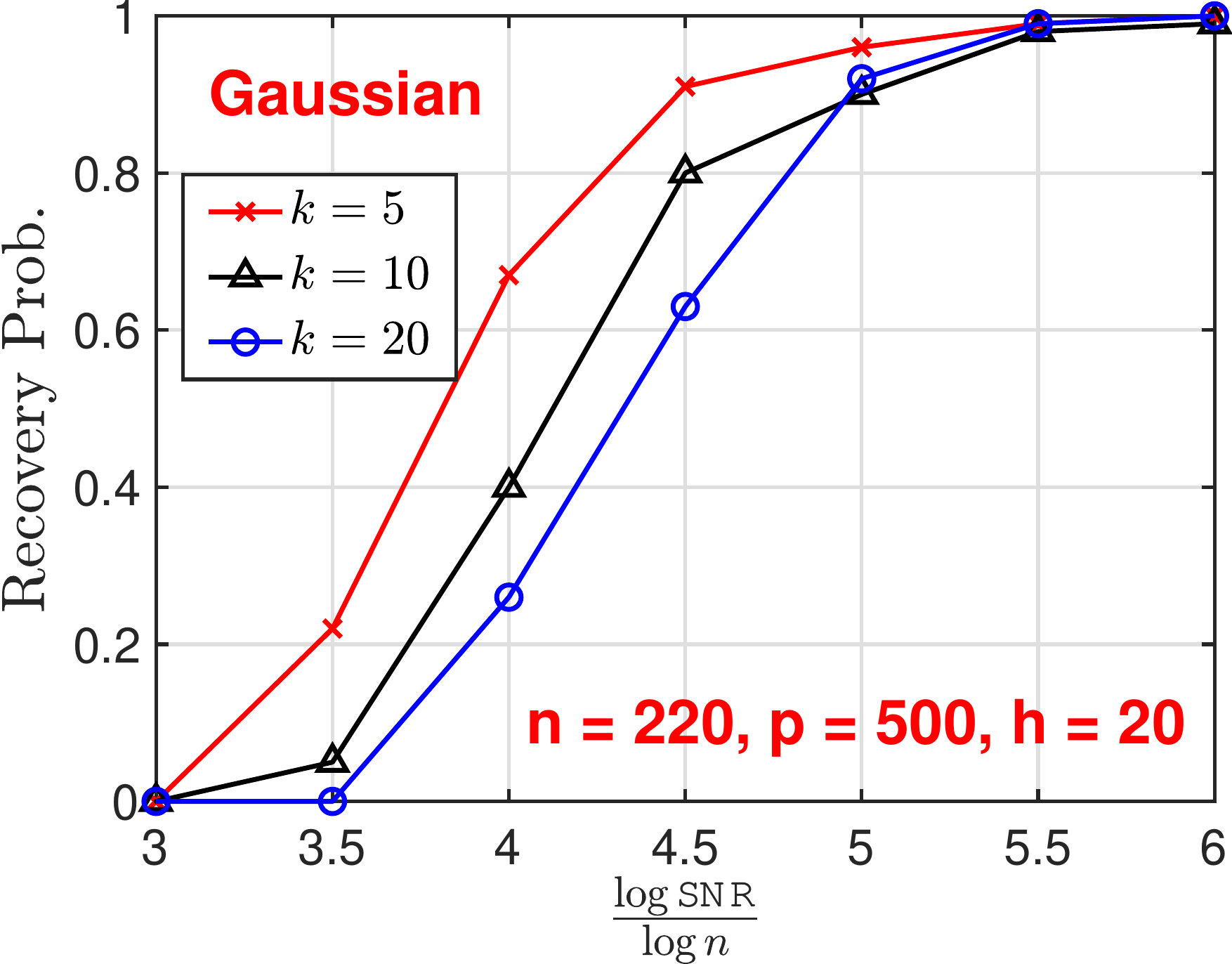}

\includegraphics[width = 2.75in]{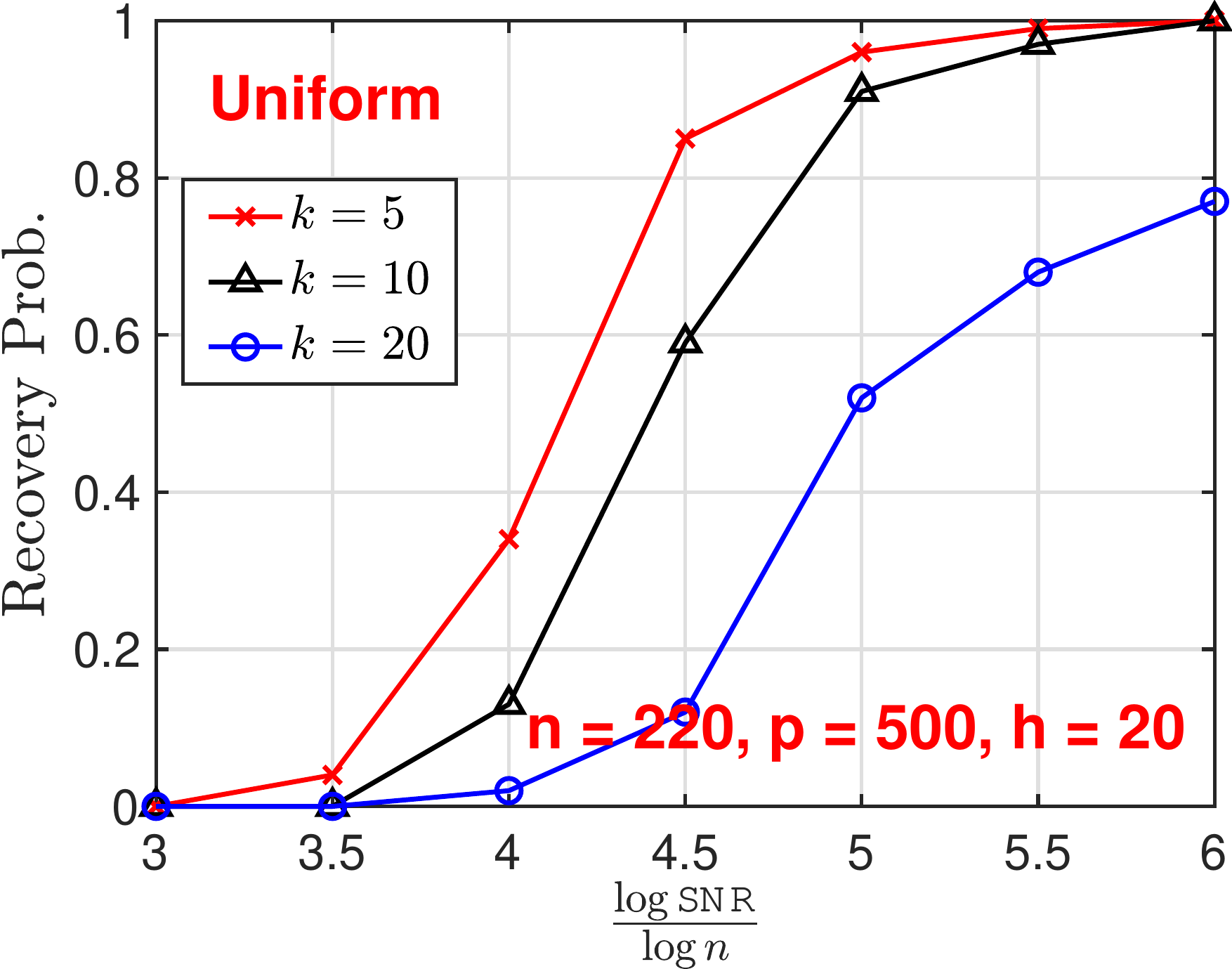}
}
\vspace{-0.05in}

\caption{Simulated permutation recovery rate $\Prob(\wh{\bPi} = \bPitrue)$ with
$n = \{180, 200, 220\}$, $p = 500$, $h = 20$, and $k = \{5, 10, 20\}$, w.r.t. $\nfrac{\log \snr}{\log n}$.
(\textbf{Left Panel}) We have $\bX_{ij}$ be i.i.d. normal random variables, i.e., $\bX_{ij}\iid \normdist(0, 1)$;
(\textbf{Right Panel}) We have $\bX_{ij}$ be i.i.d. sub-gaussian random variables, to be more specific, $\bX_{ij} \iid \Unif[-1, 1]$.
}
\label{fig:k_impact}\vspace{0.2in}
\end{figure}
\clearpage

\begin{figure}[!ht]
\centering

\mbox{\hspace{-0.15in}
\includegraphics[width = 2.75in]{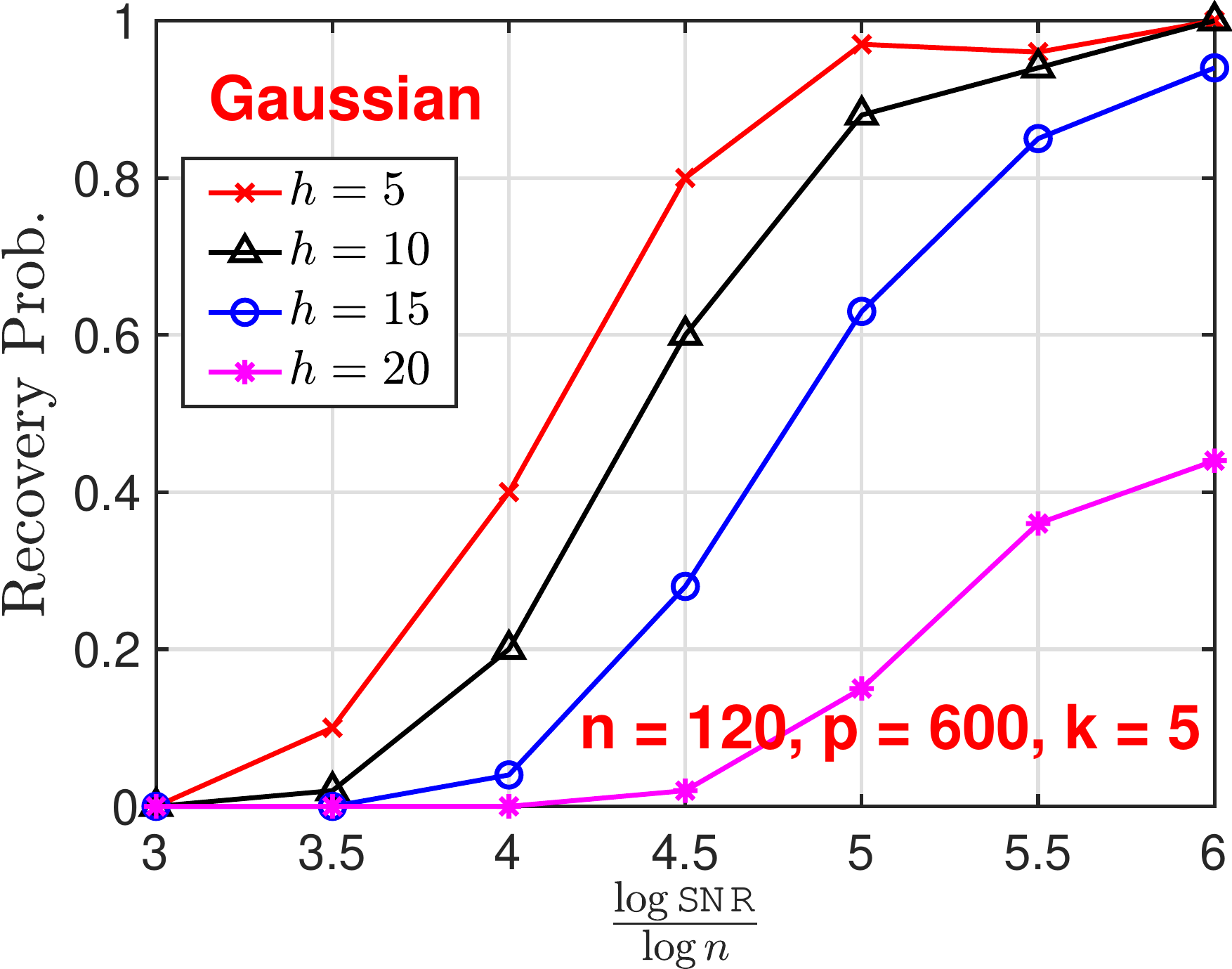}
\includegraphics[width = 2.75in]{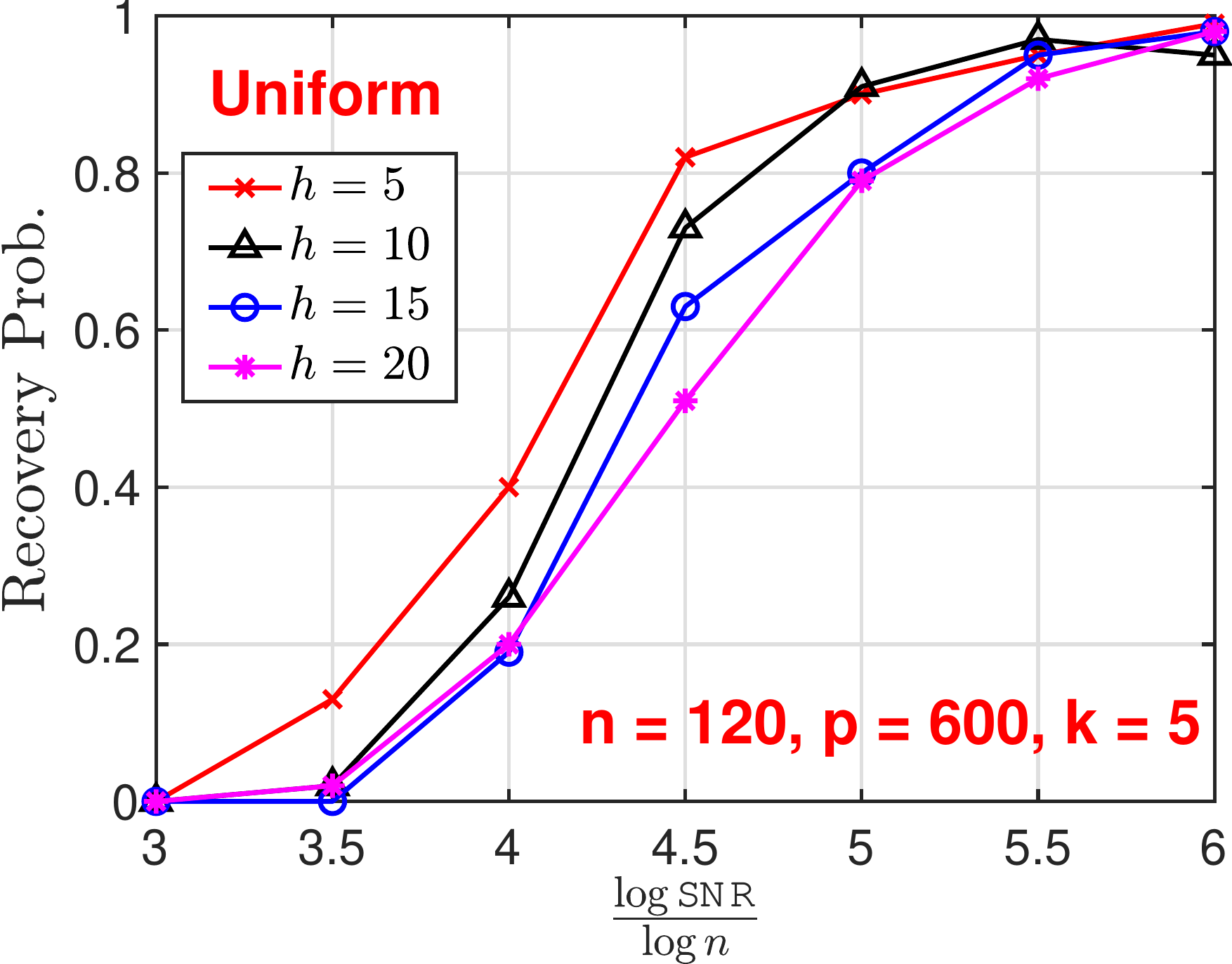}
}

\mbox{\hspace{-0.15in}
\includegraphics[width = 2.75in]{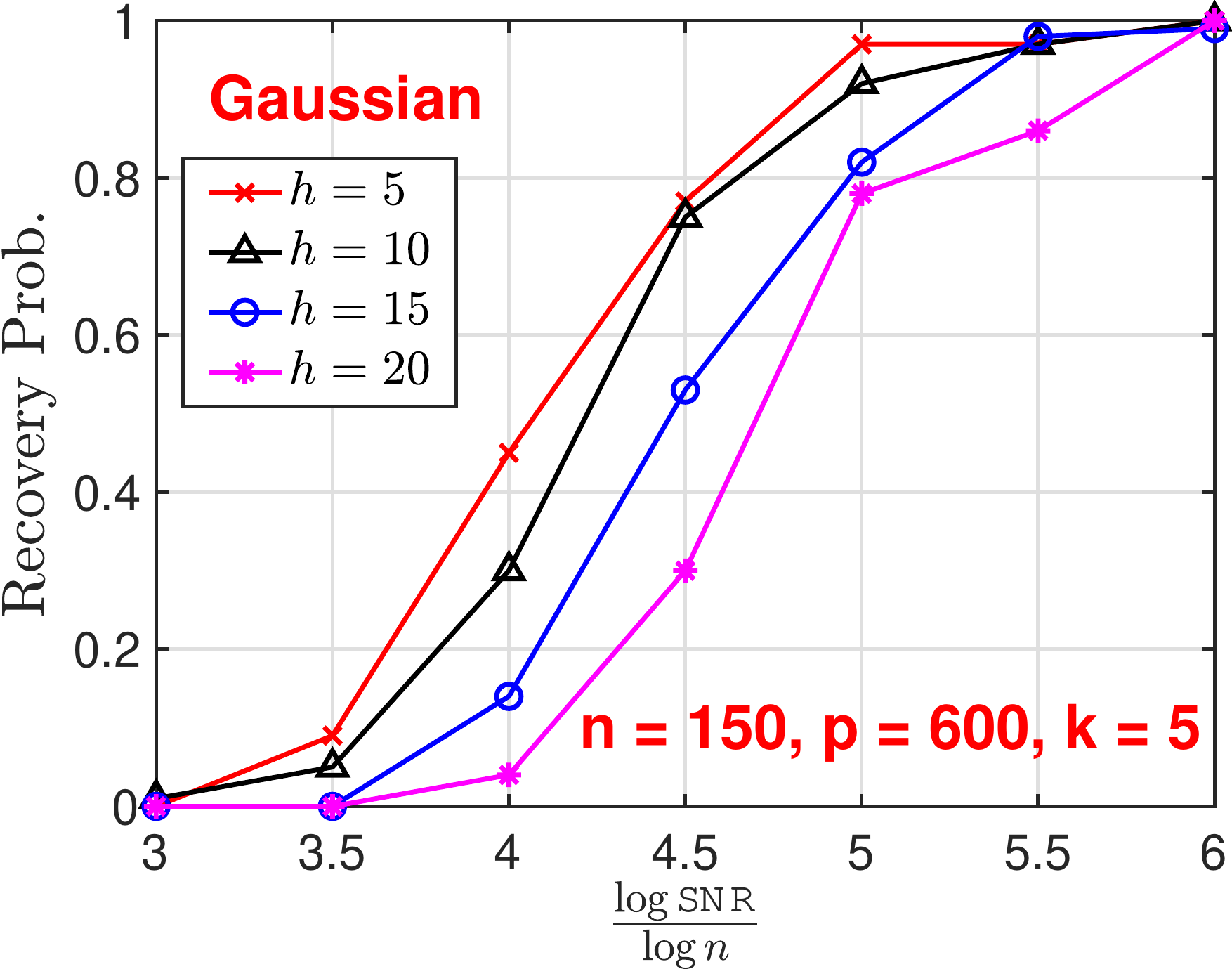}
\includegraphics[width = 2.75in]{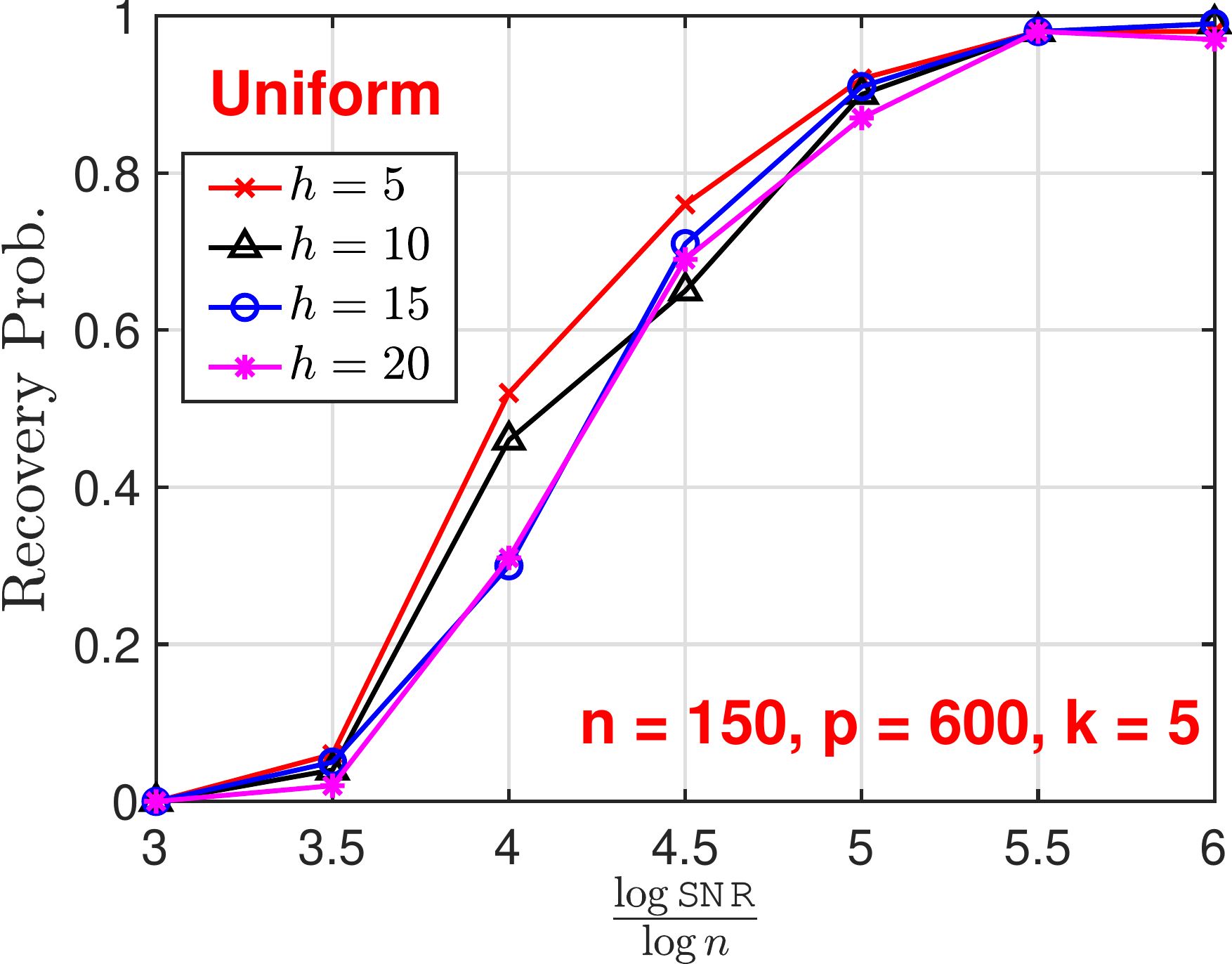}

}

\mbox{\hspace{-0.15in}
\includegraphics[width = 2.75in]{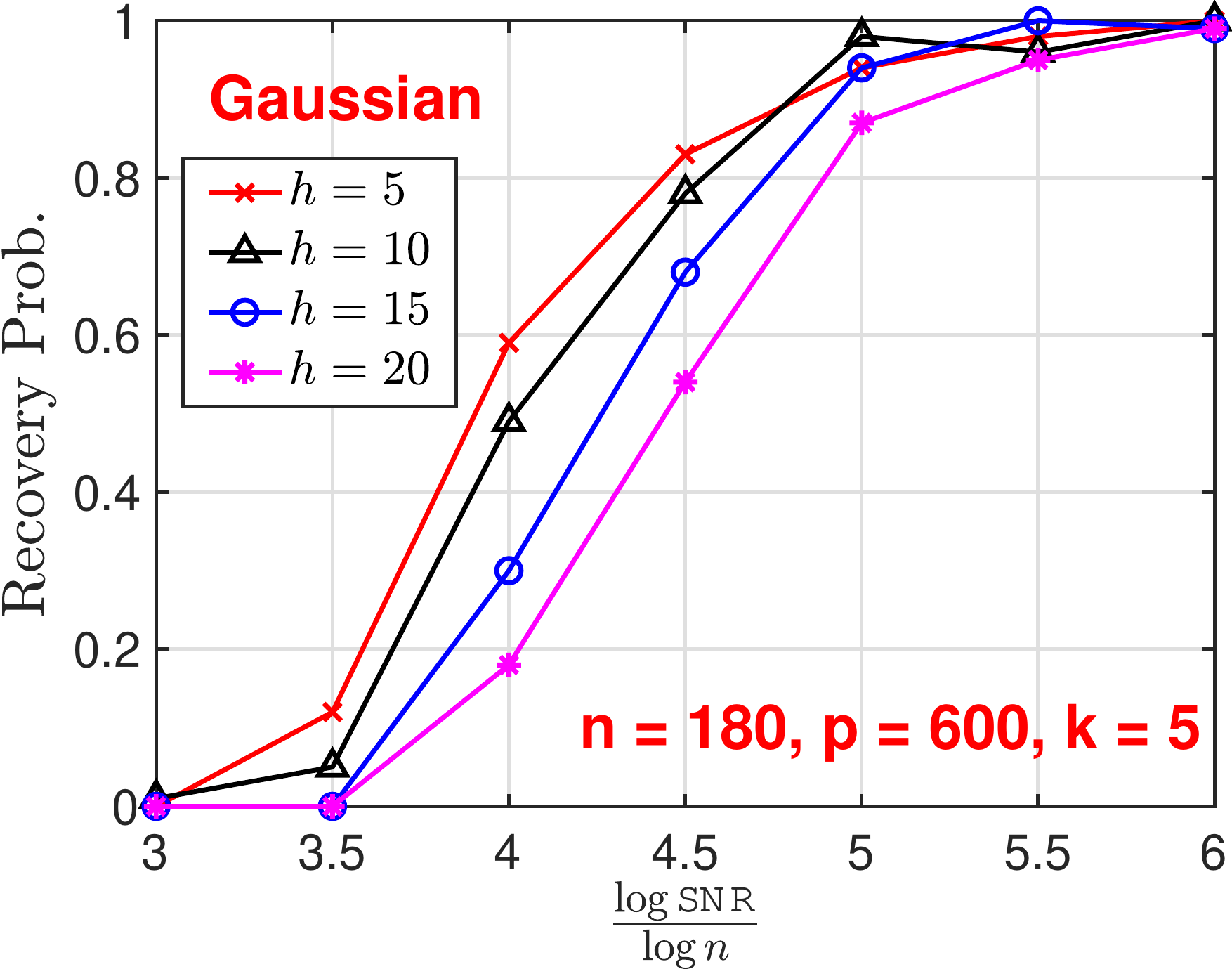}
\includegraphics[width = 2.75in]{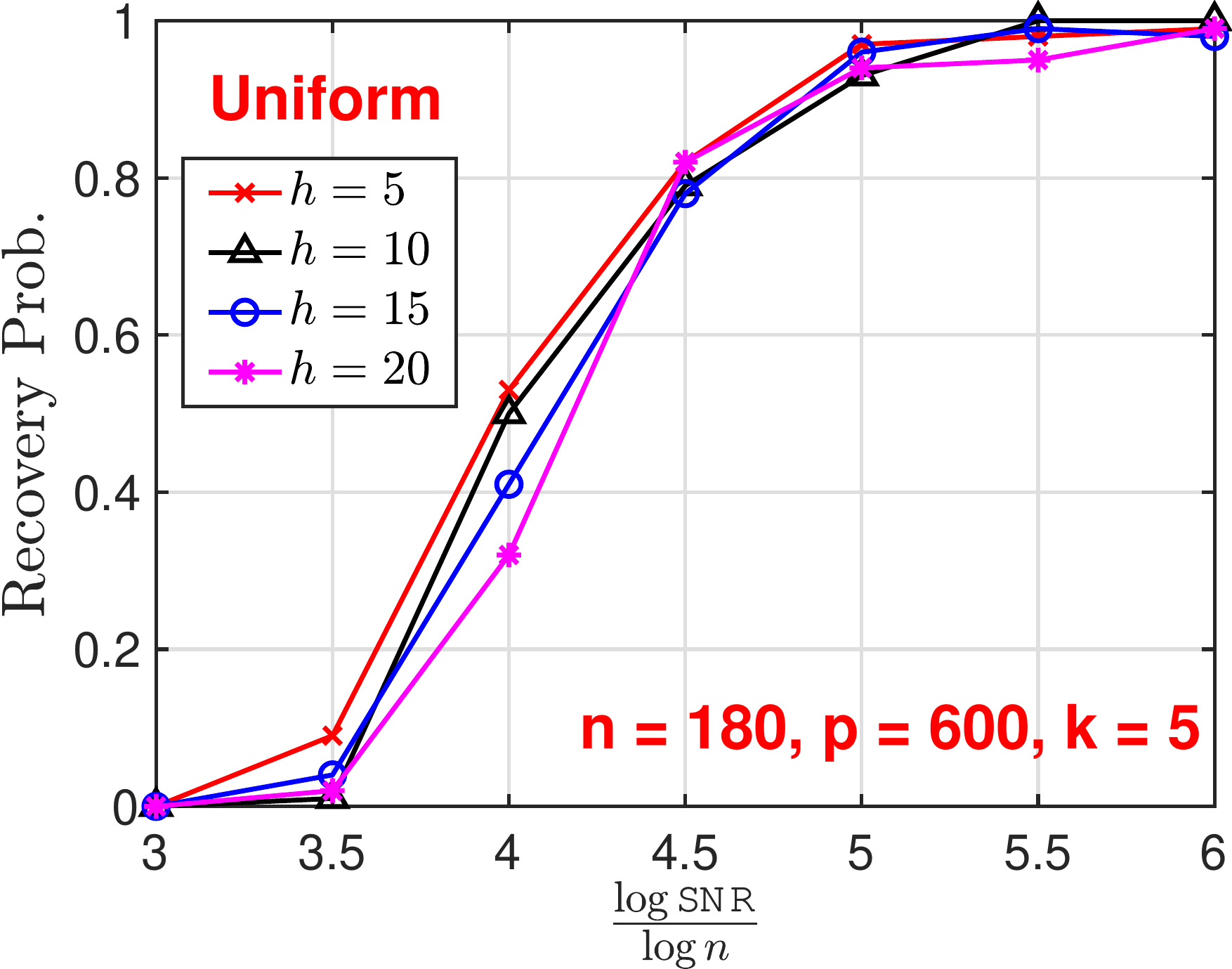}
}

\vspace{-0.05in}

\caption{Simulated permutation recovery rate $\Prob(\wh{\bPi} = \bPitrue)$ with
$n = \{120, 150, 180\}$, $p = 600$, $k = 5$, and $h = \{5, 10, 15, 20\}$, w.r.t. $\nfrac{\log \snr}{\log n}$.
(\textbf{Left Panel}) We have $\bX_{ij}$ be i.i.d. normal random variables, i.e., $\bX_{ij}\iid \normdist(0, 1)$;
(\textbf{Right Panel}) We have $\bX_{ij}$ be i.i.d. sub-gaussian random variables, to be more specific, $\bX_{ij} \iid \Unif[-1, 1]$.
}
\label{fig:h_impact}
\end{figure}

\subsection{Impact of permuted row number}

We investigate the impact of the permuted row number $h$ on the
simulated permutation recovery rate $\Prob(\wh{\bPi} = \bPitrue)$. We fix the signal length $p$ and
sparsity number $k$ as $600$ and $5$, respectively.
We let the sample number $n = \{120, 150, 180\}$ and
set the permuted row number $h\in \{5, 10, 15, 20\}$. The experiment results are presented
in Figure~\ref{fig:h_impact}.

\vspace{0.1in}\noindent
\textbf{Discussion.}
We notice that the permutation recovery becomes
more difficult, in other words, requires a larger
$\snr$, with an increasing number of permuted rows.
Under the Gaussian setting $(n, p, h, k) = (120, 600, 5, 5)$, we can obtain
the ground-truth $\bPitrue$
when $\log \snr \approx 5\log n$. When $h$ increases to
$20$, the requirement on $\snr$ is strengthened to
$\log \snr > 6\log n$.
We believe that this conclusion should hold universally.
However, numerical experiments do suggest that the performance difference
becomes less distinguishable with a higher $n/p$ ratio.

\section{Conclusion}\label{sec:conclusion}

We have studied sparse recovery
with shuffled labels.
First, we establish the statistical lower bounds
for both the sample number $n$ and $\snr$.
For the sample number $n$, by exploiting the sparsity of signals,
we manage to reduce the required sample number from
$n \geq 2p$ to the order of $\Omega(k\log p)$.
For $\snr$, we have a marginal increase
from  $\log\snr\gsim \log n$
to $\log\snr\gsim \log n + \nfrac{k}{n}\log(\frac{ep}{k})$.
Then, we present an exhaustive-search based estimator to confirm the
tightness of the above bounds.
Afterwards, we propose a practical estimator
and show they can yield the correct
 $(\bPitrue, \supp(\bbetatrue))$ under mild
  conditions.
Simulations confirm our theorems and suggest
that large sparsity number and Hamming distance
require more samples and stronger signal energy for
correct reconstruction
of $(\bPitrue, \supp(\bbetatrue))$.

\bibliographystyle{plainnat}
\bibliography{refs_scholar}

\begin{appendices}
\section{Proof of Statistical Lower Bound}

\subsection{Proof of Theorem~\ref{thm:statis_lb}}
\label{subsec:code_theory_explan}

\begin{proof}
To begin with, we assume $\bbetatrue \in \set{0, 1}^p$
and place a uniform distribution prior on
$\bPitrue$ and $\supp(\bbetatrue)$.
Then, we notice the relation
\begin{align}
\label{eq:minimax_error_prob_ub}
& \sup_{\bPitrue, \bbetatrue} \Expc_{\bX,\bw} \Ind\big[(\bPitrue, \supp(\bbetatrue)) \neq (\wh{\bPi}, \supp(\wh{\bbeta}))\big] \notag \\
\geq~& \Prob_{\bX,\bw, \bPitrue, \supp(\bbetatrue)}\big[(\bPitrue, \supp(\bbetatrue)) \neq (\wh{\bPi}, \supp(\wh{\bbeta}))\big] \defequal \vartheta,
\end{align}
where $\Expc_{\bX,\bw}(\cdot)$ denotes the expectation w.r.t.
$\bX$ and $\bw$, and $\Prob_{\bX,\bw, \bPitrue, \supp(\bbetatrue)}$ puts
uniform prior on $\bPitrue$ and $\supp(\bbetatrue)$ as well.
Since \eqref{eq:minimax_error_prob_ub} holds
universally, we can safely add
$\inf_{\wh{\bPi}, \wh{\bbeta}}$ to the left-hand side in~\eqref{eq:minimax_error_prob_ub} and
complete the proof. In the following context,
we lower bound the error probability
by adapting the techniques used in proving the
Fano's inequality in Theorem~$2.10.1$ in~\citet{cover2012elements}.

Denote $\entH(\cdot)$ as the entropy, while
$\entI(\cdot; \cdot)$ as the mutual information.
With the Fano's method as illustrated in~\citet{cover2012elements},
we would like to lower bound the error
probability $\Prob((\bPitrue, \supp(\bbetatrue)) \neq (\wh{\bPi}, \supp(\wh{\bbeta})))$ as
\begin{align}
& \entH(\bPitrue, \supp(\bbetatrue)) =
\entH\bracket{\bPitrue, \supp(\bbetatrue)~|~\bX}
\label{eq:statis_lb_entropy} \notag \\
=~& \entH\big(\bPitrue, \supp(\bbetatrue)~|~\bX, \wh{\bPi}, \supp(\wh{\bbeta})\big) + \entI(\bPitrue, \supp(\bbetatrue); \wh{\bPi}, \supp(\bbetatrue)~|~\bX) \notag \\
\stackrel{\cirone}{\leq}~&
\entH\big(\bPitrue, \supp(\bbetatrue)|\wh{\bPi}, \supp(\wh{\bbeta})\big) + \entI(\bPitrue, \supp(\bbetatrue); \wh{\bPi}, \supp(\wh{\bbeta})|\bX) \notag \\
\stackrel{\cirtwo}{\leq}~&
\entH\big(\bPitrue, \supp(\bbetatrue)~|~\wh{\bPi}, \supp(\wh{\bbeta})\big) + \
\entI\bracket{\bPitrue, \supp(\bbetatrue); \by~|~\bX} \notag \\
\stackrel{\cirthree}{\leq}~&1 + \log\big(|\pset_n|\times \binom{p}{k}\big)\vartheta
+ \entI\bracket{\bPitrue, \supp(\bbetatrue);\by|\bX},
\end{align}
where $\cirone$ is because of the property
such that conditioning reduces entropy~\cite[Eq.~(2.157)]{cover2012elements},
$\cirtwo$ is due to
the fact $(\bPitrue, \supp(\bbetatrue)) \rightarrow \
\by \rightarrow (\wh{\bPi}, \supp(\wh{\bbeta}))$ forms a Markov chain and
the data-processing inequality
\cite[Theorem~2.8.1]{cover2012elements}; and $\cirthree$ is a direct
consequence of Fano's inequality~\cite[Theorem~$2.10.1$]{cover2012elements}.
Exploiting the
independence between $\bPitrue$ and $\bbetatrue$,
we have
\[
\hspace{-0.02in}\entH(\bPitrue, \supp(\bbetatrue)) \hspace{-0.02in}=\hspace{-0.02in}\entH(\bPitrue) + \entH(\supp(\bbetatrue)) =
\log\abs{\pset_n} + \log{p\choose k}.
\]
Combing \eqref{eq:statis_lb_entropy}
with Lemma~\ref{lemma:statis_infor_lb} then
complete the proof.
\end{proof}

\newpage

\subsection{Proof of Theorem~\ref{thm:statis_approx_lb}}

\begin{proof}
We assume $\bPitrue$ is uniformly distributed over the
set $\pset_n$, which corresponds to the case where
no prior knowledge about $\bPitrue$ is unavailable.
First, we define $\calE \defequal \Ind\big\{\dH(\wh{\bPi}, \bPitrue) +\dH\bracket{\supp(\bbetatrue), \supp(\wh{\bbeta})}  \geq \mathsf{D}\big\}$,
which indicates the failure of approximate recovery of $\bPitrue$.
We give a roadmap before going into the details
\begin{itemize}
\item
\textbf{Step I:} We consider
the conditional  entropy
$\entH(\calE, \bPitrue, \supp(\bbetatrue)|~\wh{\bPi},  \supp(\wh{\bbeta}), \by, \bX)$ and prove
\[
& \entH(\calE, \bPitrue, \supp(\bbetatrue)~|~\wh{\bPi}, \supp(\wh{\bbeta}), \by, \bX)
= \entH(\bPitrue, \supp(\bbetatrue)~|~\by, \bX).
\]

\item
\textbf{Step II:}
We show that
\[
&\entH(\calE, \bPitrue, \supp(\bbetatrue)~|~\wh{\bPi}, \supp(\wh{\bbeta}), \by, \bX) \leq\log 2 + \entH(\bPitrue, \supp(\bbetatrue)) - \Prob(\calE = 0)\log \zeta,
\]
\par \noindent
where $\zeta$ is defined in \eqref{eq:approx_zeta_def}.

\item
\textbf{Step III:}
Combining the above two steps together, we upper-bound $\Prob(\calE = 0)$ as
\[
\Prob\bracket{\calE = 0} \leq~& \
\dfrac{\log 2 + \entI\bracket{\bPitrue, \supp(\bbetatrue); \by, \bX }}{\log\zeta}
\stackrel{\cirone}{=}
\dfrac{\log 2 + \entI\bracket{\bPitrue, \supp(\bbetatrue); \by~|~\bX }}{\log \zeta},
\]
where $\cirone$ is because $(\bPitrue, \supp(\bbetatrue))$
and $\by$ are independent given $\bX$.
Invoking Lemma~\ref{lemma:statis_infor_lb}, we
complete the proof.

\end{itemize}

Then we present the computational details.

\vspace{0.1in}\noindent
\textbf{Step I.}
We expand $H(\calE, \bPitrue, \supp(\bbetatrue)~|~\wh{\bPi}, \supp(\wh{\bbeta}), \by, \bX)$ via the
chain rule~\cite[Theorem~2.5.1]{cover2012elements} as
\[
&\entH(\calE, \bPitrue, \supp(\bbetatrue)~|~\wh{\bPi}, \supp(\wh{\bbeta}), \by, \bX) \\
=~& \entH(\bPitrue, \supp(\bbetatrue)~|~\wh{\bPi}, \supp(\wh{\bbeta}), \by, \bX) +
\entH(\calE~|~\bPitrue, \supp(\bbetatrue), \wh{\bPi},\supp(\wh{\bbeta}), \by, \bX)\\
=~&\entH(\bPitrue, \supp(\bbetatrue)~|~\by, \bX),
\]
where in the last equation we have used that $\cirone$
$\calE$ is deterministic
conditional on $\bPitrue, \supp(\bbetatrue), \wh{\bPi}, \by, \bX$,
and $\cirtwo$ $(\bPitrue, \bbetatrue)$ and $(\wh{\bPi}, \wh{\bbeta})$ are independent given $\bX$ and $\by$. \par

\vspace{0.1in}\noindent
\textbf{Step II.}
Define
$\BB\Bracket{\bracket{\bPitrue, \bbetatrue};\mathsf{D}}$
as
\[
\hspace{-0.05in}
\BB\Bracket{\bracket{\bPitrue, \bbetatrue}; \mathsf{D}} \hspace{-0.05in} \defequal\hspace{-0.07in} \
\set{
(\bPitrue, \supp(\bbetatrue))\Bigg|
\begin{aligned}
&~\dH(\wh{\bPi}, \bPitrue) = i, \\
&~\dH(\supp(\bbetatrue),\supp(\wh{\bbeta}))= j, \\
&\St~ i+ j \leq \mathsf{D}
\end{aligned}
}
\]
which denotes
the set of all possible pairs $(\bPitrue, \supp(\bbetatrue))$
given $\calE = 0$.
Easily we can verify that
its cardinality
can be upper bounded by
\[
\abs{\BB\Bracket{\bracket{\bPitrue, \bbetatrue};\mathsf{D}}}\leq \
\sum_{i=1}^{\mathsf{D}}\sum_{j=1}^{(\mathsf{D}-i)\vcap k}{n\choose i}i!\cdot {k \choose j}{p-k\choose j}.
\]
Then we expand
$H(\calE, \bPi^{*}|~\wh{\bPi}, \by, \bX)$
as
\[
& \entH\big(\calE, \bPitrue, \supp(\bbetatrue)~|~\wh{\bPi}, \supp(\wh{\bbeta}), \by, \bX\big) \\
=~& \entH(\calE|\wh{\bPi}, \supp(\wh{\bbeta}), \by, \bX) + \entH(\bPitrue, \supp(\bbetatrue)|\calE, \wh{\bPi}, \supp(\wh{\bbeta}), \by, \bX)\notag  \\
\stackrel{\cirtwo}{\leq} ~&\log 2 + \entH\big(\bPitrue, \supp(\bbetatrue)~|~\calE, \wh{\bPi}, \supp(\wh{\bbeta}), \by, \bX\big) \notag \\
\stackrel{\cirthree}{\leq}~& \log 2 + \
\Prob\left(\calE = 1\right) \entH(\bPitrue, \supp(\bbetatrue)|\calE = 1,\wh{\bPi},  \supp(\wh{\bbeta})) +\Prob\left(\calE = 0\right)\entH(\bPitrue, \supp(\bbetatrue)|\calE = 0,\wh{\bPi},  \supp(\wh{\bbeta})) \notag \\
\stackrel{\cirfour}{\leq}~& \log 2 + \left[1 - \Prob\left(\calE = 0\right)\right] \entH(\bPitrue, \supp(\bbetatrue)) + \Prob\left(\calE = 0\right)\log\Bracket{\sum_{i=1}^{\mathsf{D}} \sum_{j=1}^{(\mathsf{D}-i)\vcap k}\
\dfrac{n!}{(n-i)!}{k\choose j}{p-k \choose j} } \notag \\
\stackrel{\cirfive}{=} ~& \log 2 + \entH(\bPitrue, \supp(\bbetatrue)) -
\Prob\left(\calE = 0\right)\log \zeta,
\]
where in $\cirtwo$ we use the fact that $\calE$ is binary and hence
$\entH(\calE|\cdot) \leq \log 2$,
in $\cirthree$ we use the property that
conditioning reduces entropy~\citep[Equation~($2.157$)]{cover2012elements},
in $\cirfour$ we use the property
\[
\entH\big(\bPitrue, \supp(\bbetatrue)|\calE = 0,\wh{\bPi},  \supp(\wh{\bbeta})\big)\leq \log\abs{\BB\Bracket{\bracket{\bPitrue, \bbetatrue};\mathsf{D}}},
\]
and in $\cirfive$ we use the fact that $\entH(\bPitrue, \supp(\bbetatrue)) = \log\bracket{n!\cdot {p\choose k}}$.
\end{proof}

\subsection{Supporting lemmas for Section~\ref{sec:statis_lb}}
\begin{lemma}
\label{lemma:statis_infor_lb}
Assume that $\bbetatrue\in \set{0, 1}^p$.
we have
\[
\entI\bracket{\bPitrue, \supp(\bbetatrue); \by|\bX} \leq \
\dfrac{n}{2}\log\bracket{1 + \snr},
\]
where $\entI(\cdot; \cdot)$ denotes the mutual information.
\end{lemma}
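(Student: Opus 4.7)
The plan is to expand the mutual information as
\[
\entI(\bPitrue, \supp(\bbetatrue); \by \mid \bX) = \enth(\by \mid \bX) - \enth(\by \mid \bX, \bPitrue, \supp(\bbetatrue)),
\]
and bound the two pieces separately. The conditional entropy on the right is immediate once we use the hypothesis $\bbetatrue \in \set{0,1}^p$: the support $\supp(\bbetatrue)$ determines $\bbetatrue$ exactly, so conditional on $(\bX, \bPitrue, \supp(\bbetatrue))$ the only residual randomness in $\by = \bPitrue\bX\bbetatrue + \bw$ is the Gaussian noise, which gives $\enth(\by \mid \bX, \bPitrue, \supp(\bbetatrue)) = \frac{n}{2}\log(2\pi e\sigma^2)$.

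For the unconditional term $\enth(\by \mid \bX)$, I would apply the Gaussian maximum-entropy principle coordinatewise and then pull the $\bX$-expectation inside the logarithm via Jensen's inequality:
\[
\enth(\by \mid \bX) \leq \sum_{i=1}^{n}\enth(y_i \mid \bX) \leq \sum_{i=1}^{n}\frac{1}{2}\Expc_{\bX}\log\bracket{2\pi e \cdot \mathrm{Var}(y_i \mid \bX)} \leq \sum_{i=1}^{n}\frac{1}{2}\log\bracket{2\pi e \cdot \Expc_{\bX}\mathrm{Var}(y_i \mid \bX)}.
\]
The remaining quantitative step is the variance estimate $\Expc_{\bX}\mathrm{Var}(y_i \mid \bX) \leq \|\bbetatrue\|_2^2 + \sigma^2$. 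I would obtain this from $\mathrm{Var}(y_i \mid \bX) \leq \Expc[y_i^2 \mid \bX]$, the tower property, and the following permutation-invariance observation: for any permutation $\bPitrue$, the $i$-th row of $\bPitrue\bX$ is just some row $\bX_{\tau(i),:}$ of $\bX$, and because the entries of $\bX$ are i.i.d.\ $\normdist(0,1)$, the scalar $\bX_{\tau(i),:}\bbetatrue$ is distributed as $\normdist(0,\|\bbetatrue\|_2^2)$ regardless of which row is selected. Combined with the independent noise contributing $\sigma^2$, this gives the stated variance bound.

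Putting the two pieces together yields
\[
\entI(\bPitrue, \supp(\bbetatrue); \by \mid \bX) \leq \frac{n}{2}\log\bracket{2\pi e (\|\bbetatrue\|_2^2 + \sigma^2)} - \frac{n}{2}\log\bracket{2\pi e \sigma^2} = \frac{n}{2}\log(1 + \snr),
\]
which is exactly the claim. The only mildly delicate choice is to decouple into coordinates before invoking Jensen; a direct determinantal bound via $\frac{1}{2}\log\det\bracket{2\pi e\,\mathrm{Cov}(\by\mid \bX)}$ would force control of the off-diagonal covariances produced by the random permutation, which is awkward and fortunately avoidable here.
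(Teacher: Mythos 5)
Your proof is correct and follows essentially the same route as the paper: both write $\entI(\bPitrue,\supp(\bbetatrue);\by\mid\bX)$ as a difference of differential entropies, collapse the conditional term to $\enth(\bw)$ using the fact that the binary assumption makes $\supp(\bbetatrue)$ determine $\bbetatrue$, and bound the unconditional term by a Gaussian maximum-entropy inequality followed by Jensen, with second moment $\|\bbetatrue\|_2^2+\sigma^2$. The only difference is that you apply the max-entropy bound coordinatewise via subadditivity of entropy, whereas the paper uses the determinantal form $\frac{1}{2}\logdet\Expc[\by\by^{\rmt}]$ together with concavity of $\logdet$ --- and, contrary to your closing remark, that route is not awkward here: the off-diagonal entries of $\Expc_{\bX,\bPitrue}[\bPitrue\bX\bbetatrue\bbeta^{\natural\rmt}\bX^{\rmt}\bPi^{\natural\rmt}]$ vanish because distinct rows of $\bX$ are independent and zero-mean, leaving exactly $(\sigma^2+\|\bbetatrue\|_2^2)\bI$.
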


\begin{proof}
Denote $\enth(\cdot)$ as the differential entropy. We have
\[
\entI\big(\bPitrue, \supp(\bbetatrue); \by~|~\bX\big)
\stackrel{\cirone}{=}~& \Expc_{\bX, \bw, \bPitrue, \supp(\bbetatrue)}\Bracket{\enth\bracket{\by|\bX = \bx} - \enth\bracket{\by|\bPitrue, \supp(\bbetatrue),\bX=\bx}} \\
\stackrel{\cirtwo}{\leq}~&
\Expc_{\bX}\dfrac{1}{2}\logdet\bracket{\Expc_{\bw, \bPitrue~|~\bX=\bx}\by \by^{\rmt}}\
 - \dfrac{n}{2}\log \sigma^2\\
\stackrel{\cirthree}{\leq}~& \dfrac{1}{2}\logdet\hspace{-0.01in}\bigg[\hspace{-0.01in}\Expc_{\bX, \bPitrue}(\sigma^2 \bI_{n\times n} + \bPitrue\bX\bbetatrue\
\bbeta^{\natural\rmt}\bX^{\rmt}\bPi^{\natural\rmt} )\bigg] - \dfrac{n}{2}\log \sigma^2 \\
\stackrel{\cirfour}{=}~&\dfrac{n}{2}\log\bracket{\sigma^2 + \|\bbetatrue\|_{2}^2} - \dfrac{n}{2}\log\sigma^2 = \frac{n}{2}\log(1+ \snr),
\]
where $\cirone$ is because of the definition of conditional
mutual information;
$\cirtwo$ is due to the property~\cite[Theorem~8.6.5]{cover2012elements}
\[
\enth(\bZ) \leq \dfrac{1}{2}\logdet\Cov(\bZ) \leq \
\dfrac{1}{2}\logdet \Expc\bracket{\bZ\bZ^{\rmt}},
\]
for a random variable $\bZ$ with finite covariance matrix $\Cov(\bZ)$,
and $\enth(\by|\bPitrue, \supp(\bbetatrue), \bX=\bx) = \enth(\bw)$ as $\bbetatrue$'s
information is fully encoded in $\supp(\bbetatrue)$;
in $\cirthree$ we use the
concavity of $\logdet(\cdot)$, i.e., $\Expc\logdet(\cdot) \leq \logdet \Expc(\cdot)$;
and in $\cirfour$ we have
\[
\Expc_{\bX, \bPitrue}\bracket{\bPitrue\bX\bbetatrue\
\bbeta^{\natural\rmt}\bX^{\rmt}\bPi^{\natural\rmt}} = \
\|\bbetatrue\|^2_2 \cdot \bI_{n\times n}.
\]
\end{proof}

\section{Analysis of ML estimator}
This section analyzes the ML estimator.

\subsection{Notation definition}
\label{subsec:ml_events}
We denote $\supp(\bbetatrue)$ and $\supp(\bbeta)$
as $T$ and $S$, respectively.
In addition, we define $\calT_1$, $\calT_2$, and $\calT_3$ as
$\calT_1 \defequal T\setminus S$,
$\calT_2 \defequal T\bigcap S$, and
$\calT_3 \defequal S\setminus T$, respectively. An illustration is available
in Figure~\ref{fig:supp_span}.
\begin{figure}[!ht]
\centering
\includegraphics[width = 2.75in]{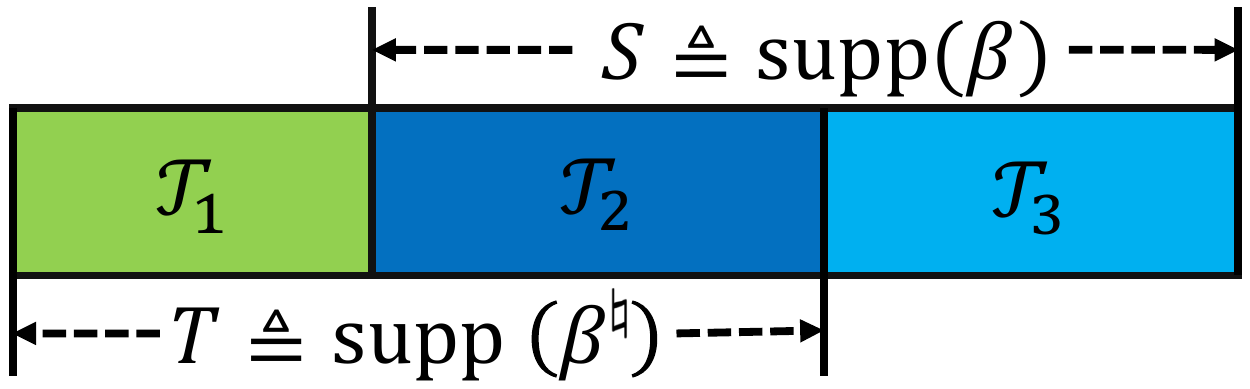}
\caption{Illustration of $\calT_1$, $\calT_2$, and $\calT_3$:
$\calT_1 \defequal T\setminus S$, $\calT_2 \defequal T\bigcap S$,
and $\calT_3 \defequal S\setminus T$, respectively.}
\label{fig:supp_span}
\end{figure}
In addition, we define the following events before we
proceed.
\[
\calE_1 &\defequal \set{\bPitrue = \wh{\bPi}, \
\bbetatrue \neq \wh{\bbeta}}; \notag \\
\calE_2 &\defequal \set{\bPitrue \neq \wh{\bPi}, \
\bbetatrue = \wh{\bbeta}}; \notag \\
\calE_3 &\defequal \set{\bPitrue \neq \wh{\bPi}, \
\bbetatrue \neq \wh{\bbeta}}; \notag \\
\calE_4(\delta; \bPi, S) &\defequal \
\set{\norm{\Proj_{\bPi\bX_S}^{\perp}\by}{2}^2 - \
\norm{\Proj_{\bPi\bX_S}^{\perp}\bw}{2}^2< 2\delta} ; \notag  \\
\calE_5(\delta; \bPi, S) &\defequal \
\set{\abs{\norm{\Proj_{\bPitrue\bX_T}^{\perp}\bw}{2}^2  - \norm{\Proj_{\bPi\bX_S}^{\perp}\bw}{2}^2}\geq \delta}; \notag  \\
\calE_6(t; h)&\defequal \bigg\{\norm{\Proj_{\bPi\bX_S}^{\perp}\bPitrue\bX_{T}\bbeta^{\natu}_T}{2}^2\geq t \|\bbetatrue_T\|_2^2,~\forall~S, \bPi~\textup{s.t.}~\dH(\bI, \bPi) = h\bigg \}; \notag  \\
\calE_7(\delta; \bPi, S) & \defequal \set{\abs{\norm{\Proj_{\bPitrue\bX_T}^{\perp}\bw}{2}^2 - \Expc \norm{\Proj_{\bPitrue\bX_T}^{\perp}\bw}{2}^2}\
\geq \frac{\delta}{2} }; \notag \\
\calE_8(\delta; \bPi, S) & \defequal \set{\abs{\norm{\Proj_{\bPi\bX_S}^{\perp}\bw}{2}^2 - \Expc \norm{\Proj_{\bPi\bX_S}^{\perp}\bw}{2}^2}\geq \frac{\delta}{2}}.
\]

\subsection{Proof of Theorem~\ref{thm:noiseless}}
\label{thm_proof:noiseless}
We prove a more specific version of Theorem~\ref{thm:noiseless}, which is

\begin{theorem*}
Consider the noiseless case, i.e., $\bw = \bZero$.
Suppose
$n = \Omega(k\log p)$, we have
$\Prob((\wh{\bPi}_{\ml}, \wh{\bbeta}_{\ml})\neq (\bPitrue,~\bbetatrue))
\lsim n^{-2}$.
\end{theorem*}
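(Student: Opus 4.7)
Since $\bw = \bZero$, the ML objective vanishes at $(\bPitrue, \bbetatrue)$, so any other minimizer $(\wh{\bPi}_{\ml}, \wh{\bbeta}_{\ml})$ must satisfy the \emph{exact} identity $\wh{\bPi}_{\ml}\bX\wh{\bbeta}_{\ml} = \bPitrue\bX\bbetatrue$. Following the outline, I would partition the error event into the three disjoint cases $\calE_1, \calE_2, \calE_3$ defined in Subsection~\ref{subsec:ml_events}, union-bound their probabilities, and control each piece with a combination of Gaussian submatrix concentration (to preclude linear dependence among sparse column subsets of $\bX$) and Gaussian anti-concentration (to preclude spurious coincidences among entries of $\bX\bbetatrue$).

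For $\calE_1$ (same permutation, different signal), the exact identity collapses to $\bX(\wh{\bbeta}_{\ml} - \bbetatrue) = \bZero$ for a nonzero vector supported on at most $2k$ indices. A union bound over the $\binom{p}{2k}$ possible supports of the difference, combined with the standard lower tail for the minimum singular value of an $n \times 2k$ standard Gaussian submatrix, gives $\Prob(\calE_1) \lsim \binom{p}{2k} e^{-cn}$, which is $o(n^{-2})$ once $n \gsim k\log p$. For $\calE_2$ (different permutation, same signal), the identity becomes $(\bPi - \bPitrue)\bX\bbetatrue = \bZero$, forcing the Gaussian vector $\bX\bbetatrue \sim \normdist(\bZero, \|\bbetatrue\|_2^2\bI_n)$ to have coinciding entries at every index pair where $\pi$ and $\pi^{\natural}$ disagree; a union bound over the $\binom{n}{2}$ candidate pairs together with one-dimensional Gaussian anti-concentration annihilates $\Prob(\calE_2)$.

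The principal obstacle is $\calE_3$, where both the permutation and the support differ. Here I would iterate over pairs $(\bPi, S)$ with $\bPi \neq \bPitrue$ and $|S|=k$, apply the decomposition $\calT_1 = T\setminus S$, $\calT_2 = T\cap S$, $\calT_3 = S\setminus T$ from Subsection~\ref{subsec:ml_events} to separate the shared and independent columns of $\bX$, and then condition on $\bX_{\calT_2\cup\calT_3}$. Under this conditioning the identity $\bPi\bX\bbeta = \bPitrue\bX\bbetatrue$ forces the still-random Gaussian vector $\bPitrue\bX_{\calT_1}\bbetatrue_{\calT_1}$ into a deterministic subspace of $\RR^n$ spanned by the columns of $[\bPi\bX_{\calT_3}\mid\bPi\bX_{\calT_2}\mid\bPitrue\bX_{\calT_2}]$, whose rank is at most $2k$; the probability of this event is controlled by a Gaussian small-ball estimate whose exponent scales with $n-2k$. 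The main difficulty is then that this exponential decay must absorb the combinatorial cost $n!\binom{p}{k}$ of the union bound over $(\bPi, S)$: I need to keep the effective subspace dimension strictly bounded by $2k$ despite $\bPi\bX_{\calT_2}$ and $\bPitrue\bX_{\calT_2}$ each contributing potentially distinct directions, and then balance the resulting small-ball exponent against the $e^{\Theta(n\log n)}$ permutation count so that the aggregate probability falls below the target $n^{-2}$ whenever $n \gsim k\log p$.
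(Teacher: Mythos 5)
Your handling of $\calE_1$ and $\calE_2$ is correct and is essentially the paper's argument in different clothing ($\sigma_{\min}$ of an $n\times 2k$ Gaussian submatrix versus the paper's $\Proj_{\bX_S}^{\perp}$ small-ball bound; both pay $e^{O(k\log p)}$ in entropy against $e^{-cn}$ in decay). The gap is in $\calE_3$. Your conditioning argument only produces a nondegenerate event when $\calT_1=T\setminus S\neq\emptyset$: only then is $\bX_{\calT_1}$ independent of the conditioning and $\bPitrue\bX_{\calT_1}\bbetatrue_{\calT_1}$ a fresh Gaussian vector being forced into a fixed low-dimensional subspace. But $\calE_3$ contains configurations with $T\subseteq S$ --- most importantly $S=T$ with $\wh{\bbeta}_T\neq\bbetatrue_T$ and $\wh{\bPi}\neq\bPitrue$ --- in which every column entering the identity $\bPi\bX_S\wh{\bbeta}_S=\bPitrue\bX_S\bbetatrue_S$ is fixed by your conditioning, and your small-ball estimate degenerates to nothing. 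This correlated regime is exactly what the paper's Lemma~\ref{lemma:depend_proj_small_ball} is built to handle: after projecting out the fresh columns (its $\zeta_1$ term), the surviving quantity is $\|\Proj_{\bX_T}^{\perp}\bPi\bX_T\bbetatrue_T\|_2^2$, which is lower-bounded through $\tfrac{1}{2}\min\{\|(\bI-\bPi)\bz\|_2^2,\|(\bI+\bPi)\bz\|_2^2\}$ and the $\chi^2$ anti-concentration of Lemma~\ref{lemma:permute_cancel_tail_bound}. Your plan has no analogue of this step.

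Second, the balance you defer as ``the main difficulty'' does not close in the form you set it up: a small-ball bound whose exponent scales like $n-2k\asymp n$ cannot absorb the $n!=e^{\Theta(n\log n)}$ permutation count, no matter the constants. In the strictly noiseless case you could escape by noting that each fixed-pair event is exactly measure zero (or by shrinking the small-ball radius super-exponentially), but the paper's actual mechanism --- and the one that survives into the noisy Theorem~\ref{thm:noisy_ml} --- is structural: stratify permutations by Hamming distance $h$, pay only $\binom{n}{h}h!\leq n^h$ at level $h$, and match it with the permutation-cancellation bound of Lemma~\ref{lemma:permute_cancel_tail_bound}, whose decay exponent scales with $h$ itself, so that each stratum contributes $n^{-2h}$ and summing over $h\geq 2$ yields the claimed $n^{-2}$ rate. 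Without that $h$-dependence in the decay, the union bound over permutations is not controllable.
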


\begin{proof}
We upper bound
the error probability $\Prob(\wh{\bPi}_{\ml}\neq \bPitrue)$ can be
decomposed as
$\sum_{\ell=1}^3 \Expc\Ind(\calE_{\ell})$, whose
definitions are stated in Subsection~\ref{subsec:ml_events}.

\vspace{0.1in}
\noindent
\textbf{Bounding $\Expc\Ind(\calE_1)$.}
Conditional on $\calE_1$, easily we can prove that $\calT_1$ is not empty, since otherwise
we have $\bX_T \bbetatrue_T = \bX_T\wh{\bbeta}$, which leads to contradiction.
Given the support set $S$, we can write
$\bbeta_S$ as
$(\bX_S^{\rmt}\bX_S)^{-1}\bX_S^{\rmt}\bX_T\bbetatrue_T$.
Then we have
\[
\bX_T \bbetatrue_T =
\bX_S \bbeta_S = \
\Proj_{\bX_S} \bX_T\bbetatrue_T,
\]
which implies $\bX_T\bbetatrue_T$ lies
within the linear space spanned by the columns of
$\bX_S$. Then we obtain
$\Proj_{\bX_S}^{\perp}(\bX_T\bbetatrue_T) = \
\Proj_{\bX_S}^{\perp}(\bX_{\calT_1}\bbetatrue_{\calT_1}) = 0$ and
can upper-bound $\Expc\Ind(\calE_1)$ as
${p\choose k}\cdot \Prob(\|\Proj_{\bX_S}^{\perp}(\bX_{\calT_1}\bbetatrue_{\calT_1})\|_{2} = 0)$.
Recalling the definition of $\calT_1$, we conclude
$\bX_{\calT_1}$ to be independent of $\bX_S$.
Then, with the rotational invariance of Gaussian distribution,
we can rewrite $\Prob(\|\Proj_{\bX_S}^{\perp}(\bX_{\calT_1}\bbetatrue_{\calT_1})\|_{2} = 0)$
as $\Prob(\norm{\Proj_{\bX_S}^{\perp}\bz}{2}
\|\bbetatrue_{\calT_1}\|_2=0) = \Prob(\norm{\Proj_{\bX_S}^{\perp}\bz }{2} =0)$, where $\bz \in \RR^n$ is
a Gaussian random variable satisfying $\bz \sim \normdist(0, \bI_{n\times n})$.

In the following proof, we will view $\bz$ as
a fixed vector as it is independent of $\bX_{S}$.
Afterwards, we can upper-bound $\Expc\Ind(\calE_1)$
as
\[
\Expc \Ind(\calE_1) \leq~& {p\choose k}
\Prob\bracket{\norm{\Proj_{\bX_S}^{\perp}\bz }{2}\leq \gamma \|\bz\|_2}
\stackrel{\cirone}{\leq}
\bracket{\frac{ep}{k}}^k \exp\Bracket{\frac{n-k}{n}\bracket{1-\gamma^2 + \log \gamma^2}}
\stackrel{\cirtwo}{\leq} n^{-2},
\]
where in $\cirone$ we use ${p\choose k}\leq (\nfrac{ep}{k})^k$ and
\cite[Lemma~2.2(a)]{dasgupta2003elementary},
and in $\cirtwo$ we pick $\gamma$ as
$\frac{k^{k}}{\sqrt{e}n^{2}(ep)^{k}}$.

\vspace{0.1in}
\noindent
\textbf{Bounding $\Expc\Ind\bracket{\calE_2}$.}
Event $\calE_2$ suggests that there exists
another permutation matrix $\wh{\bPi} \neq \bPitrue$ such that
$\wh{\bPi}\bX_T\bbeta_T^{\natural} = \bPitrue\bX_T\bbeta_T^{\natural}$.
Then, we can equate $\Expc\Ind(\calE_2)$
with $\Prob(\|(\wh{\bPi} - \bPitrue)\bz\|_2 = 0,~\exists~\wh{\bPi}\neq \bPitrue)$, where
$\bz \sim \normdist(\bZero, \bI_{n\times n})$.
This leads to
\[
& \Expc\Ind(\calE_2) \leq \Prob(\|(\wh{\bPi} - \bPitrue)\bz\|_2 = 0,~\exists~\wh{\bPi}\neq \bPitrue) \leq \sum_{h\geq 2}
{n\choose h}h!\cdot \Prob(\|(\bI-\bPi)\bz\|_2 = 0,\St\dh(\bI, \bPi) = h).
\]
With Lemma~\ref{lemma:permute_cancel_tail_bound}, we have
\[
\Expc\Ind(\calE_2) \leq &
 \sum_{h\geq 2}{n\choose h}h! \Prob(\|(\bI-\bPi)\bz\|_2 \leq \frac{4}{en^{20}},\St\dh(\bI, \bPi) = h)\notag \\
\stackrel{\cirthree}{\leq} &~ 6\sum_{h\geq 2} n^h \cdot \exp\Bracket{\frac{h}{10}\big (\log\big (\nfrac{2}{(ehn^{20})} \big ) -\nfrac{2}{(ehn^{20})} + 1\big)}
\stackrel{\cirfour}{\leq}  6\sum_{h\geq 2} n^{-h} \leq \frac{6}{n(n-1)},
\]
where $\cirthree$ is because $n!/(n-h)! \leq n^h$,
and $\cirfour$ is due to $\exp\Bracket{\frac{h}{10}\big (\log\big (\nfrac{2}{(ehn^{20})} \big ) -\nfrac{2}{(ehn^{20})} + 1\big)}\leq n^{-2h}$.

\vspace{0.1in}\noindent
\textbf{Bounding $\Expc\Ind(\calE_3)$.}
Adopting the similar argument as in bounding $\Expc\Ind(\calE_1)$,
for a fixed permutation matrix $\bPi$ and support
set $S$, we have
\[
\wh{\bbeta} = \wh{\bbeta}_S = \
\bracket{\bPi \bX_{S}}^{\dagger} \
\bPitrue\bX_T\bbetatrue_T.
\]
Based on the optimality of the objective function
in \eqref{eq:ml_est}, we conclude that
\[
\bPitrue\bX_T\bbetatrue_T = \
\bPi\bX_S\wh{\bbeta} = \
\Proj_{\bPi\bX_S}\bPitrue\bX_T\bbetatrue_T,
\]
which suggests $\bPitrue\bX_T\bbetatrue_T$
lies within the linear space spanned by the columns in $\bPi \bX_S$.
Following a similar procedure as in bounding
$\Expc\Ind(\calE_1)$, we have
$\Expc \Ind(\calE_3) \leq \frac{8}{n(n-1)}$.
Combining the discussions thereof then completes the proof.

\end{proof}

\newpage

\subsection{Proof of Theorem~\ref{thm:noisy_ml}}
\label{thm_proof:noisy_ml}

\begin{proof}
The proof consists of
two stages
\begin{itemize}
\item
\textbf{Stage I.} The permutation matrix
can be obtained with high probability;

\item
\textbf{Stage II.} Given the correct permutation, we can
detect the support set with high probability.
\end{itemize}
The proof of Stage I is  in Lemma~\ref{lemma:ml_correct_permute}
while the proof of Stage II is  in Lemma~\ref{lemma:ml_sign_consistency}.

\end{proof}

\subsection{Supporting lemmas for Section~\ref{sec:ml_estim}}

\begin{lemma}
\label{lemma:ml_correct_permute}
Assume $n = \Omega\bracket{k\log\bracket{\frac{ep}{k}}}$,
and
\begin{align}
\label{eq:noisy_ml_assump}
\log(\snr) \geq~& \Bracket{128\log n + \nfrac{64k}{n}\cdot\log\bracket{\nfrac{ep}{k}}}
\vcup \Bracket{34\log n + 2\log(4e^6)} \vcup \Bracket{148\log n + 4\log(2e^6)},
\end{align}

\noindent
we conclude that
$\Prob(\wh{\bPi}_{\textup{ML}} \neq \bPitrue) \leq \frac{13}{n(n-1)}$.
\end{lemma}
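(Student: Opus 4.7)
My plan is to bound $\Prob(\wh{\bPi}_{\ml}\neq\bPitrue)$ by a union bound over alternative hypotheses $(\bPi,S)$ with $\bPi\neq\bPitrue$ and $|S|=k$, exploiting the events $\calE_4,\ldots,\calE_8$ already defined in Subsection~\ref{subsec:ml_events}. For each candidate $(\bPi,S)$ the ML objective in \eqref{eq:ml_est} restricted to support $S$ attains value $\|\Proj_{\bPi\bX_S}^{\perp}\by\|_2^2$, so the event $\{\wh{\bPi}_{\ml}\neq\bPitrue\}$ forces some such pair to satisfy $\|\Proj_{\bPi\bX_S}^{\perp}\by\|_2^2 \leq \|\Proj_{\bPitrue\bX_T}^{\perp}\by\|_2^2$. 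Substituting $\by=\bPitrue\bX_T\bbetatrue_T+\bw$ separates this bad inequality into a signal contribution $\|\Proj_{\bPi\bX_S}^{\perp}\bPitrue\bX_T\bbetatrue_T\|_2^2 + 2\langle\Proj_{\bPi\bX_S}^{\perp}\bPitrue\bX_T\bbetatrue_T,\bw\rangle$ and a noise discrepancy $\|\Proj_{\bPitrue\bX_T}^{\perp}\bw\|_2^2-\|\Proj_{\bPi\bX_S}^{\perp}\bw\|_2^2$. For a threshold $\delta$ depending on $h=\dH(\bPi,\bPitrue)$ (to be calibrated), the bad event is therefore contained in $\calE_4(\delta;\bPi,S)\cup\calE_5(\delta;\bPi,S)$.

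For the signal event $\calE_4$, I would condition on $\bX$: writing $\bu\defequal\Proj_{\bPi\bX_S}^{\perp}\bPitrue\bX_T\bbetatrue_T$, the inner product $\langle\bu,\bw\rangle$ is a centred Gaussian of variance $\sigma^2\|\bu\|_2^2$, so picking $2\delta=\tfrac12\|\bu\|_2^2$ reduces $\calE_4$ to the tail event $\{|\langle\bu,\bw\rangle|>\|\bu\|_2^2/4\}$, which is bounded by $2\exp(-\|\bu\|_2^2/(32\sigma^2))$. Event $\calE_6(t;h)$ then lower-bounds $\|\bu\|_2^2$ uniformly by $t\|\bbetatrue\|_2^2$, turning the exponent into $-t\cdot\snr/32$. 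For the noise event $\calE_5$, both $\Proj_{\bPitrue\bX_T}^{\perp}\bw$ and $\Proj_{\bPi\bX_S}^{\perp}\bw$ lie, conditionally on $\bX$, in subspaces of dimension $n-k$ with common expected squared norm $(n-k)\sigma^2$, so $\calE_5\subseteq\calE_7\cup\calE_8$ and standard $\chi^2$-concentration delivers sub-exponential decay in $\delta$.

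Assembling the per-pair bounds with the combinatorial factor $\binom{n}{h}h!\binom{p}{k}$, whose log-cardinality is $\lsim h\log n + k\log(ep/k)$, gives
\[
\Prob(\wh{\bPi}_{\ml}\neq\bPitrue) \leq \sum_{h=2}^{n}\binom{n}{h}\,h!\binom{p}{k}\Bracket{\Prob(\calE_4(\delta_h;\bPi,S))+\Prob(\calE_5(\delta_h;\bPi,S))},
\]
and balancing this exponent against the three per-alternative exponents reproduces precisely the three branches of \eqref{eq:noisy_ml_assump}: the $\log n+(k/n)\log(ep/k)$ branch dominates the cross-term Gaussian tail, while the remaining $\log n$-type branches absorb $\calE_6^{c}$ and the $\chi^2$-concentration in $\calE_5$, with constants tuned so the geometric sum in $h$ stays below $\tfrac{13}{n(n-1)}$.

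The principal obstacle is establishing $\calE_6(t;h)$: a small-ball lower bound $\|\Proj_{\bPi\bX_S}^{\perp}\bPitrue\bX_T\bbetatrue_T\|_2^2\gsim t\|\bbetatrue\|_2^2$ must hold \emph{simultaneously} over all $\binom{n}{h}h!\binom{p}{k}$ alternatives. The argument will combine a per-pair computation that isolates the Gaussian component of $\bPitrue\bX_T\bbetatrue_T$ orthogonal to $\textup{span}(\bPi\bX_S)$---whose effective dimension scales with $h$ and with $|S\triangle T|$---with a union bound whose combinatorial cardinality is precisely the source of the $(k/n)\log(ep/k)$ inflation in the $\snr$ requirement, matching the minimax lower bound of Theorem~\ref{thm:statis_lb}.
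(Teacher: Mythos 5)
Your proposal follows essentially the same route as the paper's proof: the same reduction of the error event to $\bigcup_{\bPi,S}\calE_4(\delta;\bPi,S)\cup\calE_5(\delta;\bPi,S)$ with $\delta=\|\Proj_{\bPi\bX_S}^{\perp}\bPitrue\bX_T\bbetatrue_T\|_2^2/4$, the same Gaussian tail for the cross term (exponent $-t\cdot\snr/32$ after invoking $\calE_6$), the same $\calE_5\subseteq\calE_7\cup\calE_8$ plus $\chi^2$-concentration, and the same union bound over $\binom{n}{h}h!\binom{p}{k}$ alternatives balanced against the three branches of \eqref{eq:noisy_ml_assump}. The uniform small-ball bound you flag as the principal obstacle is exactly what the paper isolates as Lemma~\ref{lemma:depend_proj_small_ball} (with the calibration $t_h=n\log\snr/\snr$), and your sketch of it matches that lemma's structure.
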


\begin{proof}
Fixing the support set $S$ and permutation matrix $\bPi$,
we have
$\min_{\supp(\wh{\bbeta}) = S}\|\by - \bPi \bX\bbeta\|_2$
be expressed as $\norm{\Proj_{\bPi\bX_S}^{\perp}\by}{2}$.
Here we only study the permutation
reconstruction error (i.e.,
$\wh{\bPi} \neq \bPitrue$) and define error event $\calE$ as
\[
\calE \defequal
\set{
\exists~\bracket{\bPi, S}~\textup{s.t.}~\bPi\neq \bPitrue,~\norm{\Proj_{\bPi\bX_S}^{\perp}\by}{2} \leq \
\norm{\Proj_{\bPitrue\bX_T}^{\perp}\by}{2}}.
\]
Then we would like to show $\calE$ holds with probability
close to zero.

First, we would like to show
$\calE \subseteq \cup_{\bPi, S}\calE_4(\delta; \bPi, S)\cup \calE_5(\delta; \bPi, S)$.
This is because
\[
& \bigcap_{\bPi, S}\Bracket{\br{\calE}_4(\delta; \bPi, S)\bigcap \br{\calE}_5(\delta; \bPi, S)} \\
=& \bigg\{\forall~(\bPi, S)~\textup{s.t.}~\bPi\neq \bPitrue, \
\norm{\Proj_{\bPi\bX_S}^{\perp}\by}{2}^2 - \
\norm{\Proj_{\bPi\bX_S}^{\perp}\bw}{2}^2\geq 2\delta, \abs{\norm{\Proj_{\bPitrue\bX_T}^{\perp}\bw}{2}^2  - \norm{\Proj_{\bPi\bX_S}^{\perp}\bw}{2}^2}< \delta\bigg \} \\
\subseteq & \hspace{-0.01in}\set{\forall~(\bPi, S)~\textup{s.t.}\bPi\neq \bPitrue, \
\norm{\Proj_{\bPi\bX_S}^{\perp}\by}{2}^2 \hspace{-0.01in}- \hspace{-0.01in}
\norm{\Proj_{\bPitrue\bX_T}^{\perp}\bw }{2}^2 \geq \delta} \subseteq\br{\calE}.
\]
Then we can bound $\Expc\Ind(\calE)$ as
\begin{align}
\Expc\Ind(\calE) \leq \
\sum_{h  \geq 2}{n\choose h}h!
\Bracket{\sum_{S}(\zeta_1 +
\zeta_2) + \zeta_3},
\label{eq:noisy_psi}
\end{align}
where $\zeta_1$, $\zeta_2$, and $\zeta_3$ are
defined as
\[
\zeta_1 \defequal~&~\Expc\Ind\big(\calE_4\bracket{\delta; \bPi, S} \bigcap \calE_6(t_h; h)~\big|~ \dH(\bI, \bPi) = h\big); \\
\zeta_2 \defequal~&~ \Expc\Ind\big(\calE_5\bracket{\delta; \bPi, S} \bigcap \calE_6(t_h; h)~\big|~ \dH(\bI, \bPi) = h\big); \\
\zeta_3 \defequal~&~
\Expc\Ind\bracket{\br{\calE}_6(t_h; h)}.
\]

\newpage

The following analysis can be roughly divided into
$2$ steps
\begin{itemize}
\item \textbf{Step I.}
Setting $\delta = \nfrac{\big\|\Proj_{\bPi \bX_S}^{\perp}\bPitrue\bX_T \bbetatrue_T\big\|_{2}^2}{4}$,
we separately bound $\zeta_1$, $\zeta_2$, and
$\zeta_3$.

\item \textbf{Step II.}
We pick $t_h$ as $\nfrac{(n\cdot \log \snr)}{\snr}$ and show
$\Prob(\wh{\bPi}\neq \bPitrue) \leq \nfrac{13}{n(n-1)}$
provided assumption \eqref{eq:noisy_ml_assump} holds.
\end{itemize}

\noindent \textbf{Step I.}
Picking $\delta$ as $\nfrac{\big\|\Proj_{\bPi \bX_S}^{\perp}\bPitrue\bX_T \bbetatrue_T\big\|_{2}^2}{4}$, we have
\[
\zeta_1 \hspace{-0.03in}=~& \hspace{-0.03in} \Expc_{\bX, \bw}
\Ind\hspace{-0.02in}\Bracket{
\big \langle\Proj_{\bPi \bX_S}^{\perp}\bPitrue\bX_T \bbetatrue_T, \bw \big \rangle \leq
-\nfrac{\norm{\Proj_{\bPi \bX_S}^{\perp}\bPitrue\bX_T \bbetatrue_T}{2}^2}{4} \cap \calE_6(t_h; h)} \\
\stackrel{\cirone}{\leq}~& \Expc_{\bX} \Ind
\Bracket{\Phi\bracket{-\nfrac{\norm{\Proj_{\bPi \bX_S}^{\perp}\bPitrue\bX_T \bbetatrue_T}{2}^2}{4\sigma \norm{\Proj_{\bPi \bX_S}^{\perp}\bPitrue\bX_T \bbetatrue_T}{2}}} \bigcap \calE_6(t_h; h)} \\
\stackrel{\cirtwo}{\leq}~& \Expc\Phi\bracket{-\nfrac{\sqrt{t_h}\|\bbetatrue_T\|_{2}}{4\sigma}}
\stackrel{\cirthree}{\leq} \exp\bracket{-\nfrac{t_h \|\bbetatrue_T\|^2_2}{32\sigma^2}} = \exp\bracket{-\nfrac{t_h\cdot \snr}{32}},
\]
where in $\cirone$ we condition on $\bX$ and
have $2\langle\Proj_{\bPi \bX_S}^{\perp}\bPitrue\bX_T \bbetatrue_T, \bw \rangle$ to be a
Gaussian random variable with zero mean and $4\sigma^2\|\Proj_{\bPi \bX_S}^{\perp}\bPitrue\bX_T \bbetatrue_T\|_{2}^2$ variance, i.e.,
$\normdist(0, 4\sigma^2\|\Proj_{\bPi \bX_S}^{\perp}\bPitrue\bX_T \bbetatrue_T\|_{2}^2)$,
$\Phi(\cdot)$ denotes
the CDF for the standard normal distribution, $\cirtwo$
is because of event $\calE_6(t_h; h)$,
and $\cirthree$ is due to the tail bound
$\Phi(-x) \leq e^{-x^2/2},~x\geq 0$ (c.f. Proposition~$2.12$ in~\citet{vershynin2018high}).

For $\zeta_2$, we notice the
relation such that $\Expc_{\bw, \bX}\|\Proj_{\bPitrue\bX_T}^{\perp}\bw\|_{2}^2 = \
\Expc_{\bw, \bX} \|\Proj_{\bPi\bX_S}^{\perp}\bw\|_{2}^2 = (n-k)\sigma^2$
and perform the decomposition
\[
\calE_5(\delta; \bPi, S) \subseteq
\calE_7(\delta; \bPi, S)\bigcup \calE_8(\delta; \bPi, S).
\]
This leads to
\[
\zeta_2 \leq
\Expc\Ind(\calE_7(\delta; \bPi, S) \bigcap \calE_6(t_h; h))
+\Expc\Ind(\calE_8(\delta; \bPi, S) \bigcap \calE_6(t_h; h)).
\]
Exploiting the independence between $\bX$ and $\bw$,
we can condition on $\bX$ view
$\nfrac{\|\Proj_{\bPitrue\bX_T}^{\perp}\bw\|_{2}^2}{\sigma^2}$
($\nfrac{\|\Proj_{\bPi\bX_S}^{\perp}\bw\|_{2}^2}{\sigma^2}$ resp.) as
$\chi^2$ random variable with freedom $n-k$.
Plugging $\delta$ into the tail bound for
$\chi^2$  as in~\citet[Example 2.11, P.~29]{wainwright_2019}, we conclude
\[
\zeta_2 \leq 4\exp\Bracket{-\bracket{\frac{t_h\cdot \snr}{64} \vcap \frac{t_h^2 \cdot \snr^2}{512(n-k)}}}.
\]
Ultimately, we invoke Lemma~\ref{lemma:depend_proj_small_ball}
and bound $\zeta_3$ as
\[
\zeta_3  \leq ~&\
2n^{-2h} + 6 \exp\Bracket{\frac{h}{10}\bracket{\log\bracket{\frac{4e^6 n^{16h/n}t_h}{h}} -\frac{4e^6 n^{16h/n}t_h}{h}+ 1}},
\]
where $t_h < h/(4e^6 n^{16h/n})$.

\vspace{0.1in}\noindent
\textbf{Step II.}
Let $t_h$ be $\nfrac{(n\cdot \log \snr)}{\snr}$, we
will show $\zeta_{\ell}$ $(1\leq \ell \leq 3)$ all shrink to zero with the
assumption (\ref{eq:noisy_ml_assump}).
For $\zeta_1$, we
 use the assumption
$\log(\snr) \geq 128\log n + \nfrac{64k\log\bracket{\frac{ep}{k}}}{n} \geq \nfrac{128h\log n}{n} + \nfrac{64k\log(\frac{ep}{k})}{n}$
in \eqref{eq:noisy_ml_assump}
and have $\zeta_1 \leq
\exp\bracket{-\nfrac{t_h \cdot \snr}{32}}
\leq n^{-2h} \bracket{\nfrac{k}{ep}}^k$.

Then we turn to $\zeta_2$. If
$\frac{t_h\cdot \snr}{64} \vcap \frac{t_h^2 \cdot \snr^2}{512(n-k)} = \frac{t_h\cdot \snr}{64}$,
we can apply the same strategy to bound $\zeta_2$.
Otherwise, we need to bound $\zeta_2$ as
\begin{align}
\label{eq:noisy_psi5}
& \exp\bracket{-\frac{t_h^2 \cdot \snr^2}{512(n-k)}} = \
\exp\bracket{-\frac{n^2\cdot  \log^2 \snr}{512(n-k)}}
\stackrel{\cirfive}{\leq} \exp\bracket{-\frac{n\cdot \log\snr}{64}} \leq  \
n^{-2h} \bracket{\frac{k}{ep}}^k,
\end{align}
where in $\cirfive$ we use the relation $\log\snr\geq 8$ and redo the
above analysis. For $\zeta_3$, we first need to check the
condition $t_h <  h/(4e^6 n^{16h/n})$ is not violated as
\[
\log t_h < -16\log n - \log(4e^6) <
\log h - \nfrac{16h\log n}{n}- \log(4e^6).
\]
Then we consider $\exp[\frac{h}{10}(\log\nfrac{(4e^6 n^{16h/n}t_h)}{h} -\nfrac{(4e^6 n^{16h/n}t_h)}{h}+ 1)]$, which
reads as
\begin{align}
\label{eq:noisy_psi6}
 &\exp\Bracket{\frac{h}{10}\bracket{\log\bracket{\frac{4e^6 n^{16h/n}t_h}{h}} -\frac{4e^6 n^{16h/n}t_h}{h}+ 1}}\notag  \\
 \leq~& \
 \exp\Bracket{\frac{h}{10} \log\bracket{\frac{4e^6 n^{16h/n+1}}{h}}}
\times  \exp\Bracket{\frac{h}{10}\bracket{\log\frac{\log\snr}{\snr} - \frac{\log\snr}{\snr} + 1} }  \notag \\
\stackrel{\cirsix}{\leq}~& \exp\Bracket{\frac{h}{10} \log\bracket{\frac{4e^6 n^{16h/n+1}}{h}}}
\times  \exp\bracket{-\frac{h\cdot \log\snr}{40}}
\stackrel{\cirseven}{\leq} n^{-2h},
\end{align}
where in $\cirsix$ we use $\frac{\log z}{z} - 1 - \log\frac{\log z}{z}\geq \frac{\log z}{4}$ when $z\geq 1.25$,
and in $\cirseven$ we use the relation
$\log\snr \geq 148\log n + 4\log(2e^6)$ in
\eqref{eq:noisy_ml_assump}.

Combining \eqref{eq:noisy_psi},
\eqref{eq:noisy_psi5}, and
\eqref{eq:noisy_psi6}, we complete the proof as
\[
\Expc\Ind(\calE) \leq~& \sum_{h\geq 2} {n\choose h}h! \cdot
\Bracket{\sum_{S} \bracket{5n^{-2h} \bracket{\nfrac{k}{ep}}^k} + 8n^{-2h}} \leq 13\sum_{h\geq 2}n^{-h} \leq \frac{13}{n(n-1)}.
\]
\end{proof}

\begin{lemma}
\label{lemma:ml_sign_consistency}
Consider the case where the ground-truth permutation matrix
$\bPitrue$ is given a prior. Provided that $(i)$ $n\gsim k\log p$,
and $(ii)$ $\min_{i\in T} \nfrac{|\bbeta^{\natural}_i|^2}{\sigma^2}
\gsim 1$, we have
$\Prob(S \neq T) \leq c_0 e^{-c_1 k\log p}$,
where $c_0, c_1 > 0$ are some positive constants.	
\end{lemma}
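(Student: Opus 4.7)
The plan is to reduce the problem to classical best-subset selection with known design and then control a union bound over incorrect supports by decomposing the residual gap into three pieces. Since $\bPitrue$ is known, working with $\by' \defequal (\bPitrue)^{\rmt}\by = \bX\bbetatrue + \bw$ leaves the noise law unchanged. For each fixed $S$ with $|S| = k$, the inner minimization over $\bbeta_S$ in \eqref{eq:ml_est} yields residual $\|\Proj_{\bX_S}^{\perp}\by'\|_2$, so
\[
\set{S \neq T}\;\subseteq\;\bigcup_{\substack{|S| = k,\\ S \neq T}}\set{\|\Proj_{\bX_S}^{\perp}\by'\|_2^2 \leq \|\Proj_{\bX_T}^{\perp}\bw\|_2^2}.
\]
I would parametrize this union by $d \defequal |T\setminus S|\in\set{1,\ldots,k}$, noting that the number of competing $S$ at distance $d$ is at most $\binom{k}{d}\binom{p-k}{d}\leq \exp(2d\log(ep))$. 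Setting $\bv_S \defequal \bX_{T\setminus S}\bbetatrue_{T\setminus S}$ and using $\Proj_{\bX_S}^{\perp}\bX_{T\cap S} = \bZero$, the key algebraic identity is
\[
\|\Proj_{\bX_S}^{\perp}\by'\|_2^2 - \|\Proj_{\bX_T}^{\perp}\bw\|_2^2 = \underbrace{\|\Proj_{\bX_S}^{\perp}\bv_S\|_2^2}_{A} + \underbrace{2\langle \Proj_{\bX_S}^{\perp}\bv_S,\,\bw\rangle}_{B} + \underbrace{\bw^{\rmt}(\Proj_{\bX_T} - \Proj_{\bX_S})\bw}_{C}.
\]

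The next step is to lower bound $A$ and to absorb $B$ and $C$ into $A/2$. For $A$: conditional on $\bX_S$, the matrix $\bX_{T\setminus S}$ is independent standard Gaussian, so $\|\Proj_{\bX_S}^{\perp}\bv_S\|_2^2 / \|\bbetatrue_{T\setminus S}\|_2^2 \sim \chi^2_{n-k}$, and the $\chi^2$ lower tail yields $A \geq (n-k)\|\bbetatrue_{T\setminus S}\|_2^2 / 2$ except on an event of probability at most $\exp(-c(n-k))$; assumption (ii) forces $\|\bbetatrue_{T\setminus S}\|_2^2 \gsim d\sigma^2$, so $A \gsim (n-k)d\sigma^2$. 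For $B$: given $\bX$, it is Gaussian with variance $4\sigma^2 A$, so a Gaussian tail bound gives $|B|\leq A/4$ with conditional probability at least $1 - \exp(-cA/\sigma^2) \geq 1 - \exp(-c'(n-k)d)$. For $C$: the matrix $\bM\defequal \Proj_{\bX_T} - \Proj_{\bX_S}$ satisfies $\|\bM\|_{\textup{op}}\leq 1$ and $\|\bM\|_F^2 \leq \textup{rank}(\bM)\leq 2k$, and $\textup{tr}(\bM)=0$ so $\Expc C = 0$; Hanson--Wright then gives $|C| \leq (n-k)d\sigma^2/4$ with probability at least $1 - 2\exp(-c\min((n-k)^2 d^2/k,\,(n-k)d))$. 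Under $n \gsim k\log p$ and $d\geq 1$ one checks $(n-k)d \geq k$, so the minimum equals $(n-k)d$.

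Combining the three bounds, the per-$S$ failure probability is at most $\exp(-c''(n-k)d)$; multiplying by the combinatorial factor $\exp(2d\log(ep))$ and using $n-k \gsim k\log p$ with a sufficiently large hidden constant yields $\exp(-c'''(n-k)d)$, and summing the geometric series over $d=1,\ldots,k$ gives the claimed $\Prob(S\neq T)\lsim \exp(-c_1 k\log p)$.

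The main obstacle is term $C$. Because $\Proj_{\bX_S}^{\perp}\bw$ and $\Proj_{\bX_T}^{\perp}\bw$ involve \emph{the same} noise vector $\bw$, a naive $\chi^2$-deviation bound on each piece separately is too loose and misses the cancellation of their means (both have mean $(n-k)\sigma^2$). Rewriting the difference as a single quadratic form $\bw^{\rmt}\bM\bw$ and exploiting the rank bound $\textup{rank}(\bM)\leq 2k$ (rather than the ambient dimension $n-k$) is what keeps the Frobenius-norm factor in Hanson--Wright small enough to survive the combinatorial union bound over $\binom{k}{d}\binom{p-k}{d}$ competing subsets; without this observation the noise term would overwhelm the signal lower bound on $A$.
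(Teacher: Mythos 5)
Your proof is correct, and while it shares the paper's overall architecture --- reduce to the event $\|\Proj_{\bX_S}^{\perp}\by'\|_2^2 \leq \|\Proj_{\bX_T}^{\perp}\bw\|_2^2$, decompose the gap into a signal-energy term, a cross term, and a noise-difference term, then union bound over wrong supports --- it differs in two genuine technical respects. First, you stratify the union bound by $d = |T\setminus S|$ with combinatorial factor $\binom{k}{d}\binom{p-k}{d}$ and per-$S$ tails $\exp(-c(n-k)d)$ that improve with $d$, whereas the paper applies a flat union bound over all $\binom{p}{k}\leq (ep/k)^k$ subsets against a uniform per-$S$ bound $e^{-ck\log p}$; your version is sharper and makes the bookkeeping of constants more transparent. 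Second, for the noise term you treat $\bw^{\rmt}(\Proj_{\bX_T}-\Proj_{\bX_S})\bw$ as a single centered quadratic form and invoke Hanson--Wright with $\mathrm{rank}(\Proj_{\bX_T}-\Proj_{\bX_S})\leq 2k$, yielding a variance proxy of order $k\sigma^4$; the paper instead centers each of $\|\Proj_{\bX_T}^{\perp}\bw\|_2^2$ and $\|\Proj_{\bX_S}^{\perp}\bw\|_2^2$ at their common mean $(n-k)\sigma^2$ and applies $\chi^2_{n-k}$ concentration to each separately, giving the larger variance proxy $(n-k)\sigma^4$. One caveat: your closing claim that the rank-$2k$ observation is \emph{necessary} ("without this observation the noise term would overwhelm the signal") is overstated. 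The paper's cruder bound requires only $\Delta \gsim \sigma^2\sqrt{(k\log p)(n-k)}$, and the available signal $(n-k)d\sigma^2$ dominates this precisely because $n-k\gsim k\log p$; so the separate-$\chi^2$ route does survive both the flat and the stratified union bounds under the lemma's hypotheses. Your refinement is cleaner and tighter, but it is an improvement of convenience rather than of necessity.
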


\begin{proof}
We assume $\bPi^{\natural} = \bI$ w.l.o.g.
Recalling the definition of our estimator in \eqref{eq:ml_est},
we ought to have
\[
\fnorm{\Proj_{\bX_S}^{\perp} \by}^2 \leq \
\fnorm{\Proj_{\bX_T}^{\perp} \by}^2 =
\fnorm{\Proj_{\bX_T}^{\perp} \bw}^2,
\]
which is equivalent to
\[
\fnorm{\Proj_{\bX_S}^{\perp}\bX_T\bbeta^{\natural}_T }^2
+
2 \la \Proj_{\bX_S}^{\perp} \bX_T\bbeta^{\natural}_T, \bw\ra
\leq \
\fnorm{\Proj_{\bX_T}^{\perp} \bw}^2 -
\fnorm{\Proj_{\bX_S}^{\perp}\bw}^2.
\]
To begin with, we consider a fixed index set
$S$ and have
\begin{align}
\label{eq:ml_sign_consistency_tot}
\Prob\bracket{\fnorm{\Proj_{\bX_S}^{\perp} \by}
\leq \fnorm{\Proj_{\bX_T}^{\perp} \by}}
\leq~ \zeta_1 + \zeta_2 + \zeta_3,
\end{align}
where $\zeta_1$, $\zeta_2$, and $\zeta_3$
are defined as
\[
\zeta_1 \defequal~& \Prob\big(\|\Proj_{\bX_S}^{\perp} \bX_{\calT_1} \bbeta^{\natural}_{\calT_1}\|_2^2
\leq \nfrac{(n-k)}{4}
\|\bbeta^{\natural}_{\calT_1}\|^2_2\big); \\
\zeta_2 \defequal~& \Prob\big(2\langle\Proj_{\bX_S}^{\perp} \bX_T\bbeta^{\natural}_T, \bw \rangle
\lsim -c\sigma \sqrt{(k\log p)(n-k)}\|\bbeta^{\natural}_{\calT_1}\|_2\big); \\
\zeta_3 \defequal~& \Prob\big(\fnorm{\Proj_{\bX_T}^{\perp} \bw}^2 -
\fnorm{\Proj_{\bX_S}^{\perp}\bw}^2
\geq \nfrac{(n-k)}{4}
\|\bbeta^{\natural}_{\calT_1}\|^2_2 - c\sigma \sqrt{(k\log p)(n-k)}\|\bbeta^{\natural}_{\calT_1}\|_2\big).
\]
\par \noindent
\textbf{Analysis of $\zeta_1$.}
Recalling the fact such that $\Proj_{\bX_S}^{\perp}$
is the projection onto the orthogonal complement
of column space spanned by $\bX_T$, we can verify
$\Proj_{\bX_S}^{\perp}\bX_T\bbeta^{\natural}_T
= \Proj_{\bX_S}^{\perp} \bX_{\calT_1} \bbeta^{\natural}_{\calT_1}$.
Then, we decompose $\zeta_1$ as
\begin{align}
\label{eq:ml_sign_consistency_zeta1}
\zeta_1\leq \zeta_{1,1} + \zeta_{1, 2},
\end{align}
where $\zeta_{1,1}$ and $\zeta_{1,2}$ are defined as
\[
\zeta_{1,1} \defequal ~&\Prob\bigg(\|\Proj_{\bX_S}^{\perp} \bX_{\calT_1} \bbeta^{\natural}_{\calT_1}\|_2
\leq  \nfrac{\sqrt{n-k}}{2}
\|\bbeta^{\natural}_{\calT_1}\|_2,  \|\bX_{\calT_1} \bbeta^{\natural}_{\calT_1} \|_2 \geq \sqrt{\nfrac{n}{2}}\|\bbeta^{\natural}_{\calT_1}\|_2 \bigg); \\
\zeta_{1,2}\defequal ~&
\Prob\bracket{\|\bX_{\calT_1} \bbeta^{\natural}_{\calT_1} \|_2 \leq \sqrt{\nfrac{n}{2}}\|\bbeta^{\natural}_{\calT_1}\|_2}.
\]
For $\zeta_{1,1}$, we follow the same procedure as
in Theorem~\ref{thm:noiseless}. Conditional on $\bX_{\calT_1}$,
we view $\Proj_{\bX_S}^{\perp}$
as a random projection from a linear space of
dimension $n$ to a linear space of dimension
$n-k$, which yields
\begin{align}
\label{eq:ml_sign_consistency_zeta11}
\zeta_{1,1} \leq~&
\Prob\bracket{\|\Proj_{\bX_S}^{\perp} \bX_{\calT_1} \bbeta^{\natural}_{\calT_1}\|_2^2
\leq \nfrac{(n-k)}{2n} \|\bX_{\calT_1}\bbeta^{\natural}_{\calT_1}\|^2_2 }
\stackrel{\cirone}{\leq} \exp\Bracket{\frac{n-k}{2}(\log \nfrac{1}{2} - \nfrac{1}{2} + 1)} = e^{-c_0n},
\end{align}
where $\cirone$ is due to Lemma~\ref{lemma:random_proj}.
For $\zeta_{1,2}$, we exploit the fact
such that $\nfrac{\|\bX_{\calT_1} \bbeta^{\natural}_{\calT_1} \|_2^2}{\|\bbeta^{\natural}_{\calT_1}\|^2_2}$  is a $\chi^2$ random variable with freedom $n$.
Then we have
\begin{align}
\label{eq:ml_sign_consistency_zeta12}
\zeta_{1, 2} \leq
\exp\Bracket{\frac{n}{2}(\log \nfrac{1}{2} - \nfrac{1}{2} + 1)} = e^{-c_1n}.
\end{align}
\par \noindent
\textbf{Analysis of $\zeta_2$.}
With the union bound, we have
\begin{align}
\label{eq:ml_sign_consistency_zeta2}
\zeta_2
\leq~&
\underbrace{\Prob\bracket{
2\langle\Proj_{\bX_S}^{\perp} \bX_{\calT_1}\bbeta^{\natural}_{\calT_1}, \bw \rangle \lsim
-\sigma \sqrt{k\log p}  \|\Proj_{\bX_S}^{\perp} \bX_T\bbeta^{\natural}_T\|_2}}_{
\defequal~\zeta_{2, 1}} \notag\\
+~& \underbrace{
\Prob\bigg(\|\Proj_{\bX_S}^{\perp} \bX_{\calT_1} \bbeta^{\natural}_{\calT_1}\|_2
\geq \sqrt{3(n-k)}
\|\bbeta^{\natural}_{\calT_1}\|_2\bigg)}_{\defequal ~\zeta_{2,2}}.
\end{align}
For $\zeta_{2,1}$, we exploit the independence
between $\bX$ and $\bw$. Conditional on
$\bX$, we can view
$2\langle\Proj_{\bX_S}^{\perp} \bX_{\calT_1}\bbeta^{\natural}_{\calT_1}, \bw \rangle$ as a Gaussian random variable with zero mean
and $4\sigma^2\|\Proj_{\bX_S}^{\perp} \bX_{\calT_1}\bbeta^{\natural}_{\calT_1}\|^2_2$ variance, i.e.,
$\normdist\bracket{0, 4\sigma^2\|\Proj_{\bX_S}^{\perp} \bX_{\calT_1}\bbeta^{\natural}_{\calT_1}\|^2_2}$.
Then, we have
\begin{align}
\label{eq:ml_sign_consistency_zeta21}	
\zeta_{2,1}\leq \exp\bracket{-\frac{c\sigma^2 (k\log p)\|\Proj_{\bX_S}^{\perp} \bX_{\calT_1}\bbeta^{\natural}_{\calT_1}\|^2_2}{4\sigma^2\|\Proj_{\bX_S}^{\perp} \bX_{\calT_1}\bbeta^{\natural}_{\calT_1}\|^2_2}}
= e^{-ck\log p}.
\end{align}
For $\zeta_{2,2}$, we follow a similar proof as in
bounding $\zeta_{1}$ and have
\begin{align}
\label{eq:ml_sign_consistency_zeta22}
\zeta_{2,2}\leq~&
\Prob\big(\|\Proj_{\bX_S}^{\perp} \bX_{\calT_1} \bbeta^{\natural}_{\calT_1}\|_2^2
\geq  3(n-k)\|\bbeta^{\natural}_{\calT_1}\|^2_2,
\|\bX_{\calT_1} \bbeta^{\natural}_{\calT_1} \|_2 \leq \sqrt{\nfrac{3n}{2}}\|\bbeta^{\natural}_{\calT_1}\|_2 \big) \notag \\
+~& \Prob\bracket{\|\bX_{\calT_1} \bbeta^{\natural}_{\calT_1} \|_2 \geq \sqrt{\nfrac{3n}{2}}\|\bbeta^{\natural}_{\calT_1}\|_2} \notag \\
\leq ~&
\Prob\bracket{\|\Proj_{\bX_S}^{\perp} \bX_{\calT_1} \bbeta^{\natural}_{\calT_1}\|_2^2
\geq \frac{2(n-k)}{n}\|\bX_{\calT_1}\bbeta^{\natural}_{\calT_1}\|^2_2} + \Prob\bracket{\|\bX_{\calT_1} \bbeta^{\natural}_{\calT_1} \|_2 \geq \sqrt{\nfrac{3n}{2}}\|\bbeta^{\natural}_{\calT_1}\|_2}
\leq
2e^{-c_1 n}.
\end{align}
\par \noindent
\textbf{Analysis of $\zeta_3$.}
Due to the assumptions in Lemma~\ref{lemma:ml_sign_consistency},
we can verify
\[
\nfrac{(n-k)}{4}
\|\bbeta^{\natural}_{\calT_1}\|^2_2
- c\sigma \sqrt{(k\log p)(n-k)}\|\bbeta^{\natural}_{\calT_1}\|_2 \geq~& \min_{i\in T} |\bbeta^{\natural}_i|
\bracket{\nfrac{(n-k)}{4}\min_{i\in T} |\bbeta^{\natural}_i|
- c\sigma\sqrt{(k\log p)(n-k)}} \\
\geq~& c^{'}\sigma^2\sqrt{k\log(\nfrac{ep}{k})}
\Bracket{\sqrt{k\log(\nfrac{ep}{k}})\vcup \sqrt{n-k}}.
\]
Thus, we have
\[
\zeta_3 \leq
\Prob\bracket{|\fnorm{\Proj_{\bX_T}^{\perp} \bw}^2 -
\fnorm{\Proj_{\bX_S}^{\perp} \bw}^2|\geq \Delta},
\]
where $\Delta$ is set as
$c^{'}\sigma^2\sqrt{k\log(\nfrac{ep}{k})}
\Bracket{\sqrt{k\log(\nfrac{ep}{k}})\vcup \sqrt{n-k}}$.
Noticing the relation
$\Expc \fnorm{\Proj_{\bX_T}^{\perp} \bw}^2 = \Expc
\fnorm{\Proj_{\bX_S}^{\perp} \bw}^2$, we obtain
\[
& \Prob\bracket{|\fnorm{\Proj_{\bX_T}^{\perp} \bw}^2 -
\fnorm{\Proj_{\bX_S}^{\perp} \bw}^2|\geq \Delta}\leq 2\Prob\bracket{|\fnorm{\Proj_{\bX_T}^{\perp} \bw}^2 - \Expc\fnorm{\Proj_{\bX_T}^{\perp} \bw}^2|
\geq \nfrac{\Delta}{2}}.
\]
Due to the independence between $\bX$ and
$\bw$, we can view $\nfrac{\fnorm{\Proj_{\bX_T}^{\perp} \bw}^2}{\sigma^2}$ as a $\chi^2$ random variable with
freedom $n-k$. Thus, we conclude
\begin{align}
\label{eq:ml_sign_consistency_zeta3}
\zeta_3 \leq~& 2\Prob\bracket{|\fnorm{\Proj_{\bX_T}^{\perp} \bw}^2 - \Expc\fnorm{\Proj_{\bX_T}^{\perp} \bw}^2|
\geq \nfrac{\Delta}{2}}
\leq 4\exp\bracket{-\bracket{\frac{\Delta}{16\sigma^2 }\vcap \frac{\Delta^2}{32\sigma^4 (n-k)} }}
\leq 4 e^{-ck\log p}.
\end{align}
Combining \eqref{eq:ml_sign_consistency_tot},
\eqref{eq:ml_sign_consistency_zeta1},
\eqref{eq:ml_sign_consistency_zeta11},
\eqref{eq:ml_sign_consistency_zeta12},
\eqref{eq:ml_sign_consistency_zeta2},
\eqref{eq:ml_sign_consistency_zeta21},
\eqref{eq:ml_sign_consistency_zeta22},
and \eqref{eq:ml_sign_consistency_zeta3}
yields
\begin{align}
\label{eq:ml_sign_consistency_single}
\Prob\bracket{\fnorm{\Proj_{\bX_S}^{\perp} \by}
\leq \fnorm{\Proj_{\bX_T}^{\perp} \by}}
\leq c_0 e^{-c_1k\log p},
\end{align}
where the assumption $n\gsim k\log p$ is invoked.
Notice that \eqref{eq:ml_sign_consistency_single}
is w.r.t. a fixed index set $S$. In the end, we
iterate over all possible index set $S\neq T$ and
complete the proof with the union bound, which
reads as
\[
& \Prob(S\neq T)
\leq \sum_{S\neq T}\Prob\bracket{\fnorm{\Proj_{\bX_S}^{\perp} \by}
\leq \fnorm{\Proj_{\bX_T}^{\perp} \by}} \lsim {p\choose k}\cdot  e^{-c_1k\log p}
\leq (\nfrac{ep}{k})^k  e^{-c_1k\log p}
= e^{-c_2 k\log p}.
\]
\end{proof}

\begin{lemma}
\label{lemma:depend_proj_small_ball}
We have
\[
& \Prob\bracket{\|\Proj_{\bPi\bX_S}^{\perp}\bPitrue\bX_{T}\bbeta_T^{\natu}\|_2^2
< t\|\bbetatrue_T\|_2^2,~~\exists~\bPi, S~~\textup{s.t.}~~\dH(\bPi, \bPitrue) = h} \notag \\
\leq~& 2n^{-2h} + \
6 \exp\bracket{\frac{h}{10}\bracket{\log\bracket{\frac{4e^6 n^{16h/n}t}{h}} -\frac{4e^6 n^{16h/n}t}{h}+ 1}},
\]
where $t < h/(4e^6 n^{16h/n})$, and $h\geq 2$.
\end{lemma}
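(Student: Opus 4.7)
The plan is to first reduce to $\bPitrue = \bI$ by absorbing $\bPitrue$ into $\bX$: since the entries of $\bX$ are i.i.d.\ standard Gaussian, $\bPitrue^{\rmt}\bX$ is equal in distribution to $\bX$. Writing $\bv := \bX_T\bbeta^{\natural}_T$, we must show that with high probability, for every index set $S$ of size $k$ and every $\bPi$ with $\dH(\bI, \bPi) = h$, one has $\|\Proj_{\bPi\bX_S}^{\perp}\bv\|_2^2 \geq t\|\bbeta^{\natural}_T\|_2^2$. The first term $2n^{-2h}$ in the claimed bound corresponds to a low-probability ``bad event'' on which $(\bX, \bv)$ fails some basic regularity (for instance, the spectral norm of $\bX_S$ not being $\lsim \sqrt{n}$ uniformly in $S$, or $\|\bv\|_2$ not being concentrated around $\sqrt{n}\|\bbeta^{\natural}_T\|_2$). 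I would introduce such a good event $\calG$ and show $\Prob(\calG^c) \leq 2 n^{-2h}$ by standard Gaussian concentration together with a crude union bound.

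The core step is a per-instance small-ball bound on $\calG$: for each fixed $(\bPi, S)$ with $\dH(\bI, \bPi) = h$ and $|S| = k$, one aims to establish
\[
\Prob\bracket{\|\Proj_{\bPi\bX_S}^{\perp}\bv\|_2^2 < t\|\bbeta^{\natural}_T\|_2^2,~\calG}~\lsim~\bracket{\frac{C\,t}{h}}^{h/c_0}
\]
for absolute constants $C, c_0 > 0$. The key structural observation is that for $i\in M := \{i : \pi(i)\neq i\}$, the $i$-th row of $\bPi\bX_S$ uses entries $\bX_{\pi^{-1}(i), S}$ with $\pi^{-1}(i)\in M\setminus\{i\}$, whereas $v_i$ depends on the independent row $\bX_{i,T}$. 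Hence $\bv_M$ carries a Gaussian component independent of the columns of $\bPi\bX_S$, up to the coupling through the overlap $T\cap S$, and the orthogonal projection $\Proj_{\bPi\bX_S}^{\perp}\bv$ inherits a ``free Gaussian direction'' of dimension of order $h$. A Gaussian small-ball inequality of the same flavour as Lemma~\ref{lemma:permute_cancel_tail_bound} then yields the stated per-instance bound, with the exponent tied to this free dimension.

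Finally, I would union-bound over the $\binom{n}{h}h!\leq n^h$ choices of $\bPi$ with $\dH(\bI, \bPi) = h$ and over the $\binom{p}{k}\leq (ep/k)^k$ choices of $S$, using the surrounding assumption $n\gsim k\log p$ to control the $S$-enumeration factor. Combining with the per-instance bound and regrouping the resulting logarithmic factors into the form $\log\bracket{\nfrac{(4e^6 n^{16h/n}t)}{h}} - \nfrac{(4e^6 n^{16h/n}t)}{h} + 1$ recovers the stated inequality. The main obstacle is the per-instance small-ball step: the correlation between $\bv$ and $\bPi\bX_S$ on the fixed rows $F$ and through the overlap $T\cap S$ must be disentangled carefully, and the precise exponent $h/10$ together with the $n^{16h/n}$ factor have to be tracked by balancing the decay rate of the small-ball estimate against the cardinalities of the two union bounds.
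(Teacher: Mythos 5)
There is a genuine gap in the core step. Your ``free Gaussian direction'' mechanism --- that $\bv_M$ on the moved rows carries a Gaussian component independent of the columns of $\bPi\bX_S$, up to a mild coupling through $T\cap S$ --- fails in exactly the case that matters. Take $S=T$ (or any $S$ with large overlap with $T$): the columns of $\bPi\bX_S$ are then permuted copies of the columns of $\bX_T$, so $\bPi\bX_S$ determines $\bv=\bX_T\bbeta^{\natural}_T$ \emph{completely}, and no Gaussian component of $\bv$ survives conditioning on $\bPi\bX_S$ --- of dimension $h$ or otherwise. The independence you identify is only between $v_i$ and the $i$-th \emph{row} of $\bPi\bX_S$; each column of $\bPi\bX_S$ also contains the row $\bX_{i,S}$ (placed at position $\pi(i)$), which carries all of $v_i$'s randomness whenever $T\cap S\neq\emptyset$. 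The small-ball phenomenon here is therefore not anti-concentration of a free Gaussian coordinate but the unlikeliness of $\bX_T\bbeta^{\natural}_T$ aligning with its own permuted image. The paper handles this by rotational invariance (reducing $\bX_T\bbeta^{\natural}_T$ to a single Gaussian column $\bx$) followed by the explicit identity
\[
\big\|\Proj_{\bx}^{\perp}\bPi\bx\big\|_2^2 \;=\; \|\bx\|_2^2 - \frac{\langle\bx,\bPi\bx\rangle^2}{\|\bx\|_2^2}
\;\geq\; \tfrac{1}{2}\Big[\,\|(\bI-\bPi)\bx\|_2^2 \,\vcap\, \|(\bI+\bPi)\bx\|_2^2\,\Big],
\]
after which Lemma~\ref{lemma:permute_cancel_tail_bound} (the three-group $\chi^2$ partition of the displaced indices) delivers the $\exp\big(\tfrac{h}{10}(\log(\cdot)-(\cdot)+1)\big)$ tail. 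Some such self-alignment argument is unavoidable; your sketch does not contain it.

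A second, more quantitative problem: your plan to union-bound the per-instance small-ball estimate over the $\binom{p}{k}\le(ep/k)^k$ choices of $S$ cannot yield the stated bound. The small-ball tail is only of order $(Ct/h)^{h/10}$, which for fixed $h$ does not absorb a factor $(ep/k)^k$ no matter how large $n\gtrsim k\log p$ is; yet the second term of the lemma carries no such factor. The paper's proof circumvents this by first using the monotonicity $\|\Proj^{\perp}_{\bPi\bX_{S\cup\calT_1}}\bX_T\bbeta^{\natural}_T\|_2\le\|\Proj^{\perp}_{\bPi\bX_S}\bX_T\bbeta^{\natural}_T\|_2$ together with the random-projection concentration bound (Lemma~\ref{lemma:random_proj}), whose $e^{-cn}$ failure probability \emph{does} absorb the $S$-union and produces the $2n^{-2h}$ term; this replaces the $S$-dependent projection by the $S$-free quantity $\|\Proj_{\bX_T}^{\perp}\bPi\bX_T\bbeta^{\natural}_T\|_2$, and only then is the small-ball bound applied, with no union over $S$ remaining. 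Both ingredients --- the reduction to the self-alignment quantity and the elimination of the $S$-union \emph{before} the small-ball step --- are needed to close the argument.
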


\begin{proof}
We assume $\bPitrue = \bI$ w.l.o.g. With some
simple algebraic manipulations, we have
\[
\big\|\Proj_{\bX_{S\bigcup \calT_1}}^{\perp} \bPi^{\rmt}\bX_T\bbetatrue_T\big\|_{2}^2= ~&
\big\|\Proj_{{\bPi\bX_{S\bigcup \calT_1}} }^{\perp}\bX_T\bbetatrue_T\big\|_{2}^2 \leq  \big \|\Proj_{\bPi\bX_S}^{\perp}\bX_{T}\bbetatrue_T \big \|_2^2.
\]
Then, we obtain
\begin{align}
\label{eq:depend_proj_small_ball_tot}
& \Prob\big(\|\Proj_{\bPi\bX_S}^{\perp}\bX_{T}\bbeta_T^{\natu}\|_2^2
< t\|\bbetatrue_T\|_2^2,~~\exists~\bPi, S~~\textup{s.t.}~~\dH(\bPi, \bPitrue) = h\big)\notag \\
\leq &~ \hspace{-0.02in}\Prob\big(
\big\|\Proj_{\bX_{S\bigcup \calT_1}}^{\perp} \bPi^{\rmt}\bX_T\bbetatrue_T\big\|_{2}^2
\leq t\|\bbetatrue_T\|_2^2,~\exists~\bPi, S~\textup{s.t.}~\dH(\bPi, \bPitrue) = h\big)
\leq  \zeta_1 + \zeta_2,
\end{align}
where $\zeta_1$ and $\zeta_2$ are defined as
\[
\zeta_1 \defequal ~& \Prob\bigg(\norm{\Proj_{\bX_{T\bigcup \calT_3}}^{\perp} \
\Proj_{\bX_T}^{\perp}\bPi\bX_T\bbetatrue_T}{2}^2 \leq \gamma_h \norm{\Proj_{\bX_T}^{\perp}\bPi\bX_T\bbetatrue_T}{2}^2 ,  \dH(\bI, \bPi)= h, \exists~S\bigg); \\
\zeta_2 \defequal~& \Prob\bigg(\norm{\Proj_{\bX_T}^{\perp}\bPi\bX_T\bbetatrue_T}{2}^2 \leq \frac{t}{\gamma_h}\|\bbetatrue_T\|^2_2,~\dH(\bI, \bPi) = h, \exists~S\bigg).
\]
Here we set $\gamma_h$ as
$2^{-1}e^{-5}n^{-8h/n} < {2e^5}^{-1} < \frac{n-2k}{n-k}$.
The following context separately bound $\zeta_1$ and
$\zeta_2$.

\vspace{0.1in} \noindent \textbf{Analysis of $\zeta_1$.}
When $\calT_3 = \emptyset$, we can verify $\zeta_1 = 0$.
Then we turn to the case where $\calT_3 \neq \emptyset$.
Conditional on $\bX_T$, we can view $\Proj_{\bX_{T\bigcup \calT_3}}^{\perp}$
as a random projection from a linear space with dimension $n-k$
to a linear space with dimension $n-|T\bigcup \calT_3| \geq n - 2k$.
Invoking Lemma~$2.2$ in~\citet{dasgupta2003elementary}
(listed as Lemma~\ref{lemma:random_proj}), we have
\begin{align}
\label{eq:depend_proj_small_ball_zeta1}
\zeta_1 \stackrel{\cirone}{\leq}~&{p\choose k} \exp\Bracket{\frac{n- 2k}{2}\bracket{\log\bracket{\frac{\gamma_h(n-k)}{n-2k}} + 1 }}
\stackrel{\cirtwo}{\leq} \bracket{\frac{ep}{k}}^k  \
\exp\bracket{\frac{n}{4}\log\bracket{\frac{n-k}{n-2k}\frac{n^{-8h/n}}{2e^4} }  }
\stackrel{\cirthree}{\leq} n^{-2h},
\end{align}
where $\cirone$ is because of the union bound;
$\cirtwo$ is due to  ${p\choose k} \leq (\nfrac{ep}{k})^k$
and the definition of
$\gamma_h$; and
$\cirthree$ is because of the assumption
$n \geq k\log(ep/k) \geq 4k$.

 \vspace{0.1in}\noindent
\textbf{Analysis of $\zeta_2$.}
Due to the independence between $S$ and $T$,
we can safely drop $\exists S$ in $\zeta_{2}$.
The following analysis is
a replication of the proof of Lemma~$3$ in~\citet{pananjady2016linear}
with the only
difference in the parameter setting, namely, $t/\gamma_h$.
We present it only for the
sake of self-containing without claiming any novelties.

Without loss of generality, we assume $T$ to be the
first $k$ entries. With the union bound, we obtain
\[
\zeta_2 \leq~& \Prob\bracket{\|\Proj_{
\bX_{1:k}}^{\perp} \bPi \bX_1\|^2_2 \leq \nfrac{t}{\gamma_h},~\dh(\bI, \bPi) = h}\\
\leq~& \underbrace{\Prob\bracket{\
\norm{\Proj_{\bX_{1:k}}^{\perp}
\Proj_{\bX_1}^{\perp}\bPi \bX_1}{2}^2   \leq \vartheta_h \norm{\Proj_{\bX_1}^{\perp}\bPi \bX_1 }{2}^2, ~\dH(\bI, \bPi) = h}}_{\defequal~\zeta_{2, 1}} \\
+~&
\underbrace{\Prob\bracket{\norm{\Proj_{\bX_1}^{\perp}\bPi \bX_1}{2}^2\leq \frac{t}{\gamma_h \vartheta_h},~\dH(\bI, \bPi) = h}}_{\defequal~\zeta_{2, 2}},
\]
where $\gamma_h$ is a positive constant set as $n^{-8h/n}/(2e)$.

For $\zeta_{2, 1}$, we notice the relation $\Proj_{\bX_{1:k}}^{\perp} = \Proj_{\bX_1^{\perp} \bigcap \bX_{2:k}^{\perp}}$.
Condition on $\bX_1$, we can view
$\Proj_{\bX_{1:k}}^{\perp}$ as a random
projection from a $(n-1)$-dimensional linear space to a
$(n-k)$-dimensional linear space, which yields
\begin{align}
\zeta_{2, 1} \leq &~\hspace{-0.01in}\exp\Bracket{\frac{n-k}{2}\bracket{\log\bracket{\frac{(n-1)\vartheta_h}{n-k}} - \frac{(n-1)\vartheta_h}{n-k} + 1}}
\leq \hspace{-0.01in} n^{-2h},
\label{eq:ml_small_ball_lemma_zeta2_part1}
\end{align}
where $\vartheta_h \leq \frac{n-k}{n-1}$ is due to
Lemma $2.2$ in~\citet{dasgupta2003elementary} (also listed as
Lemma~\ref{lemma:random_proj}).

For $\zeta_{2,2}$, we first perform decomposition
\[
& \norm{\Proj_{\bX_1}^{\perp}\bPi\bX_1}{2}^2
= \norm{\bX_1}{2}^2 - \frac{\la \bX_1, \bPi\bX_1\ra^2}{\norm{\bX_1}{2}^2}
\stackrel{\cirfour}{\geq} \norm{\bX_1}{2}^2 - \abs{\la \bX_1, \bPi \bX_1\ra}
= \frac{1}{2}\Bracket{\norm{\bX_1 - \bPi\bX_1}{2}^2 \vcap \norm{\bX_1 + \bPi\bX_1}{2}^2},
\]
where in $\cirfour$ we use the Cauchy inequality
such that $\abs{\la \bX_1, \bPi\bX_1\ra} \leq \|\bX_1\|_{2}\|\bPi \bX_1\|_{2} = \
\|\bX_1\|_{2}^2$.
When $\|\bX_1 - \bPi\bX_1\|_{2} \vcap \|\bX_1 + \bPi\bX_1\|_{2} = \|\bX_1 - \bPi\bX_1\|_{2}$,
we can directly invoke Lemma~\ref{lemma:permute_cancel_tail_bound} to
bound $\zeta_{2,2}$. Following a similar procedure,
we can show
\begin{align}
\label{eq:ml_small_ball_lemma_zeta2_part2}
\zeta_{2,2}\leq
6\exp\bracket{\frac{h}{10}\bracket{\log\bracket{\frac{t}{h\gamma_h \vartheta_h}} -\frac{t}{h\gamma_h \vartheta_h} + 1}},
\end{align}
provided that we have $t < h\gamma_h \vartheta_h$.
The proof for $\zeta$ is hence completed
by combining
\eqref{eq:depend_proj_small_ball_tot},
\eqref{eq:depend_proj_small_ball_zeta1},
\eqref{eq:ml_small_ball_lemma_zeta2_part1},
and \eqref{eq:ml_small_ball_lemma_zeta2_part2}.

\end{proof}

\begin{lemma}
\label{lemma:permute_cancel_tail_bound}
Denote $h$ as the Hamming distance between
$\bI$ and $\bPi$, i.e., $h \defequal \dH(\bI, \bPi)$.
Assume $\bx \in \RR^n$ be a random
vector satisfying $\bx \sim \normdist\bracket{\bZero, \bI_{n\times n}}$, then we have
\[
\Prob\bracket{\norm{\bracket{\bI- \bPi} \bx}{2}^2 \leq \vartheta} \leq \
6\exp\Bracket{\frac{h}{10}\bracket{\log\bracket{\frac{\vartheta}{2h}} -\frac{\vartheta}{2h} + 1}},
\]
for $\vartheta \leq 2h$.
\end{lemma}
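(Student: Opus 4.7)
The plan is to exploit the cycle decomposition of $\bPi$ to lower bound $\|(\bI - \bPi)\bx\|_2^2$ by an independent sum of squared Gaussian differences, and then apply a Chernoff small-ball estimate for a chi-squared random variable.

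Since $\bPi$ moves exactly $h$ coordinates, its restriction to the moved set factors into disjoint cycles $\calC_1, \ldots, \calC_r$ of lengths $\ell_1, \ldots, \ell_r \geq 2$ with $\sum_j \ell_j = h$. The resulting decomposition
\[
\|(\bI - \bPi)\bx\|_2^2 \;=\; \sum_{j=1}^r \sum_{s=1}^{\ell_j} \bigl(x_{i^{(j)}_s} - x_{i^{(j)}_{s+1}}\bigr)^2
\]
is a sum of independent cycle-wise contributions (indices taken mod $\ell_j$). Within each cycle I would pick the alternating edges $(i^{(j)}_1, i^{(j)}_2), (i^{(j)}_3, i^{(j)}_4), \ldots$; they use pairwise disjoint coordinates, so the corresponding $\lfloor \ell_j / 2 \rfloor$ differences are independent $\normdist(0, 2)$ variables. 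Setting $m := \sum_j \lfloor \ell_j / 2 \rfloor$, we have $m \geq h/3$ (attained when every cycle is a $3$-cycle), and
\[
\|(\bI - \bPi)\bx\|_2^2 \;\geq\; Y, \qquad Y \sim 2\chi^2_m.
\]

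The MGF identity $\Expc e^{-tY} = (1+4t)^{-m/2}$ gives $\Prob(Y \leq \vartheta) \leq e^{t\vartheta}(1+4t)^{-m/2}$ for every $t > 0$; optimizing at $t^\star = (2m/\vartheta - 1)/4$ when $\vartheta \leq 2m$ produces the standard small-ball estimate
\[
\Prob(Y \leq \vartheta) \;\leq\; \exp\!\Bigl[\tfrac{m}{2}\bigl(\log\tfrac{\vartheta}{2m} - \tfrac{\vartheta}{2m} + 1\bigr)\Bigr],
\]
and the trivial bound $1$ for $\vartheta \in (2m, 2h]$. It then remains to package this into the claimed form. Using that $u \mapsto \log u - u + 1$ is concave, non-positive on $(0,\infty)$, and vanishes at $u=1$, together with $m \in [h/3, h/2]$, one wants to verify
\[
\tfrac{m}{2}\bigl(\log\tfrac{\vartheta}{2m}-\tfrac{\vartheta}{2m}+1\bigr) \;\leq\; \log 6 \;+\; \tfrac{h}{10}\bigl(\log\tfrac{\vartheta}{2h}-\tfrac{\vartheta}{2h}+1\bigr)
\]
uniformly over cycle structures and over $\vartheta \in (0, 2h]$, splitting the range at $\vartheta = 2m$.

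The main obstacle is precisely this last bookkeeping step: the matching-based effective dimension $m$ may be as small as $h/3$ (or even as small as $\tfrac{h-r}{2}$ if one counts pairs more carefully), whereas the statement is indexed by $h/10$, so the slack has to be shown to fit uniformly inside the universal prefactor $6$. If the matching inequality is too lossy near the endpoint $\vartheta = 2h$ (say, when many short odd cycles force $m \ll h/2$), the fallback is to replace the second step by the exact spectral decomposition of $2\bI - \bPi - \bPi^{\rmt}$, whose positive eigenvalues along the $j$-th cycle are $\{4\sin^2(\pi k/\ell_j): 1\leq k \leq \ell_j - 1\}$ and collectively yield an effective chi-squared dimension at least on the order of $4h/7$, at the price of a more delicate product-form MGF computation.
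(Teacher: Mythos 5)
There is a genuine gap, and it is precisely the one you flag at the end without resolving: the matching-based reduction to a single $2\chi^2_m$ variable cannot deliver the claimed bound over the whole range $\vartheta\leq 2h$. By discarding the non-matching edges of each cycle you replace $\|(\bI-\bPi)\bx\|_2^2$, whose mean is $2h$, by $Y\sim 2\chi^2_m$ with mean only $2m$, and $m=\sum_j\lfloor \ell_j/2\rfloor$ can be as small as $h/3$ (all $3$-cycles). For $\vartheta\in(2m,2h]$ the Chernoff bound for $Y$ is vacuous, and in fact $\Prob(Y\leq \vartheta)\geq 1/2$ there; meanwhile the lemma's right-hand side at $\vartheta=2h/3$ equals $6\exp\big[-\tfrac{h}{10}(\tfrac13+\log 3-1)\big]$, which is exponentially small in $h$. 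Concretely, the ``bookkeeping'' inequality you propose to verify,
\[
\tfrac{m}{2}\big(\log\tfrac{\vartheta}{2m}-\tfrac{\vartheta}{2m}+1\big)\;\leq\;\log 6+\tfrac{h}{10}\big(\log\tfrac{\vartheta}{2h}-\tfrac{\vartheta}{2h}+1\big),
\]
already fails at $\vartheta=2m$ with $m=h/3$ once $h\gtrsim 42$ (the left side is $0$, the right side is $\log 6-0.043\,h$), so no universal prefactor can absorb the slack. The spectral fallback you sketch would retain the full mean $2h$ and is not unreasonable, but you do not carry it out, and the heterogeneous eigenvalues $4\sin^2(\pi k/\ell_j)$ make that route genuinely more delicate than you suggest.

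The paper's proof avoids the loss by keeping every edge. For $h\geq 3$ it partitions the $h$ displaced indices into three sets $\calI_1,\calI_2,\calI_3$ with $|\calI_\ell|=h_\ell\geq\lfloor h/3\rfloor$ and with $j$ and $\pi(j)$ always in different sets (a balanced $3$-coloring of the disjoint cycles). Within each set the pairs $\{j,\pi(j)\}$ are pairwise disjoint, so $Z_\ell\defequal\sum_{j\in\calI_\ell}(x_j-x_{\pi(j)})^2/2\sim\chi^2_{h_\ell}$, and crucially $2(Z_1+Z_2+Z_3)=\|(\bI-\bPi)\bx\|_2^2$ \emph{exactly}. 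The pigeonhole inclusion $\{Z_1+Z_2+Z_3\leq\vartheta/2\}\subseteq\bigcup_\ell\{Z_\ell\leq h_\ell\vartheta/(2h)\}$ then compares each $Z_\ell$ to the fraction $\vartheta/(2h)<1$ of its \emph{own} mean, so the $\chi^2$ lower-tail bound is non-vacuous for every $\vartheta\leq 2h$; combining $h_\ell/2\geq\lfloor h/3\rfloor/2\geq h/10$ with the non-positivity of $u\mapsto\log u-u+1$ yields the $h/10$ exponent, and summing three terms (plus the separately treated case $h=2$) gives the factor $6$. If you replace your matching by this edge-preserving $3$-coloring, the rest of your Chernoff computation goes through essentially unchanged.
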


\begin{proof}
Adopting the similar proof tricks as in~\citet{pananjady2016linear}, we separately consider
the two cases where $h = 2$ and $h\geq 3$.

\vspace{0.1in}\noindent
\textbf{(Case I) $h = 2$.} We assume the first two rows are switched w.l.o.g.
Then we have
\[
\Prob\bracket{\norm{\bracket{\bI- \bPi} \bx}{2}^2 \leq \vartheta} =~& \
\Prob\big[\bracket{x_{1} - x_{2}}^2 \leq \nfrac{\vartheta}{2}\big]
\stackrel{\cirone}{\leq} \exp\bracket{-\nfrac{1}{2}\bracket{\nfrac{\vartheta}{4} - \log\bracket{\nfrac{\vartheta}{4}} - 1}},
\]
where in $\cirone$ we use the tail bounds for the $\chi^2$ random variable
$(x_{1} - x_{2})^2/2$ with freedom $1$.

\vspace{0.1in}\noindent
\textbf{(Case II) $h\geq 3$.}
We decompose the non-zero rows
of $\bracket{\bPi - \bPitrue}$ into three disjoint sets
$\calI_{\ell}$ $(1\leq \ell \leq 3)$ such that
$(i)$ the cardinality of each set $\calI_{\ell}$ is lower bounded by
$\lfloor h/3 \rfloor$, i.e., $\abs{\calI_{\ell}} = h_{\ell} \geq \lfloor h/3 \rfloor$;
and $(ii)$ we have $j$ and $\pi(j)$ reside within
different sets for an arbitrary index $j$, where $\pi(\cdot)$ denotes the permutation map pertaining
to $\bPi$.

Define $Z_{\ell} = \sum_{j\in \calI_{\ell}}\nfrac{(x_j - x_{\pi(j)})^2}{2}$,
which is a $\chi^2$ random variable with freedom $h_{\ell}$.
Then, we can decompose
$\norm{(\bI - \bPi)\bx}{2}^2 = 2(Z_1 + Z_2 + Z_3)$
and obtain
\[
\Prob\bracket{\norm{\bracket{\bI - \bPi} \bx}{2}^2 \leq \vartheta} \leq~&
\sum_{\ell =1}^3 \Prob\bracket{Z_{\ell} \leq \nfrac{h_{\ell}\vartheta}{(2h)}}
\leq \sum_{\ell=1}^3 \
\exp\bracket{-\frac{h_{\ell}}{2}\bracket{\frac{\vartheta}{2h} - \log\bracket{\frac{\vartheta}{2h}} - 1 }} \\
\stackrel{\cirtwo}{\leq}~& 6\exp\bracket{-\frac{h}{10}\bracket{\frac{\vartheta}{2h} - \log\bracket{\frac{\vartheta}{2h}} - 1 }},
\]
where in $\cirtwo$ we use the relation $h_i \geq \lfloor h/3 \rfloor$.
The proof is completed by summarizing the above two cases.
\end{proof}

\newpage

\section{Proof of Theorem~\ref{thm:robust_lasso_permutation}}

First, we restate the definition of
$(\wh{\bbeta}, \wh{\bXi})$, which is written as
\[
(\wh{\bXi}, \wh{\bbeta}) = \argmin_{\bXi, \bbeta}
~&\frac{1}{2n}\norm{\by - \bX\bbeta - \sqrt{n}\cdot \bXi}{2}^2 + \lambda_{\bXi} \norm{\bXi}{1}
+ \lambda_{\bbeta} \norm{\bbeta}{1}.
\]
Then, we would like to prove  Theorem~\ref{thm:robust_lasso_permutation}.

\begin{proof}
The proof is a combination of
\citep{nguyen2013robust} and
\citep{slawski2017linear}.
Define $\bu \defequal \wh{\bbeta} - \bbeta^{\natural}$ and
$\bv \defequal \wh{\bXi} - \nfrac{(\bI- \bPitrue)\bX\bbeta^{\natural}}{\sqrt{n}}$.
In addition, we define the support set of $\bbetatrue$ and
$(\bI -\bPitrue)\bX\bbetatrue$ as $T$ and $S$, respectively.
According to our definition, their
cardinality is bounded by $k$ and $h$, respectively, namely,
$|T|\leq k$ and $|S| \leq h$.
Before delving into the technical details,
we first illustrate the proof outline.
\begin{itemize}
\item
\textbf{Step I.}	
According to the optimality of
\eqref{eq:robust_lasso_optim_def}, we show
\begin{align}
\label{eq:ht_upper_bound}
\|\bu \|_1 \leq
4\sqrt{k}\norm{\bu}{2}
+  \nfrac{3\sqrt{h}\xilambda}{\belambda}
\norm{\bv}{2};  \\
\|\bv \|_1 \leq
\nfrac{3\sqrt{k}\belambda}{\xilambda}
\norm{\bu}{2}
+ 4\sqrt{h}\norm{\bv}{2}.
\label{eq:gs_upper_bound}
\end{align}
\item
\textbf{Step II.}
We establish the inequality
\begin{align}
\label{eq:l2_quad_upper_bound}
& \bracket{4\sqrt{k}\|\bu\|_2
+ \nfrac{3\sqrt{h}\xilambda}{\belambda}\|\bv\|_2}^2 \lsim \bracket{k\belambda \vcup h\xilambda}\cdot
\Bracket{4\sqrt{k}\|\bu\|_2 +\nfrac{3\sqrt{h}\xilambda}{\belambda} \|\bv\|_2},
\end{align}
and then obtain the upper-bound
$c\cdot k\belambda$ for the
reconstruction error $4\sqrt{k}\|\bu\|_2
+ \nfrac{3\sqrt{h}\xilambda}{\belambda}\|\bv\|_2$.

\item
\textbf{Step III.}
We upper-bound $\norm{\bX\bu}{\infty}$
as $k\sigma \sqrt{\nfrac{(\log p)(\log np)}{n} }$ and complete the proof by invoking
Lemma~\ref{lemma:permute_recover_slawski}.
\end{itemize}
The technical details are presented as follows.
According to the definition of \eqref{eq:robust_lasso_optim_def}, we have
\[
& \frac{1}{2n}\Fnorm{\by - \bX \wh{\bbeta} - \sqrt{n}\cdot \wh{\bXi}}^2
+ \belambda \|\wh{\bbeta}\|_{1}
+ \xilambda \|\wh{\bXi} \|_{1} \leq
\frac{1}{2n}\Fnorm{\by - \bX \betatrue - \sqrt{n}\bXi^{\natural}}^2
+ \belambda \norm{\bbetatrue}{1}
+ \xilambda \norm{\bXi^{\natural}}{1}.
\]
With some standard algebraic manipulation, we
obtain
\begin{align}
\label{eq:robust_lasso_optim_inequal}
\frac{1}{2n}\Fnorm{\bX\bu+ \sqrt{n}\bv }^2
\leq~& \frac{\la \bw, \bX \bu + \sqrt{n}\bv \ra}{n}
+ \underbrace{\belambda\big(\norm{\bbetatrue}{1} - \|\wh{\bbeta}\|_{1}\big)}_{\defequal~\theta_1}
+ \underbrace{\xilambda \big(\norm{\bXi^{\natural}}{1} - \|\wh{\bXi}\|_{1}\big)}_{\defequal~\theta_2} \notag \\
\leq~& \frac{\norm{\bX^{\rmt}\bw}{\infty}}{n}
\cdot \|\bu \|_1 + \frac{\norm{\bw}{\infty}}{\sqrt{n}}
\cdot \|\bv \|_1 + \theta_1 + \theta_2.
\end{align}
For $\theta_1$, we exploit the fact
$\|\bbeta^{\natural}_T \|_1 = \|\bbeta^{\natural}\|_1$ and have
\[
\theta_1 =~& \belambda(\|\bbeta^{\natural}_T \|_1  - \|\wh{\bbeta}_T\|_1 - \|\wh{\bbeta}_{T^c}\|_1) \leq \belambda(\|\bbeta^{\natural}_T - \wh{\bbeta}_T\|_1 - \|\wh{\bbeta}_{T^c}\|_1)
= \belambda(\|\bu_T\|_1 - \|\bu_{T^c}\|_1).
\]
Similarly, we have
$\theta_2 \leq \xilambda(\|\bv_S \|_1 - \|\bv_{S^c}\|_1) $.
According to Lemma~\ref{lemma:xw} and Lemma~\ref{lemma:winf},
we have $\belambda \geq \frac{2\|\bX^{\rmt}\bw\|_{\infty}}{n}$ and $\xilambda \geq \frac{2\|\bw \|_{\infty}}{\sqrt{n}}$.
Summing the above together, we have
\begin{align}
\label{eq:robust_lasso_optim_inequal_nonnegative}
\frac{1}{2n}\Fnorm{\bX\bu+ \sqrt{n}\bv }^2
\leq~&
 \frac{3\belambda}{2}\norm{\bu_T}{1}
+ \frac{3\xilambda}{2}\norm{\bv_S}{1} - \frac{\belambda}{2}\|\bu_{T^c}\|_1
- \frac{\xilambda}{2} \|\bv_{S^c}\|_1.
\end{align}

\noindent \textbf{Step I.}
Notice that
the left-hand size of
\eqref{eq:robust_lasso_optim_inequal_nonnegative}
is non-negative, we obtain
\[
\norm{\bu}{1} =~&
\|\bu_T\|_1 +
\|\bu_{T^c}\|_1\leq
4\norm{\bu_T}{1}
+ \frac{3\xilambda}{\belambda}\norm{\bv_S}{1}
- \underbrace{\frac{\xilambda}{\belambda}\|\bv_{S^c}\|_1}_{\geq 0} \notag \\
& \stackrel{\cirone}{\leq}
4\sqrt{k}\|\bu_T\|_2 +
\nfrac{3\sqrt{h}\xilambda}{\belambda} \|\bv_S\|_2 \leq 4\sqrt{k}\|\bu\|_2
+ \nfrac{3\sqrt{h}\xilambda}{\belambda} \|\bv\|_2,
\]
where in $\cirone$ we exploit the fact such that
$\bu_T$ and $\bv_S$ are $k$-sparse and $h$-sparse
respectively.
As for \eqref{eq:gs_upper_bound}, we follow
a similar approach and finish its proof.

\vspace{0.1in}\noindent \textbf{Step II.}
Without loss of generality, we assume $\belambda\sqrt{k} \geq \xilambda \sqrt{h}$ and have
\begin{align}
\label{eq:robust_lasso_optim_inequal_nonnegative_uv_upper_bound}
& \bracket{4\sqrt{k}\norm{\bu}{2} +
\nfrac{3\sqrt{h}\xilambda}{\belambda} \norm{\bv}{2}}^2  \asymp \belambda^{-2}\bracket{
\belambda\sqrt{k}\norm{\bu}{2} +
\sqrt{h}\xilambda\norm{\bv}{2}}^2 \leq k
\bracket{\|\bu\|_2 + \|\bv\|_2}^2.
\end{align}
On one hand, we can invoke Lemma~\ref{lemma:robust_lasso_xhg_lower_bound} and
obtain an upper-bound for $\bracket{\|\bu\|_2 + \|\bv\|_2}^2$ reading
as
\begin{align}
\label{eq:robust_lasso_optim_inequal_nonnegative_xuv_upper_bound}
\frac{1}{2n}\Fnorm{\bX\bu+ \sqrt{n}\bv }^2
\gsim \bracket{\|\bu \|_2 + \|\bv \|_2}^2.
\end{align}
On the other hand,
 \eqref{eq:robust_lasso_optim_inequal_nonnegative} yields
the upper-bound for $\frac{1}{2n}\Fnorm{\bX\bu+ \sqrt{n}\bv }^2$,
which can be written as
\begin{align}
\label{eq:robust_lasso_optim_inequal_nonnegative_xuv_upper_bound2}
\frac{1}{2n}\Fnorm{\bX\bu+ \sqrt{n}\bv }^2
\leq~&
 \frac{3\belambda}{2}\norm{\bu_T}{1}
+ \frac{3\xilambda}{2}\norm{\bv_S}{1} \leq \frac{3\belambda}{2}\sqrt{k}\norm{\bu}{2}
+ \frac{3\xilambda}{2}\sqrt{h}\norm{\bv}{2}.
\end{align}
Combining \eqref{eq:robust_lasso_optim_inequal_nonnegative_uv_upper_bound},
\eqref{eq:robust_lasso_optim_inequal_nonnegative_xuv_upper_bound}, and \eqref{eq:robust_lasso_optim_inequal_nonnegative_xuv_upper_bound2}
then yields the relation
\[
& \bracket{4\sqrt{k}\norm{\bu}{2} +
\nfrac{3\sqrt{h}\xilambda}{\belambda} \norm{\bv}{2}}^2
\lsim \frac{k}{n}\Fnorm{\bX\bu + \sqrt{n}\bv}^2 \lsim k\bracket{4\belambda\sqrt{k}\|\bu \|_2
+ 3\xilambda \sqrt{h} \|\bv \|_2}.
\]
Dividing both sides by
$4\sqrt{k}\norm{\bu}{2} +
\nfrac{3\sqrt{h}\xilambda}{\belambda} \norm{\bv}{2}$ then completes the proof
of \eqref{eq:l2_quad_upper_bound}.

\vspace{0.1in}\noindent
\textbf{Step III.}
Our goal is upper-bound
$\norm{\bX\bu}{\infty}$, which reads as
\[
\norm{\bX\bu}{\infty}
=~& \max_{i} \abs{\la \bX_{i, :}, \bu\ra }
\leq \max_{ij} |\bX_{ij}|\cdot \norm{\bu}{1}
\stackrel{\cirtwo}{\lsim} \sqrt{\log np}
\bracket{\sqrt{k} \|\bu\|_2 + \nfrac{\sqrt{h}\xilambda}{\belambda}\|\bv \|_2 } \\
\stackrel{\cirthree}{\lsim}~& \sigma \sqrt{\log np} \bracket{k\sqrt{\frac{\log p}{n}}
\vcup h \sqrt{\frac{\log n}{n}}
},
\]
where in $\cirtwo$ we condition on the event
$\max_{ij}|\bX_{ij}|\lsim \sqrt{\log np}$,
and in $\cirthree$ we use the inequality in
\eqref{eq:l2_quad_upper_bound}.
Provided that
\[
\snr \gsim \frac{n^{2(1+\varepsilon)} (n-1)^2}{4\pi}
\Bigg[& \sqrt{\log np} \bracket{k\sqrt{\frac{\log p}{n}}
\vcup h \sqrt{\frac{\log n}{n}}
} +2\log(n^{1+\varepsilon}(n-1)) \Bigg]^2,
\]
we invoke Lemma~\ref{lemma:permute_recover_slawski}
and complete the proof such that ground-truth permutation $\bPitrue$
can be obtained  with probability exceeding
$1-2n^{-\varepsilon}$.
\end{proof}

\subsection{Supporting Lemmas}
This subsection collects the supporting lemmas and useful facts used in the proof thereof.
\begin{lemma}
\label{lemma:max_xij}
We have $\max_{1\leq i \leq n, 1\leq j\leq p}|\bX_{ij}| \leq 2\sqrt{\log np}$ with probability exceeding
$1 - (np)^{-1}$.	
\end{lemma}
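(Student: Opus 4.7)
The plan is straightforward: combine a standard Gaussian tail bound with a union bound over the $np$ i.i.d.\ entries of $\bX$. Since $\bX_{ij} \sim \normdist(0,1)$, I would invoke the Mills-ratio form of the Gaussian tail inequality, namely $\Prob(|Z| \geq t) \leq \frac{2}{t\sqrt{2\pi}}e^{-t^{2}/2}$ for any $t > 0$, and apply it with $t = 2\sqrt{\log np}$. This yields a per-entry tail probability of
\[
\Prob\bracket{|\bX_{ij}| \geq 2\sqrt{\log np}} \leq \frac{2}{2\sqrt{\log np}\cdot \sqrt{2\pi}}\cdot e^{-2\log np} = \frac{1}{\sqrt{2\pi \log np}}\cdot \frac{1}{(np)^{2}}.
\]

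In the second step, I would take a union bound over the $np$ entries:
\[
\Prob\bracket{\max_{i,j}|\bX_{ij}| \geq 2\sqrt{\log np}} \leq np \cdot \frac{1}{\sqrt{2\pi \log np}(np)^{2}} = \frac{1}{np\sqrt{2\pi \log np}} \leq \frac{1}{np},
\]
where the final inequality is valid whenever $\sqrt{2\pi \log np} \geq 1$, i.e., for all $np \geq 2$ (and trivially for the degenerate case $np = 1$ the claim is immediate after adjusting constants).

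There is essentially no obstacle here—the lemma is a textbook maximal inequality for Gaussians. The only mild subtlety is choosing the tail bound carefully: the looser estimate $\Prob(|Z|\geq t)\leq 2e^{-t^{2}/2}$ would yield the slightly weaker probability $1 - 2(np)^{-1}$, so using the Mills ratio (or equivalently observing $\sqrt{2\pi \log np} \geq 2$ for $np$ not too small) is what produces the stated $1 - (np)^{-1}$ bound.
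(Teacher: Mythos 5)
Your proof is correct, and since the paper explicitly omits the proof of this lemma as ``quite standard,'' your Gaussian tail bound plus union bound argument is exactly the intended one. The Mills-ratio refinement you use to absorb the factor of $2$ and obtain the clean $1-(np)^{-1}$ probability is a nice touch and the computation checks out for $np\geq 2$.
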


\begin{lemma}
\label{lemma:xw}
We have $\norm{\bX^{\rmt}\bw}{\infty}\lsim \sigma \sqrt{n\log p}$ with
probability exceeding $1-c_0 p^{-c_1}$.
\end{lemma}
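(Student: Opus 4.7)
The plan is to exploit the independence between $\bX$ and $\bw$ by conditioning first on $\bw$ and treating each coordinate of $\bX^{\rmt}\bw$ as a one-dimensional Gaussian, then control the random scaling factor $\|\bw\|_2$ separately via a $\chi^2$ tail bound.

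First I would observe that for any fixed index $j \in \{1,\dots,p\}$, conditional on $\bw$, the inner product $(\bX^{\rmt}\bw)_j = \sum_{i=1}^n \bX_{ij}\, w_i$ is a linear combination of i.i.d.\ $\normdist(0,1)$ random variables, and hence is distributed as $\normdist(0,\|\bw\|_2^2)$. The standard sub-Gaussian tail bound therefore gives $\Prob\big(|(\bX^{\rmt}\bw)_j|\geq t\,\|\bw\|_2 \,\big|\, \bw\big)\leq 2 e^{-t^2/2}$ for every $t > 0$.

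Next I would apply a union bound across the $p$ coordinates: conditional on $\bw$, $\Prob\big(\|\bX^{\rmt}\bw\|_\infty \geq t\,\|\bw\|_2 \,\big|\, \bw\big) \leq 2p\, e^{-t^2/2}$. Choosing $t = c\sqrt{\log p}$ for a sufficiently large constant $c$ (for instance $c = 2\sqrt{2}$) reduces the conditional failure probability to $2p^{1-c^2/2} \leq p^{-c_1}$ for some $c_1 > 0$. Since the resulting bound is deterministic in $\bw$ apart from the factor $\|\bw\|_2$, integrating out $\bw$ reduces the problem to controlling $\|\bw\|_2$.

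Finally I would note that $\|\bw\|_2^2/\sigma^2$ is a $\chi^2_n$ random variable, so standard chi-square concentration (e.g., Example~2.11 of Wainwright already cited in the excerpt) yields $\|\bw\|_2 \leq 2\sigma\sqrt{n}$ with probability at least $1 - e^{-c_2 n}$. Intersecting the two events and applying the union bound produces $\|\bX^{\rmt}\bw\|_\infty \leq 2c\,\sigma\sqrt{n\log p}$ with probability at least $1 - p^{-c_1} - e^{-c_2 n} \geq 1 - c_0 p^{-c_1}$, which is exactly the claim. The argument is essentially routine; the only thing to be careful about is to keep the conditioning step explicit so that the randomness of $\bw$ in the variance $\|\bw\|_2^2$ is not double-counted, which is handled cleanly by first fixing $\bw$ and then taking the intersection of the two high-probability events at the end.
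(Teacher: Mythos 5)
Your argument is correct and is precisely the standard conditioning-plus-union-bound proof that the paper has in mind; the authors explicitly omit the proof of this lemma as routine, so there is nothing to compare against. The only implicit point worth keeping in mind is that absorbing the $e^{-c_2 n}$ term into $c_0 p^{-c_1}$ uses $n \gsim \log p$, which holds under the paper's standing assumption $n \gsim k\log p$.
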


\begin{lemma}
\label{lemma:winf}	
We have $\norm{\bw}{\infty}\lsim \sigma\sqrt{\log n}$ with
probability $1-c_0 n^{-c_1}$.
\end{lemma}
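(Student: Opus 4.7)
The plan is to prove Lemma~\ref{lemma:winf} via the standard Gaussian maximum tail bound: since each coordinate $w_i$ is i.i.d.\ $\normdist(0, \sigma^2)$, the supremum of $n$ such variables concentrates at scale $\sigma\sqrt{\log n}$. Concretely, I would start from the classical one-sided tail inequality $\Prob(|w_i| \geq t) \leq 2\exp(-t^2/(2\sigma^2))$, which is a direct consequence of the sub-Gaussian property of the standard normal distribution.

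Next, I would apply a union bound over the $n$ coordinates to obtain
\[
\Prob\bracket{\norm{\bw}{\infty} \geq t} \leq 2n\exp\bracket{-\nfrac{t^2}{(2\sigma^2)}}.
\]
Setting $t = c\sigma\sqrt{\log n}$ for any constant $c > \sqrt{2}$, say $c = 2$, yields
\[
\Prob\bracket{\norm{\bw}{\infty} \geq 2\sigma\sqrt{\log n}} \leq 2n \cdot n^{-2} = 2n^{-1},
\]
which matches the claimed form $c_0 n^{-c_1}$ with $c_0 = 2$ and $c_1 = 1$. One can inflate $c$ to trade off a larger multiplicative constant in the bound on $\norm{\bw}{\infty}$ for a faster polynomial decay in $n$ if needed elsewhere in the paper.

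There is no real obstacle here; the lemma is a textbook consequence of Gaussian concentration plus a union bound, with the only subtlety being the choice of $c$ large enough that the exponent dominates the factor $n$ coming from the union bound. The argument also mirrors exactly the strategy used in Lemma~\ref{lemma:max_xij} (maximum of $np$ Gaussians) and the scalar-product bound in Lemma~\ref{lemma:xw}, so the structural pattern is already present in the paper and the proof should be at most a few lines long.
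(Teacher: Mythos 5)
Your proof is correct and is precisely the standard argument the paper has in mind: the paper explicitly omits proofs of Lemmas~\ref{lemma:max_xij}--\ref{lemma:winf}, stating only that they are ``quite standard,'' and your Gaussian tail bound plus union bound with $t = 2\sigma\sqrt{\log n}$ is exactly that standard derivation.
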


 Since the above results are quite standard, we list
them without giving a detailed proof.

\begin{lemma}[Lemma~$1$ In~\citet{nguyen2013robust}]
\label{lemma:robust_lasso_xhg_lower_bound}
Consider the optimal solution
$(\wh{\bbeta}, \wh{\bXi})$ to \eqref{eq:robust_lasso_optim_def} with
regularizer coefficients being set as
$\belambda \asymp \sigma \sqrt{\nfrac{\log p}{n}}$
and
$\xilambda \asymp \sigma \sqrt{\nfrac{\log n}{n}}$,
respectively.
Assuming that $n \gsim k\log p$ and
$h \lsim \frac{n}{\log n}$, we have
\[
\frac{1}{\sqrt{n}}\Fnorm{\bX\bu + \sqrt{n}\bv }
\gsim \|\bu \|_{2} + \|\bv \|_2
\]
hold with probability at least $1 - c_0 e^{-c_1n}$, where $c_0, c_1 > 0$ are some fixed positive constants.
\end{lemma}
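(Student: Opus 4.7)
The plan is to read this as a joint restricted eigenvalue (RE) condition for the augmented design $\bA \defequal [\bX,\; \sqrt{n}\,\bI_n] \in \RR^{n\times(p+n)}$ acting on $\bz \defequal (\bu;\bv)$, restricted to the cone implied by the optimality of \eqref{eq:robust_lasso_optim_def}. Concretely, the strategy is to exploit the coupled cone inequality
\[
\belambda\|\bu_{T^c}\|_1 + \xilambda\|\bv_{S^c}\|_1 \leq 3\belambda\|\bu_T\|_1 + 3\xilambda\|\bv_S\|_1
\]
(which is immediate from \eqref{eq:robust_lasso_optim_inequal_nonnegative}), together with $|T|\leq k$ and $|S|\leq h$, to conclude $\belambda\|\bu\|_1+\xilambda\|\bv\|_1 \lsim \sqrt{k}\,\belambda\|\bu\|_2 + \sqrt{h}\,\xilambda\|\bv\|_2$. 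Any $\bz$ in this cone is effectively $(k,h)$-sparse.

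First I would expand
\[
\tfrac{1}{n}\Fnorm{\bX\bu+\sqrt{n}\bv}^{2} = \tfrac{1}{n}\|\bX\bu\|_2^{2} + \tfrac{2}{\sqrt{n}}\la \bX\bu,\bv\ra + \|\bv\|_2^{2},
\]
and treat each piece separately. For the first piece, invoke the standard Gaussian RE bound (Raskutti--Wainwright--Yu; Theorem 7.16 in Wainwright's book): when $n\gsim k\log p$, with probability $1-c_0 e^{-c_1 n}$,
\[
\tfrac{1}{n}\|\bX\bu\|_2^{2} \geq \tfrac{1}{2}\|\bu\|_2^{2} - c\,\tfrac{\log p}{n}\|\bu\|_1^{2}.
\]
Plugging the cone bound $\|\bu\|_1\lsim \sqrt{k}\|\bu\|_2 + \tfrac{\sqrt{h}\xilambda}{\belambda}\|\bv\|_2$ and the choice $\belambda\asymp\sigma\sqrt{\log p/n}$, the correction term becomes at most $\tfrac{1}{4}\|\bu\|_2^2 + o(1)\|\bv\|_2^2$, so this piece contributes $\gsim \|\bu\|_2^{2}$.

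The main obstacle is the cross term $\la\bX\bu,\bv\ra/\sqrt{n}$, which I would handle by a covering-plus-union-bound argument. For any fixed pair of supports $(T',S')$ with $|T'|=O(k)$ and $|S'|=O(h)$, and any unit vectors $\bu'\in\RR^{T'}$, $\bv'\in\RR^{S'}$, the bilinear form $\la \bX\bu',\bv'\ra$ is standard Gaussian, so a $\tfrac{1}{4}$-net of cardinality $9^{O(k+h)}$ on each sparse sphere together with Gaussian tail bounds gives $|\la \bX\bu',\bv'\ra|\lsim \sqrt{n}$ uniformly with exceptional probability $\exp(-cn)$, provided we can pay a union-bound price of $\binom{p}{O(k)}\binom{n}{O(h)}\le \exp\bigl(c'(k\log p + h\log n)\bigr)$. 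The hypotheses $n\gsim k\log p$ and $h\lsim n/\log n$ are exactly what make this cost $\le c n/2$, so the net argument succeeds. Extending from exactly sparse $(\bu',\bv')$ to cone vectors uses the standard ``peeling'' / Maurey-empirical-process device (splitting $\bu$ and $\bv$ into their best-$k$/best-$h$ approximations and the tails, then controlling the tails by the cone inequality to trade $\ell_1$ for $\ell_2$).

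Combining the three estimates yields
\[
\tfrac{1}{n}\Fnorm{\bX\bu+\sqrt{n}\bv}^{2} \gsim \|\bu\|_2^{2} + \|\bv\|_2^{2} \;\gsim\; \tfrac{1}{2}\bigl(\|\bu\|_2+\|\bv\|_2\bigr)^{2},
\]
which is the claim after taking square roots. The hard part is genuinely the uniform cross-term control: the RE step for $\bX$ alone is textbook, and the $n\|\bv\|_2^2$ contribution is automatic, but absorbing the mixed term requires the scaling $h\log n \lsim n$ precisely because the identity block $\sqrt{n}\bI_n$ introduces $n$ extra columns whose supports must be union-bounded over.
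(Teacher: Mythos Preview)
The paper does not actually prove this lemma; it is quoted as Lemma~1 of \citet{nguyen2013robust} and used as a black box in the proof of Theorem~\ref{thm:robust_lasso_permutation}. Your sketch is a faithful reconstruction of the argument in that reference: the coupled cone constraint extracted from \eqref{eq:robust_lasso_optim_inequal_nonnegative}, the Raskutti--Wainwright--Yu lower isometry for the $\bX$ block, uniform control of the bilinear cross-term $\la \bX\bu,\bv\ra$ via nets over $(O(k),O(h))$-sparse pairs, and the peeling extension from exactly sparse to cone vectors are precisely the ingredients used there. So there is nothing to compare against in this paper itself.

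One small overstatement in your write-up: after inserting the cone bound into the RE correction $\tfrac{\log p}{n}\|\bu\|_1^2$, the residual is of order $\tfrac{k\log p}{n}\|\bu\|_2^2 + \tfrac{h\log n}{n}\|\bv\|_2^2$, which under the hypotheses $n\gsim k\log p$ and $h\lsim n/\log n$ is $O(1)$, not $o(1)$. The argument still closes, but only because the unspecified constants in those two hypotheses can be chosen large (respectively small) enough to drive these ratios below, say, $1/8$; you should make that dependence explicit rather than calling the terms $o(1)$.
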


\begin{lemma}[Theorem~$3$ (Part (a)) in~\citet{slawski2017linear}]
\label{lemma:permute_recover_slawski}
Conditional on the event $\calE_{\wt{\bbeta}}$ such that
$\calE_{\wt{\bbeta}}\defequal \set{\|\bX(\wt{\bbeta}-  \bbeta^{\natural})\|_{\infty} \leq \sigma \Delta}$,
we reconstruct the permutation matrix via
$\wh{\bPi} = \argmax_{\bPi} \langle \bY, \bPi \bX \wt{\bbeta}\rangle$.
Provided that
\[
\snr > \frac{n^2(n-1)^2}{4\delta^2 \pi}
\Bracket{\Delta + 2\log\frac{n(n-1)}{\delta}}^2,
\]
Then, we can obtain the ground-truth permutation
with probability exceeding $1-2\delta$,
i.e., $\Prob(\wh{\bPi} = \bPitrue|\calE_{\wt{\bbeta}}) \geq 1-2\delta$.
\end{lemma}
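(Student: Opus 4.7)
My plan is to exploit the one-dimensional structure of the linear assignment problem (LAP) and reduce permutation recovery to a collection of pairwise comparisons. With $\bz \defequal \bX\wt{\bbeta}$, the rearrangement inequality asserts that the LAP maximizer $\wh{\bPi} = \argmax_{\bPi}\langle \by, \bPi\bz\rangle$ is (generically) the unique \emph{co-sorted} permutation, that is, the one matching $\by$ and $\bz$ in the same rank order. Consequently, $\wh{\bPi} = \bPitrue$ if and only if for every unordered pair $i \neq j$ one has $(y_i - y_j)(z_{\pi^{\natural}(i)} - z_{\pi^{\natural}(j)}) > 0$, where $\pi^{\natural}$ is the permutation map associated with $\bPitrue$. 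It then suffices to union-bound the complementary event over the $\binom{n}{2}$ pairs and force the total to be at most $2\delta$.

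Writing $\bz^{\natural} \defequal \bX\bbetatrue$ and $\bdelta \defequal \bX(\wt{\bbeta} - \bbetatrue) = \bz - \bz^{\natural}$, together with $\by = \bPitrue \bz^{\natural} + \bw$, I set for each pair
\[
A_{ij} \defequal z^{\natural}_{\pi^{\natural}(i)} - z^{\natural}_{\pi^{\natural}(j)}, \quad B_{ij} \defequal w_i - w_j, \quad C_{ij} \defequal \delta_{\pi^{\natural}(i)} - \delta_{\pi^{\natural}(j)},
\]
so that $y_i - y_j = A_{ij} + B_{ij}$ and $z_{\pi^{\natural}(i)} - z_{\pi^{\natural}(j)} = A_{ij} + C_{ij}$. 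Both differences carry the sign of $A_{ij}$ whenever $|A_{ij}|$ exceeds $|B_{ij}| \vcup |C_{ij}|$. Conditioning on $\calE_{\wt{\bbeta}}$ gives $|C_{ij}| \leq 2\sigma\Delta$, so for any threshold $c \geq 2\sigma\Delta$, the failure event for pair $(i,j)$ is contained in $\{|A_{ij}| \leq c\} \cup \{|B_{ij}| \geq c\}$.

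Because $\bX$ has i.i.d.\ $\normdist(0,1)$ entries, the coordinates of $\bz^{\natural}$ are i.i.d.\ $\normdist(0,\|\bbetatrue\|_2^2)$, hence $A_{ij} \sim \normdist(0, 2\|\bbetatrue\|_2^2)$ with anti-concentration bound $\Prob(|A_{ij}| \leq c) \leq c/(\sqrt{\pi}\|\bbetatrue\|_2)$, while $B_{ij} \sim \normdist(0,2\sigma^2)$ obeys $\Prob(|B_{ij}|\geq c) \leq 2\exp(-c^2/(4\sigma^2))$ and is independent of $(A_{ij}, C_{ij})$. Choosing $c$ of order $\sigma\bracket{\Delta + \log(n(n-1)/\delta)}$ drives the exponential-tail sum below $\delta$ after the union bound, while the anti-concentration sum is below $\delta$ provided $\snr \gsim n^2(n-1)^2\bracket{\Delta + \log(n(n-1)/\delta)}^2/(\pi\delta^2)$, matching the form of the stated threshold up to constants inside the bracket and yielding the claimed $1-2\delta$ guarantee.

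The main obstacle is the anti-concentration estimate on $A_{ij}$: unlike $B_{ij}$'s exponentially decaying tail, a Gaussian density is bounded so $\Prob(|A_{ij}| \leq c)$ can only shrink linearly in $c$. This asymmetry is what forces $\snr$ to scale polynomially (of order $n^4/\delta^2$) rather than logarithmically in $n$, and it is precisely what accounts for the unusually large signal-strength requirement in the lemma. A secondary care point is to justify that the rearrangement-inequality characterization is \emph{strict}: this holds a.s.\ because $\bz^{\natural}$ is continuously distributed, so its coordinates are pairwise distinct with probability one, making the pairwise test both necessary and sufficient for $\wh{\bPi} = \bPitrue$.
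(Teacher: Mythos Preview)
The paper does not supply its own proof of this lemma; it is quoted as Theorem~3(a) of \citet{slawski2017linear} and used as a black box. Your outline is correct and is precisely the argument in that reference: reduce the one-dimensional LAP to co-monotonicity via the rearrangement inequality, union-bound over the $\binom{n}{2}$ pairs, and control each pair by Gaussian anti-concentration for $A_{ij}\sim\normdist(0,2\|\bbetatrue\|_2^2)$ together with a Gaussian tail for $B_{ij}\sim\normdist(0,2\sigma^2)$. With $c=\sigma\bigl(\Delta+2\log\tfrac{n(n-1)}{\delta}\bigr)$ (which dominates both $2\sigma\Delta$ and $2\sigma\sqrt{\log(n(n-1)/\delta)}$), the two sums are each at most $\delta$ and the resulting $\snr$ threshold matches the stated one exactly, not just up to constants.

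One point worth making precise in a full write-up: since $\wt{\bbeta}$ may depend on $(\bX,\bw)$, conditioning on $\calE_{\wt{\bbeta}}$ could in principle perturb the laws of $A_{ij}$ and $B_{ij}$. The clean fix is to note that, on $\calE_{\wt{\bbeta}}$, the pairwise failure is contained in $\{|A_{ij}|\leq c\}\cup\{|B_{ij}|\geq c\}$, an event that does not involve $\wt{\bbeta}$ at all; hence $\Prob\bigl(\{\wh{\bPi}\neq\bPitrue\}\cap\calE_{\wt{\bbeta}}\bigr)\leq 2\delta$ holds unconditionally, which is the form actually consumed in the proof of Theorem~\ref{thm:robust_lasso_permutation}.
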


\section{Useful Facts About Probability Inequalities}

For the self-containing of this paper, we list some useful facts
about probability inequalities in this section.

\begin{lemma}[Lemma 2.2 In~\citet{dasgupta2003elementary}]
\label{lemma:random_proj}
For a projection matrix $\Proj_{d_1 \rightarrow d_2}$ which
projects a fixed vector $\bZ\in \RR^{d_1}$ to a uniformly random
subspace with dimension $d_2$, we have
\[
\Prob\bracket{\norm{\Proj_{d_1\rightarrow d_2}\bZ}{2}^2\leq \frac{\alpha d_2}{d_1}\norm{\bZ}{2}^2 } \hspace{-0.02in}
&\leq \exp\big(\frac{d_2}{2}\bracket{\log \alpha - \alpha + 1} \big),\alpha < 1; \\
\Prob\bracket{\norm{\Proj_{d_1\rightarrow d_2}\bZ}{2}^2\geq \frac{\alpha d_2}{d_1}\norm{\bZ}{2}^2 } \hspace{-0.02in}
&\leq \exp\big(\frac{d_2}{2}\bracket{\log \alpha - \alpha + 1} \big),\alpha > 1.
\]
\end{lemma}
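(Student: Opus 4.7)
The plan is to reduce the statement to a Chernoff-type tail bound for a ratio of independent chi-squared random variables, which is the standard route to the Johnson--Lindenstrauss lemma. Since the claim is scale-invariant, I first normalize $\|\bZ\|_2 = 1$. By the rotational invariance of the uniform distribution on the Grassmannian $\mathrm{Gr}(d_2, d_1)$, projecting the fixed vector $\bZ$ onto a uniformly random $d_2$-dimensional subspace has the same distribution as fixing the subspace to be $\mathrm{span}(\be_1,\dots,\be_{d_2})$ and instead randomizing $\bZ$ to be a uniformly random unit vector $\bU$ on the sphere $\Sph^{d_1-1}$. Hence $\|\Proj_{d_1\to d_2}\bZ\|_2^2 \stackrel{d}{=} \sum_{i=1}^{d_2} U_i^2$.

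Next I use the standard Gaussian representation of the uniform distribution on the sphere: letting $\bg \sim \normdist(\bZero, \bI_{d_1})$, we have $\bU \stackrel{d}{=} \bg/\|\bg\|_2$, and therefore
\[
\|\Proj_{d_1\to d_2}\bZ\|_2^2 \stackrel{d}{=} \frac{X}{X+Y}, \qquad X \defequal \sum_{i=1}^{d_2}g_i^2 \sim \chi^2_{d_2}, \quad Y \defequal \sum_{i=d_2+1}^{d_1}g_i^2 \sim \chi^2_{d_1-d_2},
\]
with $X$ and $Y$ independent. The event $\{\|\Proj_{d_1\to d_2}\bZ\|_2^2 \leq \alpha d_2/d_1\}$ (for $\alpha < 1$) is then equivalent to $\{(d_1 - \alpha d_2)X \leq \alpha d_2 Y\}$, i.e.\ $\{\alpha d_2 Y - (d_1 - \alpha d_2)X \geq 0\}$; symmetrically for $\alpha > 1$ the event becomes $\{(d_1 - \alpha d_2) X \geq \alpha d_2 Y\}$.

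From here I would apply the Chernoff bound: for any admissible $t > 0$,
\[
\Prob\bigl(\alpha d_2 Y - (d_1 - \alpha d_2) X \geq 0\bigr) \leq \Expc e^{t\alpha d_2 Y}\cdot \Expc e^{-t(d_1 - \alpha d_2) X} = (1 - 2t\alpha d_2)^{-(d_1-d_2)/2}\cdot(1 + 2t(d_1 - \alpha d_2))^{-d_2/2},
\]
using the chi-squared MGF $\Expc e^{sW} = (1-2s)^{-k/2}$ for $W \sim \chi^2_k$, $s < 1/2$. The key computational step is then to optimize in $t$: the optimal choice $t^\star = (1-\alpha)/[2\alpha(d_1 - \alpha d_2) \cdot \text{(something)}]$ makes one of the two MGF factors cancel against a power of $\alpha$, and after simplification the bound collapses to $\exp\!\bigl(\tfrac{d_2}{2}(\log\alpha - \alpha + 1)\bigr)$. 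The upper-tail case $\alpha > 1$ is handled identically with the roles of $X$ and $Y$ swapped (and $t > 0$ chosen to damp $X$ and exponentially weight $Y$'s complementary factor).

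The main obstacle is purely algebraic: carrying out the Chernoff optimization cleanly so that the MGF product condenses to the single exponential $\exp(\tfrac{d_2}{2}(\log\alpha - \alpha + 1))$, while verifying that the chosen $t^\star$ lies in the admissibility region $t^\star < 1/(2\alpha d_2)$ (lower tail) or $t^\star < 1/[2(\alpha d_2 - d_1)]$-type constraints (upper tail). Once this is done, both tails are obtained in the advertised form, matching the bound stated in~\citet{dasgupta2003elementary}.
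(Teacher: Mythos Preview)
The paper does not actually prove this lemma: it appears in the appendix under ``Useful Facts About Probability Inequalities'' and is simply quoted from~\citet{dasgupta2003elementary} without argument. So there is no in-paper proof to compare against.

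Your proposal is the standard and correct route, matching the original proof in~\citet{dasgupta2003elementary}: rotational invariance turns the random projection of a fixed vector into the fixed projection of a random unit vector, the Gaussian representation gives the $\chi^2_{d_2}/(\chi^2_{d_2}+\chi^2_{d_1-d_2})$ law, and a Chernoff bound with optimized $t$ yields the exponential. One small remark: the raw Chernoff optimum in Dasgupta--Gupta first gives the slightly sharper form
\[
\alpha^{d_2/2}\Bigl(1+\tfrac{(1-\alpha)d_2}{d_1-d_2}\Bigr)^{(d_1-d_2)/2},
\]
and the version stated in this paper follows after the elementary relaxation $1+x\leq e^x$ on the second factor. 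With that caveat, your outline is complete; the only ``obstacle'' you flag is indeed just bookkeeping, and the admissibility of $t^\star$ is automatic once you solve the first-order condition.
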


\end{appendices}

\end{document}